\DeclareMathOperator*{\pr}{\mathbb{P}}
\newtheorem{theorem}{Theorem}
\newtheorem{corollary}{Corollary}
\newtheorem{lemma}{Lemma}
\newtheorem{proposition}{Proposition}
\newcommand{\addCB}[1]{\added[id=CB]{#1}}
\begin{document}
% Title portion. Note the short title for running heads 
\title{What you lose when you snooze: how duty cycling impacts on the contact process in opportunistic networks}

\author{Elisabetta Biondi, Chiara Boldrini, \\Andrea Passarella, Marco Conti\footnote{All authors are with IIT-CNR, Via G. Moruzzi 1, Pisa, Italy}}

\date{}
%%\orcid{1234-5678-9012-3456}
%\affiliation{%
%  \institution{IIT-CNR}
%  \streetaddress{Via G. Moruzzi 1}
%  \city{Pisa}
%  \state{PI}
%  \postcode{56124}
%  \country{Italy}}

\maketitle

\begin{abstract}
In opportunistic networks, putting devices in energy saving mode is crucial to preserve their battery, and hence to increase the lifetime of the network and foster user participation. A popular strategy for energy saving is duty cycling. However, when in energy saving mode, users cannot communicate with each other. The side effects of duty cycling are twofold. On the one hand, duty cycling may reduce the number of usable contacts for delivering messages, increasing \emph{intercontact} times and delays. On the other hand, duty cycling may break long contacts into smaller contacts, thus also reducing the capacity of the opportunistic network. Despite the potential serious effects, the role played by duty cycling in opportunistic networks has been often neglected in the literature. In order to fill this gap, in this paper we propose a general model for deriving the pairwise contact and  intercontact times measured when a duty cycling policy is superimposed on the original encounter process determined only by node mobility. The model we propose is general, i.e., not bound to a specific distribution of contact and intercontact times, and very accurate, as we show exploiting two traces of real human mobility for validation. Using this model, we derive several interesting results about the properties of measured contact and intercontact times with duty cycling: their distribution, how their coefficient of variation changes depending on the duty cycle value, how the duty cycling affects the capacity and delay of an opportunistic network. The applicability of these results is broad, ranging from performance models for opportunistic networks that factor in the duty cycling effect, to the optimisation of the duty cycle to meet a certain target performance.
\end{abstract}

\section{Introduction}
\label{sec:intro}
 
The widespread availability of smart, handheld devices like smartphones and tablets has stimulated the discussion and research about the possibility of new concepts for supporting communications between users. Particularly appealing, towards this direction, is the opportunistic networking paradigm, in which messages arrive to their final destination through consecutive pairwise exchanges between users that are in radio range of each other~\cite{conti2015manet}. As such, unlike MANETs, opportunistic networks do not assume a continuous end-to-end path between source and destination, and paths are built dynamically and incrementally by intermediate nodes when new contacts (i.e., new forwarding opportunities) arise. While originally studied as a standalone solution, opportunistic networks are now being exploited in synergy with the cellular infrastructure in mobile data offloading scenarios~\cite{rebecchi2015data}, as an enabling technology for the Internet of Things~\cite{wirtz2014opportunistic}, and they are being enhanced to also exploit a cloud infrastructure when available~\cite{wirtz2013interest-based}. 

User mobility, and especially user encounters, is the key enabler of opportunistic communications. Unfortunately, ad hoc communications tend to be very energy hungry \cite{friedman2013power} and no user will be willing to participate in an opportunistic network if they risk to see their battery drained in a few hours. However, there are very few contributions that study how power saving mechanisms impact on the contacts that can be exploited to relay messages. These power saving mechanisms range from completely turning off devices periodically or, more commonly, to tuning the frequency at which the network interface is used (e.g., reducing neighbour discovery activities). We generically refer to all these strategies as \emph{duty cycling}. With duty cycling, messages can be exchanged only when two nodes are in one-hop radio range \emph{and} they are both in the active state of the duty cycle. So, power saving may reduce forwarding opportunities, because contacts are missed when at least one of the devices is in a low-energy state. Since some contacts may be missed, the \emph{measured intercontact times}, defined as the time interval between two consecutive detected encounters between the same pair of nodes, is, in general, larger than the original intercontacts (i.e., those defined exclusively by the nodes mobility process) and this may clearly affect the delay experienced by messages. The \emph{measured contacts} (i.e., the length of a contact while the two nodes are in radio range and active) may also be affected, if one of the two nodes becomes inactive during a contact.
%Even when a contact is detected, communications can be affected. In fact, in most cases, nodes will switch to the low power state of the duty cycle after a while, thus limiting the amount of data that can be transferred when a contact is detected. We call \emph{measured contact time} the length of a contact during which the pair of nodes is also in the active state of the duty cycling. 
Owing to the extent at which, in principle, measured contacts and intercontact times may affect the performance of opportunistic networks, we argue that it is essential to better understand how they are characterised and how they depend on the duty cycling policy in use. Unfortunately, the effects of duty cycling on the measured pairwise contact process have been largely ignored in the literature.

The goal of this work is to characterise the distribution of the \emph{measured} contact and intercontact times, starting from a given distribution of \emph{original} contact and intercontact times, and a duty cycling scheme. To this aim, the contribution of this paper is threefold. First, in Section~\ref{sec:ict_negligible}, assuming that contact duration is negligible, we derive a mathematical model of the measured intercontact times between nodes. %\emph{after} duty cycling is factored in, i.e., by taking into account that some contacts may be missed. 
For general distributions of the original intercontact times we derive mathematical expressions that can be solved numerically to obtain the first two moments of the measured intercontact times.  We can thus approximate \emph{any} distribution of the measured intercontact times using hyper- or hypo- exponential approximations~\cite{tijms2003first}. Under the two most popular intercontact time distributions considered in the literature, exponential and Pareto, the closed forms of the first two moments admit analytic solutions, making the model even more flexible. 
%Thus, with this model, we are able to obtain an approximate representation of intercontact times measured when a duty cycling policy is in place under virtually any distribution of the original intercontact times. We also solve the model under two popular assumptions considered in the literature \cite{gao2011user,picu2012analysis} for the distribution of intercontact times, corresponding to exponential and Pareto intercontact times. In these cases, we are able to obtain closed-form analytic expressions for the first two moments of the measured intercontact times. This closed-form solutions can then be fed to performance models (e.g., of the delay \cite{boldrini2014performance,picu2012analysis}) in order to factor in the effect of duty cycling. 
%
As a second contribution, in Section~\ref{sec:ict_non_negligible} we extend the above model to include the effects of non-negligible contact duration, again under any distribution of contact and intercontact times, thus making the model as general as possible. We extensively validate this model using as input the distribution of contact and intercontact times obtained from traces of real user mobility. Finally, in Section~\ref{sec:exp_dc} we show that the results obtained assuming a deterministic duty cycling (as described in Section~\ref{sec:contact_process}), actually provide a very good characterisation also when stochastic duty cycling is used.

\noindent Focusing on a tagged node pair, the key findings presented in this paper are the following:
\begin{itemize}
\item The measured contact time $\tilde{C}$ cannot be longer than the duration of the active state of the duty cycle, hence the data transfer capacity of the opportunistic network is generally reduced, even significantly. However, if the duty cycling policy is such that nodes refrain from entering the low-power state when they detect a contact, the measured contact duration (hence the capacity) may be only minimally affected by the duty cycle. \\
\item When contact duration $C$ is negligible, if the original intercontact times $S$ are exponential with rate $\lambda$, the measured intercontact times $\tilde{S}$ are exponential with rate $\lambda \Delta$, where $\Delta$ is the percentage of time nodes keep the wireless interface active (duty cycling parameter). Instead, if the original intercontact times $S$ are Pareto with exponent $\alpha$, the measured intercontact times $\tilde{S}$ do not feature a well-known distribution but they decay as a Pareto random variable with the same exponent $\alpha$. This implies that all the properties (e.g., the delay convergence~\cite{chaintreau2007impact,boldrini2015stability}) that depend on the shape of the tail of the Pareto distribution of intercontact times are not affected by duty cycling. \\
%\item When contact duration $C$ is not negligible, the measured intercontact times $\tilde{S}$ may have an additional component that we call \emph{pseudo-intercontact time} due to long contacts that are split into several smaller contacts by the duty cycle. These pseudo-intercontact times are as long as the duration of the low-power state of the duty cycle. \\
\item The duty cycle can affect measured intercontact times in such a way that low-variability (i.e., with coefficient of variation smaller than one) original intercontact times can turn into highly variable measured intercontact times (and vice versa), thus potentially altering the convergence of the expected delay. \\
%\delCB{The measured intercontact time, differently from the previous case, has now a component that we call \emph{pseudo-intercontact time} due to long contacts that are split into several smaller contacts by the duty cycle. These pseudo-intercontact times are as long as the duration of the low-power state of the duty cycle.} \\
\item A stochastic duty cycling can be approximated with a deterministic duty cycling for which the length of the active and inactive intervals corresponds to the average length of the same intervals in the stochastic duty cycling. This means that our results about $\tilde{C}$ and $\tilde{S}$ hold for a very large class of duty cycling policies.
\end{itemize}

\noindent To the best of our knowledge, as discussed in Section~\ref{sec:relwork}, this work represents the first comprehensive analysis of how the measured contact process is altered by power saving techniques.

%-------------------------------------------------------------------------------
%\vspace{-5pt}
\section{Preliminaries}
\label{sec:preliminaries}
%\vspace{-5pt}

In this section we introduce the duty cycling process that we take as reference and we describe how the contacts between users can be modelled.

\subsection{The duty cycling process}
\label{sec:dc_process}

We use duty cycling in a general sense, meaning any power saving mechanism that hinders the possibility of a continuous scan of the devices in the neighbourhood. We assume that nodes alternate between the ON and OFF states. In the ON state, nodes are able to detect contacts with other devices. In the OFF state (which may correspond to a low-power state or simply to a state in which devices are switched off) contacts with other devices are missed. Using this generalisation, we are able to abstract from the specific wireless technology used for pairwise communications.

Duty cycling policies can be deterministic or stochastic, depending on how the length of their ON and OFF states is chosen (fixed, in the former case, varying according to some known probability distribution in the latter). In the literature also non-stationary duty cycling policies can be found, in which the length of ON and OFF states depends on some properties of the network (or a node's neighborhood) at time $t$. All these approaches are discussed in Section~\ref{sec:relwork}. 
%For the sake of tractability, we do not consider non-stationary duty cycling policies in this work. Instead, the model that we propose is able to capture the behaviour of both deterministic and stochastic duty cycling policies (for the latter, the average behaviour is captured). 
We base our model on the deterministic duty cycling case (which requires a coarse synchronisation between devices), then we later prove in Section~\ref{sec:exp_dc} how this model captures the average behaviour of the stochastic case (which does not require synchronisation) as well. Finally, we also discuss how the model captures some notable cases of non-stationary duty cycling policies.

In the following, we assume that the duty cycle process and the real contact process are independent and, considering a tagged node pair, we denote with $\tau$ the length of the time interval in which both nodes are ON, and with $T$ the period of the duty cycle. Thus, $T-\tau$ corresponds to the duration of the OFF interval and $\Delta=\frac{\tau}{T}$ is the duty cycle parameter.
%%
%We specifically assume that ON and OFF intervals are \emph{coarsely} synchronised across nodes, such that all nodes stay active only for a portion of time equal to $\Delta$ while still having the opportunity to detect all the other nodes during ON intervals\footnote{Fine grained synchronisation is not necessary to this end, and time drifts due to clock inaccuracies are perfectly tolerable. For example, in the case of mobile offloading -- one of the most popular recent applications of opportunistic networks~\cite{whitbeck2012push} -- this synchronisation can be controlled by the cellular infrastructure.}.
%%We assume that $\tau$ and $T$ characterise the duty cycle for all node pairs. For this to be reasonable, a coarse synchronisation between devices is required (in the case of mobile data offloading scenarios, this synchronisation can be controlled by the cellular infrastructure).
%%
%
In general, the ON interval can start anywhere within $T$ but here, without loss of generality, we assume it starts at the beginning of interval $T$. In addition, considering a generic detected contact, we count duty cycle periods from the first one where the contact is detected.
%we take as $t=0$ its starting time (for convenience of notation, we count duty cycles starting from the first one in which a contact is detected). 
Hence, ON intervals will be of type $[i T, \tau + i T)$, with $i\geq 0$ and OFF intervals of type $[\tau + iT, (i+1)T)$, with $i \geq 0$ (Figure~\ref{fig:on_off_intervals}). In the following, ON and OFF intervals will be denoted with $\mathcal{I}^{ON}$ and $\mathcal{I}^{OFF}$, respectively. Hence, the set of all ON (OFF) intervals is given by $\bigcup_{n=0}^{\infty}\mathcal{I}^{ON}_n$ ($\bigcup_{n=0}^{\infty}\mathcal{I}^{OFF}_n$).
Focusing on a tagged node pair, we can represent how the duty cycle function evolves with time as  $d(t) = \left\{ \begin{array}{ll}
1& \textrm{if } t \bmod T \in [0, \tau)\\
0 &  \textrm{ otherwise}
\end{array} \right.$. 
When $d(t)=1$, both nodes are ON, thus their contacts, if any, are detected. %The opposite holds when $d(t)=0$. Basically, $d(t)$ operates a bandpass filtering on the contact process between a pair of nodes.

\begin{figure}[t]
\begin{center}
\includegraphics[scale=0.5, angle=90]{./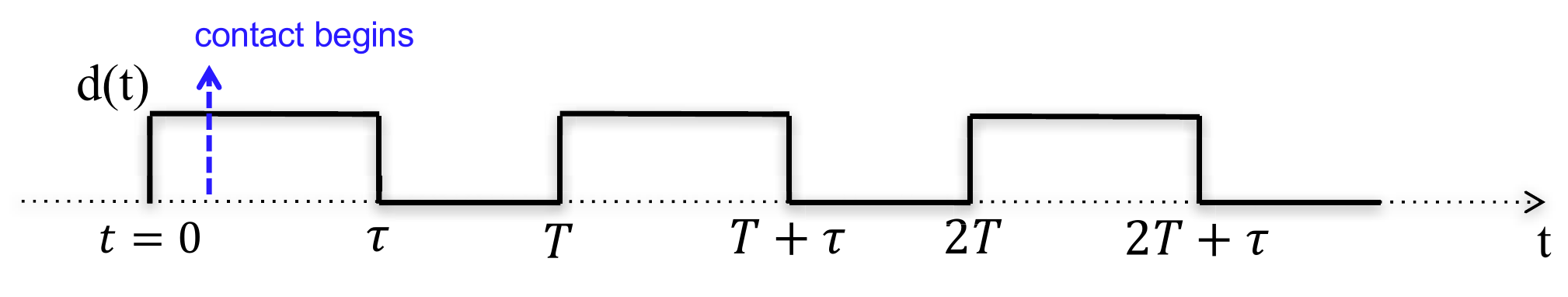}
\caption{ON and OFF intervals in our model.} \vspace{-20pt}
\label{fig:on_off_intervals}
\end{center}
\end{figure}

%The typical values for $\tau$ and $T$ have not been thoroughly investigated in the literature on opportunistic networks. The most relevant result was derived by~\cite{trifunovic2014adaptive} for neighbor discovery, which is typically a duty cycled process. The authors, considering three technologies for opportunistic communications (Bluetooth, WiFi Direct, WLAN-Opp), study the trade-off between the frequency at which devices discover each other and energy consumption. They derive a frequency threshold of $1/100 s^{-1}$, beyond which energy consumption on smartphones 
Reference values for $\tau$ and $T$ depend also on practical aspects. A result derived in~\cite{trifunovic2014adaptive} shows that, for frequencies of switching between ON and OFF states beyond $1/100 s^{-1}$, energy consumption on smartphones drastically increases, making the discovery process not energy efficient. Thus, for the purpose of this paper, we will consider values of $T$ around $100$s and will experiment with different values of $\tau$ when evaluating the proposed models. This is also the reason why in the paper we do not consider duty cycling schemes switching the wireless interfaces at a much finer granularity, in the order of milliseconds or less.

%==========================
\subsection{The contact process}
\label{sec:contact_process}

Similarly to the related literature \cite{picu2012analysis,boldrini2014performance}, we assume that, from the mobility standpoint, node pairs are independent. From the modelling standpoint, the contact process of each node pair $(u,w)$ can be approximated as an \emph{alternating renewal} process~\cite{cox1962renewal}. In this case, the node pair alternates between the CONTACT state in which the two nodes are in radio range, and a state in which they are not (Figure~\ref{fig:alternate_renewal}). The time interval between the beginning and the end of the $i$-th contact is denoted as $C_i^{(u,w)}$. The time interval between the end of a contact and the beginning of the next one corresponds to the intercontact time and it is denoted as $S_i^{(u,w)}$. Hence, the alternating renewal process corresponds to the independent sequence of random variables $\{ C_i^{(u,w)}, S_i^{(u,w)}\}$, with $i \ge 1$, which is an approximation of the real contact process as $C_i^{(u,w)}$ and $S_i^{(u,w)}$ can be dependent for a fixed $i$ but must be independent for different $i$. Note that assuming independence of consecutive contact and intercontact times is also customary in the literature. Since in the following we focus on a tagged node pair,  for the sake of clarity we hereafter drop superscript~$(u,w)$ from our notation. Please note, however, that the contact process we consider is \emph{heterogeneous}, i.e., the distribution of $C_i$ and $S_i$ can be different for different pairs of nodes. Exploiting this notation, we have that the time $X_i$ at which the $i$-th contact begins is $\sum_{j=1}^{i} S_j + \sum_{j=0}^{i-1} C_j, \forall i \geq 1$, while the time $Y_i$ at which it ends is given by $\sum_{j=1}^{i} S_j + \sum_{j=0}^{i} C_j, \forall i \geq 1$.

\begin{figure}[h]
\begin{center}\vspace{-10pt}
\includegraphics[scale=0.5, angle=90]{./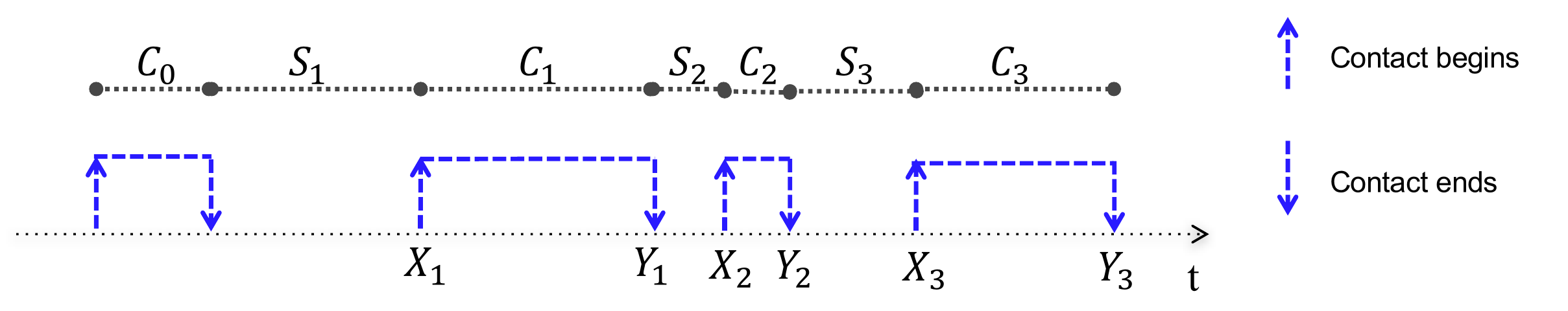}
\caption{The alternate-renewal contact process}\vspace{-10pt}
\label{fig:alternate_renewal}
\end{center}
\end{figure}

\section{Measured intercontact times when contact duration is negligible}
\label{sec:ict_negligible}

We now discuss how the \emph{measured} contact process depends on the contact process described in Section~\ref{sec:contact_process}. We start analysing the case where the contact durations are negligible with respect to the duty cycling period. In real traces, this is often a reasonable approximation, given that a significant part of observed distributions is concentrated on contact durations of a few seconds. Under this assumption, the alternating renewal process of Figure~\ref{fig:alternate_renewal} becomes a simple renewal process~\cite{cox1962renewal} where $S_i$ are the renewal intervals. We later relax this assumption in Section~\ref{sec:ict_non_negligible}.

Recall that the effect of duty cycling on contacts is that some contacts between nodes may be lost. So, we first study in Section~\ref{sec:ict_negligible_preliminaries} what are the characteristics of the process of measured contacts and how it can be modelled. The main outcome of this section is that the measured intercontact time can be obtained as the sum of $N$ (with $N$ stochastic) real intercontact times, assuming that the PDF of $S$ has certain properties that can simplify our derivations. Building upon these results, in Section~\ref{sec:ict_negligible_pdf_n} we compute the PMF of $N$ and then in Section~\ref{sec:ict_negligible_detected_ict} we finally derive the measured intercontact times~$\tilde{S}_i$.

\subsection{Problem setting}
\label{sec:ict_negligible_preliminaries}

Let us denote with $\tilde{S}_j$ the time between the $(j-1)$-th and the $j$-th detected contact (corresponding to the $j$-th \emph{measured} intercontact time) and assume that at time $\tilde{X}_{j-1}$ a contact has been detected, as shown in Figure~\ref{fig:contacts}. For convenience of notation, in Figure~\ref{fig:contacts} and in the following, the sequence number of the duty cycling interval in which the $i$-th contact takes place is denoted with $n_i$, while the sequence number of the ON interval in which the $j$-th \emph{detected} contact takes places is denoted with $\tilde{n}_{j}$.
\begin{figure}[h]
\begin{center}\vspace{-10pt}
\includegraphics[scale=0.5,angle=90]{./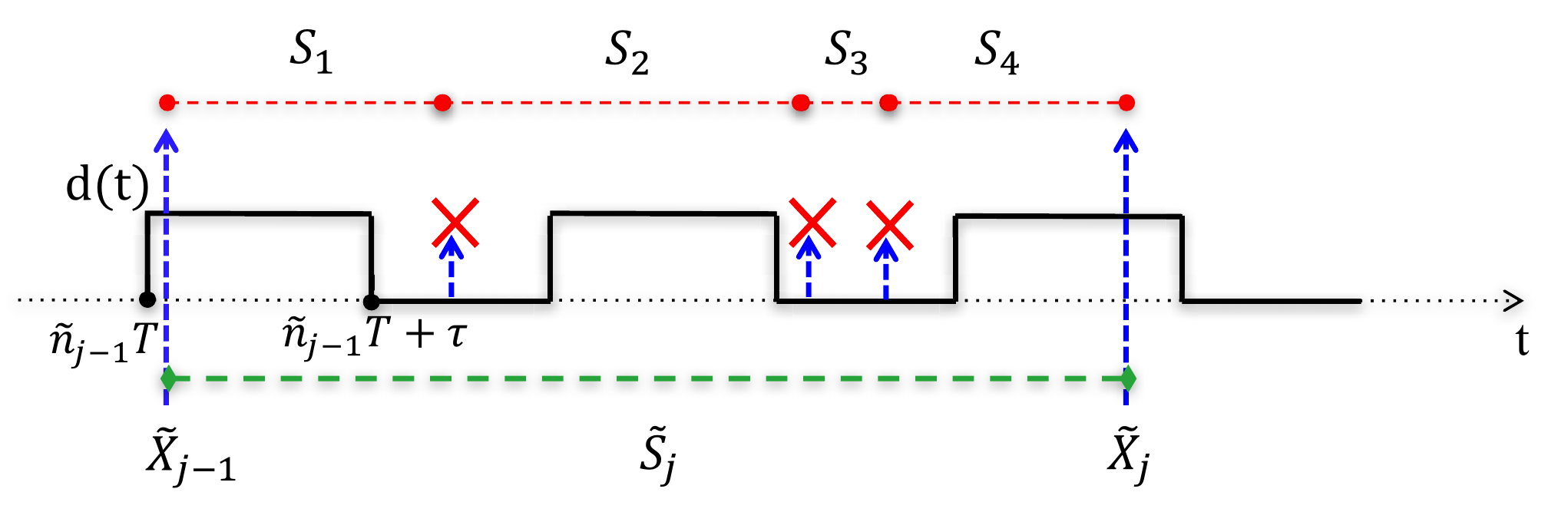}
\caption{Contact process with duty cycling}\vspace{-10pt}
\label{fig:contacts}
\end{center}
\end{figure} 

Then, the measured intercontact time $\tilde{S}_j$ is the time from $\tilde{X}_{j-1}$ until the next detected contact at $\tilde{X}_{j}$, which can be obtained by adding up the intercontact times $S_i$ between $\tilde{X}_{j-1}$ and the next detected contact. Denoting with $N_j$ the random variable measuring the number of intercontact times between the $(j-1)$-th and the $j$-th detected contact (e.g. in Figure~\ref{fig:contacts}, $N_{j} = 4$), we obtain $\tilde{S}_j = \sum_{i=1}^{N_j} S_i$, hence $\pr(\tilde{S}_j = x) = \sum_{k=1}^{\infty} \pr(\sum_{i=1}^{k} S_i = x)\pr(N_j=k)$. In general, $\tilde S_j$ are not i.i.d., because $N_j$ are in general not i.i.d. However, to keep the analysis tractable, we derive a condition under which $N_j$ can be assumed i.i.d., and thus $\tilde S_j$ also become i.i.d. This makes also the \emph{detected} intercontact time process a renewal one. Intuitively (see~\cite{biondi:what_you_lose:tr} for a formal proof\footnote{Content in~\cite{biondi:what_you_lose:tr} not included in this paper is provided as supplemental material.}) $N_j$ are i.i.d. when the probability density of~$S$ does not vary much inside an ON or OFF interval, and thus the exact time within an interval when a contact happens can be approximated as uniformly distributed in that interval, regardless of when the previous contacts took place. When this happens, in fact, since intercontact times $S_i$ are assumed i.i.d., the process forgets about its past and regenerates itself. %When $N_j$ are approximately i.i.d., we can compute their (unique) distribution $N$ (Section~\ref{sec:ict_negligible_pdf_n}) and then use this distribution to derive the moments of $\tilde{S}$ using equation $\tilde{S} = \sum_{i=1}^{N} S_i$ (Section~\ref{sec:ict_negligible_detected_ict}).
\begin{lemma}[Uniformity in ON intervals]\label{lemma:uniformity}
When $f_{S_i}$ varies slowly\footnote{In the context of this paper, a function $f(x)$ varies slowly in a given interval if, for any $x_1,x_2$ belonging to that interval, $\frac{f_S(x_1)}{f_S(x_2)} \sim 1$. We are not implying that the function is slowly varying in the sense of~\cite{bingham1989regular}.} in intervals of length $\tau$, $\{N_j \}_{j \ge 1}$ can be modelled as i.i.d. (hence, $N_j \sim N$), and the displacement $Z^{ON}$ of a detected contact within an ON interval is approximately distributed as $Unif(0,\tau)$.
\end{lemma}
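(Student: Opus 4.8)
The plan is to set up the renewal-theoretic bookkeeping carefully and then exploit the slow-variation hypothesis to reduce everything to a uniform-displacement claim. First I would fix notation: suppose the $(j-1)$-th detected contact occurs in ON interval $\mathcal{I}^{ON}_{\tilde n_{j-1}}$ at a displacement $z \in [0,\tau)$ from its left endpoint. I would then argue that the conditional distribution of the next detected contact's position — both the index of the ON interval in which it falls and the displacement within it — depends on $z$ only through the ratio $f_S$ takes at nearby points, which by hypothesis is $\sim 1$ over any window of length $\tau$. Concretely, the real intercontact times $S_i$ are i.i.d., so starting from an absolute time $t_0 = \tilde n_{j-1} T + z$, the sequence of real contact epochs is $t_0 + S_1, t_0 + S_1 + S_2, \dots$; a real contact at absolute time $t$ is \emph{detected} iff $t \bmod T \in [0,\tau)$. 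So $N_j$ is the index of the first such detected contact, and $\tilde S_j = \sum_{i=1}^{N_j} S_i$.

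Next I would make the uniformity claim precise. Consider the position modulo $T$ of the real contact epochs: $W_m = (t_0 + \sum_{i=1}^m S_i) \bmod T$. I would show that, because $f_S$ varies slowly on scales of length $\tau$ (and hence, a fortiori, the relevant smoothing makes $W_m \bmod T$ close to uniform on $[0,T)$ after even one step, or at any rate the first $W_m$ that lands in $[0,\tau)$ is close to uniform there), the law of the displacement $Z^{ON} = W_{N_j}$ conditioned on $W_{N_j} \in [0,\tau)$ is approximately $\mathrm{Unif}(0,\tau)$, independently of the starting displacement $z$. The cleanest way I know to do this is: the density of $W_1 = (t_0 + S_1)\bmod T$ at a point $w$ is $\sum_{k\ge 0} f_S(kT + (w - z \bmod T))$ (wrapping appropriately); since $f_S$ varies slowly over any length-$\tau$ sub-interval, the restriction of this density to $w\in[0,\tau)$ is nearly flat and nearly independent of $z$ — its values at any two points of $[0,\tau)$ have ratio $\sim 1$. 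Iterating (or conditioning on non-detection and re-applying the same argument on the next OFF-then-ON cycle) preserves this property. Once the displacement of a detected contact is (approximately) $\mathrm{Unif}(0,\tau)$ and (approximately) independent of the past, the post-detection state of the process is statistically the same every time a detection occurs: this is exactly the regeneration property, so $\{N_j\}_{j\ge1}$ are i.i.d., and therefore so are $\{\tilde S_j\}_{j\ge1}$.

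The main obstacle, and the place where I would spend the most care, is controlling the approximation: "varies slowly" is only a heuristic relation ($f_S(x_1)/f_S(x_2)\sim 1$), so the conclusion is necessarily an approximation rather than an identity, and I would want to state explicitly that all equalities in the lemma are understood in this approximate sense (matching the paper's own hedging "can be modelled as", "approximately distributed"). A secondary subtlety is the tail of $f_S$: if $S$ has a heavy tail, the sum $\sum_{k\ge0} f_S(kT + \cdot)$ over faraway intervals could, in principle, not be uniformly well-approximated by the slow-variation bound on a single length-$\tau$ window; I would handle this by noting that the slow-variation hypothesis is applied window-by-window and that the wrapped sum inherits near-flatness on $[0,\tau)$ termwise. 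Given the paper explicitly defers the "formal proof" to the technical report~\cite{biondi:what_you_lose:tr}, I would keep the in-paper argument at the level of the regeneration heuristic above, making the two approximation steps (flat density on $[0,\tau)$; independence of starting displacement) explicit and flagging that their joint effect is what yields the i.i.d.\ modelling assumption and the $\mathrm{Unif}(0,\tau)$ displacement.
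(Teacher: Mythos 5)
Your plan is correct and follows essentially the same route as the paper, which in the main text gives only the regeneration intuition (slow variation of $f_S$ over the relevant windows makes the displacement of a contact within its interval approximately uniform and insensitive to the previous displacement, so the process forgets its past at each detection) and defers the formal argument to the technical report; your wrapped-density computation $\sum_{k\ge0} f_S\bigl(kT + (w - z)\bigr)$ with termwise ratio control is the natural formalization of exactly that argument. The only point to watch is that your iteration through the OFF-then-ON cycles implicitly also uses slow variation at scale $T-\tau$ (the content of the paper's separate Lemma~2), whereas for the i.i.d.\ claim of Lemma~1 alone it suffices that the conditional law of $(N_j, Z^{ON}_j)$ given the previous displacement $z$ is insensitive to $z$, which needs only the scale-$\tau$ hypothesis as stated.
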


Lemma~\ref{lemma:uniformity} above applies to detected contacts. For modelling purposes, it is also convenient to approximate the displacement of missed contacts as uniformly distributed in the OFF interval in which they are missed. For this reason, we introduce the following lemma, which can be proved using similar arguments as those used in the proof of Lemma~\ref{lemma:uniformity}.

\begin{lemma}[Uniformity in OFF intervals]\label{lemma:uniformity_off}
When $f_{S_i}$ varies slowly in intervals of length $T-\tau$, it holds that the displacement of a missed contact within its OFF interval is distributed as $Z^{OFF} \sim Unif(0,T-\tau)$.
\end{lemma}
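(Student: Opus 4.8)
The plan is to mirror the structure of the (referenced) proof of Lemma~\ref{lemma:uniformity}, replacing the ON interval with the OFF interval and ``detected contact'' with ``missed contact'' throughout. First I would fix a tagged node pair and consider a missed contact, i.e., a contact of the underlying renewal process of intercontact times $\{S_i\}$ that falls in some OFF interval $\mathcal{I}^{OFF}_n = [\tau + nT, (n+1)T)$. The quantity of interest is the displacement $Z^{OFF}$ of this contact inside that interval, i.e., the offset from $\tau + nT$. I would condition on the event that the contact lands in a given OFF interval and compute the conditional density of the arrival point; by the renewal structure, the position of a renewal point within an interval is governed by the residual of the inter-renewal distribution, whose density over the interval is proportional to $\sum_{k} f_{S^{(k)}}$ evaluated on that interval (a superposition of $k$-fold convolutions of $f_S$), where $S^{(k)}$ denotes the sum of $k$ i.i.d.\ copies of $S$.

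Next I would invoke the slow-variation hypothesis: since $f_{S}$ varies slowly over intervals of length $T-\tau$, so does each $k$-fold convolution $f_{S^{(k)}}$ (a convolution of slowly-varying densities over a window is still approximately constant over that window — this is the one technical point I would want to state carefully, ideally by the same argument used for Lemma~\ref{lemma:uniformity}), and hence so does their sum. Therefore the conditional density of the renewal point over $[\tau+nT,(n+1)T)$ is approximately constant, i.e.\ $Z^{OFF}$ is approximately $\mathrm{Unif}(0, T-\tau)$. The argument is the exact analogue of the ON-interval case; the only change is that we look at the sub-interval of $\mathcal{I}^{OFF}_n$ rather than $\mathcal{I}^{ON}_n$, and we do not additionally claim i.i.d.-ness of a counting variable (that part was only needed for the renewal property of the \emph{detected} process and is not asserted here).

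The main obstacle, as in Lemma~\ref{lemma:uniformity}, is making the ``slowly varying $\Rightarrow$ approximately uniform'' step rigorous rather than merely heuristic: one must argue that $\frac{f_S(x_1)}{f_S(x_2)}\sim 1$ for $x_1,x_2$ in a window of length $T-\tau$ propagates through convolutions and through the renewal-superposition sum with controlled error, uniformly in $n$. Since the excerpt explicitly says this lemma ``can be proved using similar arguments as those used in the proof of Lemma~\ref{lemma:uniformity},'' I would simply reduce to that proof: replace $\tau$ by $T-\tau$, replace the ON interval endpoints by the OFF interval endpoints, and observe that every estimate carries over verbatim, yielding $Z^{OFF}\sim\mathrm{Unif}(0,T-\tau)$.
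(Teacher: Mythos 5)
Your proposal takes essentially the same route as the paper, which itself proves this lemma only by reduction to the argument of Lemma~\ref{lemma:uniformity} (with $\tau$ replaced by $T-\tau$ and detected contacts replaced by missed ones, details deferred to the technical report). Your reconstruction of what that argument must be --- the conditional density of a renewal point in an interval is a superposition of $k$-fold convolutions of $f_S$, and slow variation over windows of length $T-\tau$ is preserved under convolution, hence the density is approximately constant and the displacement approximately uniform --- is sound and correctly identifies the one technical step that needs care.
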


When Lemma~\ref{lemma:uniformity} holds true, the measured contact process is, at least approximately, a renewal process.
%For making the analysis tractable, we assume that the same property holds in general. This is a reasonable approximation under the assumption that ON intervals are very short compared to the average intercontact times $E[S]$, which, as discussed later in the paper, is commonly verified in practice.
%%
Thus, we can express $\tilde{S}$ as a random sum of i.i.d. random variables, , i.e. $\tilde S = \sum S_i$. Random sums have some useful properties that we will exploit in Section~\ref{sec:ict_negligible_detected_ict} in order to derive the first two moments of $\tilde{S}$. Please note that this formula is general, i.e., holds for any type of continuous intercontact time distribution and for any type of duty cycling policy. To complete the analysis of $\tilde S$, in Section~\ref{sec:ict_negligible_pdf_n} we compute the distribution of $N$, and in Section~\ref{sec:ict_negligible_detected_ict} we find the moments of $\tilde S$.
%%
%\begin{definition}\label{def:s_tilde}
%The measured intercontact time $\tilde{S}$ can be obtained as $\tilde{S} = \sum_{i=1}^{N} S_i$, where $N$ is the random variable describing the number of contacts needed to get one detected.
%\end{definition}
%%
%Random sums of i.i.d. variables have some useful properties, which we will exploit in Section~\ref{sec:ict_negligible_detected_ict} in order to derive the first two moments of $\tilde{S}$. Please also note that Definition \ref{def:s_tilde} is general, i.e., holds for any type of continuous intercontact time distribution and for any type of duty cycling policy.

Before continuing, note that, as described in more detail in~\cite{biondi:what_you_lose:tr}, \addCB{we can derive simple conditions under which Lemma~\ref{lemma:uniformity} and Lemma~\ref{lemma:uniformity_off} hold, for two popular cases of intercontact times distributions, i.e. exponential and Pareto. Specifically, sufficient conditions are $\lambda T \ll1$ where $\lambda$ is the rate of exponential intercontact times, and $T \ll b$ where $b$ is the scale of Pareto intercontact times. Using these two conditions $\lambda T \ll 1$ and $T \ll b$, in Section~\ref{sec:ict_negligible_stilde_validation}, we will see how close theoretical predictions for measured intercontact times are to simulation results depending on whether these conditions are satisfied or not.}

%=======

 %For the convenience of the reader, the notation used throughout the section is summarized in Table~\ref{tab:notation}.

%\begin{table}[t]
%\renewcommand{\arraystretch}{1.3}
%\begin{center}
%\scriptsize
%\tbl{Table of notation - Negligible contact duration}{\begin{tabular}{|c|c|} \hline \hline
%$T$, $\tau$ & period of duty cycle and duration of the ON period \\ \hline
%$\Delta$ & duty cycle \\ \hline
%$\mathcal{I}^{ON}_i$ & $i$-th ON interval, corresponding to $[iT, \tau+iT)$ \\ \hline
%$\mathcal{I}^{OFF}_i$ & $i$-th OFF interval, corresponding to $\left[\tau+iT, (i+1)T \right)$ \\ \hline
%$f_X(x)$, $F_X(x)$& PDF and CDF of random variable $X$\\ \hline
%$S$ & intercontact time \\ \hline 
%$\tilde{S}$ & measured intercontact time \\ \hline
%$N$ & number of intercontact times taking places until the next contact is detected \\ \hline
%$Z^{ON}, Z^{OFF}$ & displacement of a contact within its ON (OFF) interval \\ \hline
%$X_i$ & time at which the $i$-th contact happens \\ \hline
%$\tilde{X}_i$ & time at which the $i$-th detected contact happens \\ \hline
%$E_k$ &  r.v. equal to one when the $k$-th contact is in an ON interval, equal to zero otherwise \\ \hline
% \hline
%\end{tabular}}
%\label{tab:notation1}
%\end{center}
%\end{table}

%---------------------------------------------------------
\subsection{Deriving the distribution of $N$}
\label{sec:ict_negligible_pdf_n}

%Thanks to the approximation introduced with Lemma~\ref{lemma:uniformity}, the measured contact process can be treated as a renewal process, hence $\tilde{S}$ can be obtained from~$N$ and~$S$, as stated in Definition~\ref{def:s_tilde}. Thus, i
In this section, we derive the probability distribution of~$N$, defined as the number of contacts needed, after a detected contact, in order to detect the next one.
Since we are assuming the detected contact process to be renewal, we can focus on the portion of this process between two detected contacts. Specifically, we can focus, without loss of generality, on what happens between the first and second detected contact. Then, the rationale behind the derivation of $N$ is pretty intuitive. In fact, $N=1$ corresponds to the case where the first intercontact time after a detection ends in an ON interval. For case $N=2$, the first intercontact ends in an OFF interval, while the following one ends in a following ON interval. % \emph{after} the point in time where the first intercontact time has finished. 
All other cases follow using the same line of reasoning. Please recall that, in the following, ON and OFF intervals will be denoted with $\mathcal{I}^{ON}$ and $\mathcal{I}^{OFF}$, respectively. %Additional notation is summarised in Table~\ref{tab:notation1}.

The derivation of the PMF of $N$ in Theorem~\ref{theo:pdf_n_approx} below quantifies the probability $P\{N=k\}$. The line of reasoning for deriving this result is as follows. Let us define random variable $E_k$, which is equal to one when the $k$-th contact is in an ON interval, equal to zero otherwise. It is easy to see that the following holds true:
\begin{equation}\label{eq:n_conditioned_with_events}
\mathbb{P}\left(N=k  \right) \!=\! \mathbb{P}\left(E_1 =0, ..., E_{k-1}=0, E_k = 1\right).
\end{equation}
%
%i.e., the $k$-th contact is the first contact detected if it falls into an ON interval and all previous ones fall in an OFF interval. 

Similarly to the argument developed in Section~\ref{sec:ict_negligible_preliminaries}, this is because the probability that an intercontact falls in an ON or OFF interval depends on the point in time when the previous contact finishes. In~\cite{biondi2014duty} \addCB{we considered these dependencies and derived closed form solutions for exponential intercontact times. Hereafter we provide a solution for general intercontact time distributions under the conditions of Lemma~\ref{lemma:uniformity} and Lemma~\ref{lemma:uniformity_off}. Intuitively, when $S$ is slowly varying in intervals of length $\tau$ and $T-\tau$, the probability of $E_k=0$ depends, for any $k$, only on the length of intercontact times, starting from a point in time that is uniformly distributed in an ON or OFF interval, while all dependencies on previous events can be neglected. The joint probability in Equation~\ref{eq:n_conditioned_with_events} becomes the product of the marginal probabilities, and the marginal probabilities admit simple expressions. So we obtain Theorem~\ref{theo:pdf_n_approx}.}

\begin{theorem}[PMF of $N$]\label{theo:pdf_n_approx}
When the PDF of intercontact times $S_i$ is slowly varying in any interval of length $\max \{ \tau, T-\tau\}$,  the probability mass function of $N$ can be approximated as follows:
\begin{equation}\label{eq:pdf_n_approx}
\left\{ \begin{array}{lr}
\mathbb{P}\{N=1\} = g &\\
\mathbb{P}\{N=k\} = (1-g)(1-p)^{k-2}p, & k\geq 2
\end{array}\right.
\end{equation}
where  $g=\sum_{n_1=0}^{\infty}  \mathbb{P}\left(Z^{ON} + S \in \mathcal{I}^{ON}_{n_1}\right)$, $p=\sum_{n_2=1}^{\infty}  \mathbb{P}\left(Z^{OFF} + S \in \mathcal{I}^{ON}_{n_2}\right)$, and $Z^{ON} \sim Unif(0, \tau)$, $Z^{OFF} \sim Unif(0, T-\tau)$ as shown in Lemma~\ref{lemma:uniformity} and Lemma~\ref{lemma:uniformity_off}.
%$g=\sum_{n_1=1}^{\infty} \frac{1}{\tau} \int_{0}^{\tau} \mathbb{P}\left( z + S_1 \in \mathcal{I}^{ON}_{n_1}\right) \textrm{d} z$, $p=\sum_{n_2=1}^{\infty} \frac{1}{T-\tau} \int_{0}^{T-\tau} \mathbb{P}\left( \tau + z +S_2 \in \mathcal{I}^{ON}_{n_2}\right) \textrm{d} z$.\footnote{Equivalent expressions for $g$ and $p$ are $g=\frac{1}{\tau} \sum_{n=0}^{\infty} \int_{0}^{\tau} [ F_S(nT+\tau -z) - F_S(nT) + F_S\left((n+1)T + \tau -z \right)-F_S\left((n+1)T-z\right)] \textrm{d}z$ and $p=\frac{1}{T-\tau}\sum_{n=0}^{\infty} \int_{0}^{T-\tau} [ F_S\left((n+1)T-z\right) - F_S\left((n+1)T -\tau -z\right)]$.}
\end{theorem}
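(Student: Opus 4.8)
The plan is to start from Equation~(\ref{eq:n_conditioned_with_events}) and reduce the joint probability $\mathbb{P}(E_1=0,\dots,E_{k-1}=0,E_k=1)$ to a product of marginals, exploiting the (approximate) regenerative structure that the slowly-varying hypothesis confers on where contacts fall inside duty-cycle intervals. First I would note that, since $f_S$ is assumed slowly varying over every interval of length $\max\{\tau,T-\tau\}$, it is in particular slowly varying over intervals of length $\tau$ and of length $T-\tau$, so both Lemma~\ref{lemma:uniformity} and Lemma~\ref{lemma:uniformity_off} are in force. Consequently the detected contact that opens the observation window sits at displacement $Z^{ON}\sim Unif(0,\tau)$ inside its ON interval, and any contact that is missed sits at displacement $Z^{OFF}\sim Unif(0,T-\tau)$ inside the OFF interval where it is missed. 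The crucial observation is that, because the $S_i$ are i.i.d.\ and --- under the hypothesis --- the displacement of a contact conditioned on it landing in a given type of interval is approximately uniform and approximately independent both of the index of that interval and of all earlier contacts, the chain $\{E_k\}$ ``forgets its past'': immediately after a missed contact the situation facing the next intercontact time is statistically the same as the one just after the detected contact, except that the starting displacement is now $Z^{OFF}$ rather than $Z^{ON}$.

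Granting this, the marginals are short to compute. The next contact after the detection occurs at time $Z^{ON}+S_1$; since the ON and OFF intervals partition the time axis, $Z^{ON}+S_1$ falls in exactly one interval, so $\mathbb{P}(E_1=1)=\sum_{n_1\ge 0}\mathbb{P}(Z^{ON}+S\in\mathcal{I}^{ON}_{n_1})=g$ and hence $\mathbb{P}\{N=1\}=g$. Conditioned on $E_1=0$ (probability $1-g$), Lemma~\ref{lemma:uniformity_off} places the missed contact at displacement $Z^{OFF}$ in its OFF interval, so the following contact is at $Z^{OFF}+S_2$ and $\mathbb{P}(E_2=1\mid E_1=0)=\sum_{n_2\ge 1}\mathbb{P}(Z^{OFF}+S\in\mathcal{I}^{ON}_{n_2})=p$. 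Iterating the regeneration argument, for every $j\ge 2$ one obtains $\mathbb{P}(E_j=1\mid E_1=\dots=E_{j-1}=0)=p$, hence $\mathbb{P}(E_j=0\mid E_1=\dots=E_{j-1}=0)=1-p$, independent of $j$. Substituting these into Equation~(\ref{eq:n_conditioned_with_events}) and telescoping the conditional probabilities yields $\mathbb{P}\{N=k\}=(1-g)(1-p)^{k-2}p$ for $k\ge 2$, which is the claimed PMF; as a consistency check one verifies $g+(1-g)\sum_{k\ge 2}(1-p)^{k-2}p=g+(1-g)=1$.

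The main obstacle is the rigorous justification of the regeneration step, namely that conditioning on the event ``this contact lands in some OFF interval'' resets its displacement to $Unif(0,T-\tau)$ essentially independently of the history and of which OFF interval it is --- this is exactly what makes the joint probability in Equation~(\ref{eq:n_conditioned_with_events}) factor into marginals, and it is precisely here that the slowly-varying assumption does the work (a full argument, paralleling the proof of Lemma~\ref{lemma:uniformity}, is given in~\cite{biondi:what_you_lose:tr}). A minor but necessary bookkeeping point is to fix the index ranges correctly: counting duty-cycle periods from the one containing the detected contact, the first ON interval the next contact may land in has index $n_1\ge 0$ (it can remain in the current ON interval), whereas after a missed contact the relevant ON intervals, re-indexed from the period containing that OFF interval, start at $n_2\ge 1$; this accounts for the asymmetry between the summation ranges defining $g$ and $p$.
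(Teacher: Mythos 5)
Your proposal is correct and follows essentially the same route as the paper's proof: factor the joint probability in Equation~\ref{eq:n_conditioned_with_events} into marginals via the regeneration argument supplied by Lemma~\ref{lemma:uniformity} and Lemma~\ref{lemma:uniformity_off}, identify $g$ as the detection probability after a detected contact and $p$ as the detection probability after a missed one, and assemble $(1-g)(1-p)^{k-2}p$. Your added remarks on the normalisation check and on the asymmetric index ranges ($n_1\ge 0$ versus $n_2\ge 1$) are consistent with the paper's interval conventions and do not change the argument.
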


\begin{proof}
Let us provide an intuitive explanation for this result (the detailed proof can be found in~\cite{biondi:what_you_lose:tr}). Focusing on the first event ($E_1$), it is easy to see that $\mathbb{P}\{E_1=1\}=\mathbb{P}\{N=1\}$. We synthetically denote this quantity as $g$. The value of $g$ can be computed after noticing that it corresponds to the probability of detecting a contact knowing that the previous contact had been detected (i.e., that it happened in an ON interval). Please note that, thanks to the slowly varying assumption, we can assume that the previous contact took place $Z^{ON}$ seconds after the beginning on the ON interval in which it was detected. 
\addCB{For $\mathbb{P}\{N=k\}$, the first $k-1$ contacts are missed. The first one starts during an ON interval, and therefore the probability that it is missed is $1-g$. The next $k-2$ start during an OFF interval. Conceptually similar to $g$, $p$ denotes the probability of detecting a contact when the previous one starts during an OFF (instead of an ON) period. Thus, the probability of the $k-2$ events is $(1-p)^{k-2}$. Finally, we detect the $k$-th contact, which happens with probability $p$.}
%
%For $\mathbb{P}\{N=k\}$ with $k>1$, we are studying a sequence of events, of which the first $k-1$ are missed and the $k$-th one is detected. For example, focusing on $k=2$, we have that the first contact is missed (based on the above discussion, this happens with probability $1-g$) and the second one detected. The probability that the second one is detected (which we denote as $p$) can be computed similarly to $g$. The only difference is that, this time, the previous contact had not been detected (so it took place in an OFF interval). Hence, $g$ and $p$ are conceptually very similar, with the only difference that for $g$ we take into account that the previous contact had been detected, while for $p$ we take into account that the previous contact had not been detected. Going back to the general case $\mathbb{P}\{N=k\}$ we can separate the first missing contact from the following $k-2$ missing contacts. The latter missed contacts have all the same semantic, i.e., they are missed contacts following a previous missed contact. Thus, their associated probability can be obtained as $1-p$. The first missing contact follows a detected contact, hence its associated probability can be expressed as $1-g$. Putting everything together, we obtain an expression for $\mathbb{P}\{N=k\}$ (Equation~\ref{eq:pdf_n_approx}) that is conceptually very similar to a Geometric distribution.
\end{proof}

Starting from the distribution of $N$ in Theorem~\ref{theo:pdf_n_approx}, we can easily derive $N$'s first two moments and its coefficient of variation, which will be later used to compute the moments of $\tilde{S}$.

\begin{corollary}[Moments of $N$]
The first two moments and the coefficient of variation of the approximate PMF of $N$ are $\mathbb{E}[N] = \frac{1-g+p}{p}$, $\mathbb{E}[N^2] = \frac{-g (p+2)+p^2+p+2}{p^2}$, $cv_N^2 = \frac{(1-g) (g-p+1)}{(-g+p+1)^2}$.
%\begin{equation}
%\mathbb{E}[N] = \frac{1-g+p}{p}  \quad \mathbb{E}[N^2] = \frac{-g (p+2)+p^2+p+2}{p^2}  \quad cv_N^2 = \frac{(1-g) (g-p+1)}{(-g+p+1)^2} 
%\end{equation}
%\begin{eqnarray}
%\mathbb{E}[N] &=& \frac{1-g+p}{p} \\
%\mathbb{E}[N^2] &=& \frac{-g (p+2)+p^2+p+2}{p^2} \\
%cv_N^2 &=& \frac{(1-g) (g-p+1)}{(-g+p+1)^2} 
%\end{eqnarray}
\end{corollary}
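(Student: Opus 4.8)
The plan is to treat $N$ as a mixture/geometric-type random variable as given by Theorem~\ref{theo:pdf_n_approx}, and simply push the probability mass function through the definitions of the first two moments and the (squared) coefficient of variation. Concretely, I would write $\mathbb{E}[N] = 1\cdot g + \sum_{k\ge 2} k\,(1-g)(1-p)^{k-2}p$ and $\mathbb{E}[N^2] = 1\cdot g + \sum_{k\ge 2} k^2\,(1-g)(1-p)^{k-2}p$, and then evaluate the two series by reindexing $m = k-1 \ge 1$ so that $k = m+1$ and the sums become $(1-g)p\sum_{m\ge 1}(m+1)(1-p)^{m-1}$ and $(1-g)p\sum_{m\ge 1}(m+1)^2(1-p)^{m-1}$ respectively.

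The key tool is the standard geometric-series identities: with $q = 1-p$, we have $\sum_{m\ge 1} q^{m-1} = \frac{1}{p}$, $\sum_{m\ge 1} m q^{m-1} = \frac{1}{p^2}$, and $\sum_{m\ge 1} m^2 q^{m-1} = \frac{2-p}{p^3}$ (the last obtained by differentiating the geometric series twice, or from $\mathbb{E}[M^2]$ for a geometric $M$). Expanding $(m+1) = m + 1$ and $(m+1)^2 = m^2 + 2m + 1$ and substituting gives $\mathbb{E}[N] = g + (1-g)p\left(\frac{1}{p^2} + \frac{1}{p}\right) = g + (1-g)\frac{1+p}{p} = \frac{1-g+p}{p}$, which matches the claimed expression. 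The computation for $\mathbb{E}[N^2]$ is analogous: $\mathbb{E}[N^2] = g + (1-g)p\left(\frac{2-p}{p^3} + \frac{2}{p^2} + \frac{1}{p}\right)$, and after collecting over the common denominator $p^2$ one obtains $\frac{-g(p+2) + p^2 + p + 2}{p^2}$.

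Finally, the squared coefficient of variation is $cv_N^2 = \frac{\mathbb{E}[N^2] - \mathbb{E}[N]^2}{\mathbb{E}[N]^2} = \frac{\mathrm{Var}(N)}{\mathbb{E}[N]^2}$. I would compute $\mathrm{Var}(N) = \mathbb{E}[N^2] - \mathbb{E}[N]^2$ by placing both over the denominator $p^2$: the numerator becomes $\big(-g(p+2)+p^2+p+2\big) - (1-g+p)^2$, which should simplify to $(1-g)(g-p+1)$ after expanding $(1-g+p)^2 = (1-g)^2 + 2(1-g)p + p^2$ and cancelling terms. Dividing by $\mathbb{E}[N]^2 = \frac{(1-g+p)^2}{p^2} = \frac{(-g+p+1)^2}{p^2}$ cancels the $p^2$ and yields $cv_N^2 = \frac{(1-g)(g-p+1)}{(-g+p+1)^2}$.

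There is no real obstacle here — the statement is a routine corollary of Theorem~\ref{theo:pdf_n_approx}, and all steps are elementary power-series manipulations. The only place to be careful is the algebraic simplification of the variance numerator, where sign errors are easy; I would double-check it by verifying the two degenerate cases $g = 1$ (then $N \equiv 1$, so $cv_N^2 = 0$, consistent since the numerator vanishes) and $g = p$ (then $N$ is geometric with parameter $p$, so $cv_N^2 = 1-p$, which the formula also gives since $(1-p)(1) / 1 = 1-p$ when $g=p$).
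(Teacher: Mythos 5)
Your proposal is correct, and it follows the only natural route — the paper itself gives no explicit proof, presenting the corollary as a routine computation of $\mathbb{E}[N]$, $\mathbb{E}[N^2]$, and $cv_N^2$ directly from the PMF in Theorem~\ref{theo:pdf_n_approx}, which is exactly what you do via the standard geometric-series identities. The algebra checks out (including the variance numerator simplifying to $(1-g)(g-p+1)$), and your two sanity checks $g=1$ and $g=p$ are a nice confirmation, the latter matching Corollary~\ref{coro:N_geometric}.
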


In Section~\ref{sec:ict_negligible_pdf_n_validation} we will see how to derive $g$ and $p$ for two popular distributions of intercontact times (exponential and Pareto). This basically comes down to deriving quantities $Z^{ON} + S$ and $Z^{OFF} + S$, which are the sum of a uniform random variable and~$S$, and then to compute, either symbolically or numerically, the infinite sums. However, the expressions for $g$ and $p$ can be further simplified, as shown in Corollary~\ref{coro:N_geometric} below (whose proof can be found in~\cite{biondi:what_you_lose:tr}), again relying on the slowly varying property of $f_S$.
%It is interesting to note that, under certain conditions $g$ and $p$ becomes approximately equal, from which follows that the distribution of $N$ is approximately geometric in this case. The formal statement of this property and its proof is provided in the corollary below.

\begin{corollary}[$N$ as geometric r.v.]\label{coro:N_geometric}
When $f_{S}$ is a constant in intervals of length $\max\{\tau, T-\tau\}$, it holds that $g$ and $p$ are both approximately equal to $\frac{\tau}{T}$, hence the distribution of $N$ becomes approximately Geometric with parameter $\frac{\tau}{T}$ and its moments and coefficient of variation become $\mathbb{E}[N] = \frac{T}{\tau}$, $\mathbb{E}[N^2] = \frac{T (2 T-\tau )}{\tau^2}$, $cv_N^2 = \frac{T-\tau}{T}$.
%
%\begin{eqnarray}
%\mathbb{E}[N] &=& \frac{T}{\tau} \\
%\mathbb{E}[N^2] &=& \frac{T (2 T-\tau )}{\tau^2} \\
%cv_N^2 &=& \frac{T-\tau}{T} .
%\end{eqnarray}
\end{corollary}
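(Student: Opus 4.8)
\noindent\textit{Proof proposal.} The plan is to substitute the local-constancy hypothesis on $f_S$ directly into the two infinite sums defining $g$ and $p$ in Theorem~\ref{theo:pdf_n_approx}, show that both collapse to $\tau/T$, and then obtain the Geometric law and its moments by pure substitution --- into Equation~\ref{eq:pdf_n_approx} for the PMF and into the formulas of the preceding corollary (Moments of $N$) for the rest. For $g$ I would first rewrite $g = \sum_{n_1 \ge 0}\mathbb{P}(Z^{ON}+S \in \mathcal{I}^{ON}_{n_1}) = \mathbb{P}(Z^{ON}+S \in \mathcal{O})$, where $\mathcal{O} = \{t \ge 0 : t \bmod T \in [0,\tau)\}$ is the union of all ON intervals, and then condition on $S=s$: since $Z^{ON}\sim Unif(0,\tau)$ by Lemma~\ref{lemma:uniformity}, the conditional law of $Z^{ON}+s$ is uniform on the window $[s, s+\tau)$, so $\mathbb{P}(Z^{ON}+S\in\mathcal{O}\mid S=s) = h(s) := \frac{1}{\tau}\bigl|[s,s+\tau)\cap\mathcal{O}\bigr|$, a function periodic in $s$ with period $T$.

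The crux is the elementary identity $\int_0^T h(s)\,ds = \tau$, which I would get by writing $|[s,s+\tau)\cap\mathcal{O}| = \int_0^\tau \mathbf{1}_{\mathcal{O}}(s+u)\,du$, swapping the two integrations, and using that $\int_0^T \mathbf{1}_{\mathcal{O}}(s+u)\,ds$ equals the measure $\tau$ of $\mathcal{O}$ over one period for every $u$. Thus the average of $h$ over one duty-cycle period is exactly $\tau/T$. Now $g = \int_0^\infty h(s) f_S(s)\,ds$, and since $T = \tau + (T-\tau)$ any window of length $T$ splits into one of length $\tau$ and one of length $T-\tau$, both bounded by $\max\{\tau, T-\tau\}$, so the hypothesis makes $f_S$ (approximately) constant on each $[nT,(n+1)T)$; pulling $f_S$ out of each such window gives $g \approx \sum_n f_S(nT)\int_{nT}^{(n+1)T} h(s)\,ds = \frac{\tau}{T}\sum_n f_S(nT)\,T \approx \frac{\tau}{T}\int_0^\infty f_S(s)\,ds = \frac{\tau}{T}$.

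The argument for $p$ is identical but with a window of length $T-\tau$: writing $p = \mathbb{P}(\tau + Z^{OFF} + S \in \mathcal{O})$ with $Z^{OFF}\sim Unif(0,T-\tau)$ by Lemma~\ref{lemma:uniformity_off} (the restriction $n_2 \ge 1$ in the sum is automatic because $\tau + Z^{OFF}+S > \tau$ can never land in $\mathcal{I}^{ON}_0 = [0,\tau)$), conditioning on $S=s$ yields a uniform law on a window of length $T-\tau$, whose ON-fraction averages to $\tau/T$ over a period by the same Fubini identity, and the same local-constancy step gives $p \approx \tau/T$. Plugging $g = p = \tau/T$ into Equation~\ref{eq:pdf_n_approx}, the two branches merge into $\mathbb{P}\{N=k\} = (1-\tfrac{\tau}{T})^{k-1}\tfrac{\tau}{T}$ for all $k \ge 1$, i.e.\ $N \sim \mathrm{Geom}(\tau/T)$; substituting $g = p = \tau/T$ into $\mathbb{E}[N] = \frac{1-g+p}{p}$, $\mathbb{E}[N^2] = \frac{-g(p+2)+p^2+p+2}{p^2}$ and $cv_N^2 = \frac{(1-g)(g-p+1)}{(-g+p+1)^2}$ then yields $\mathbb{E}[N] = T/\tau$, $\mathbb{E}[N^2] = T(2T-\tau)/\tau^2$ and $cv_N^2 = (T-\tau)/T$ (these are of course just the moments of a Geometric variable).

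The only genuinely delicate step is controlling the error in the ``pull $f_S$ out of the window'' approximation: one must show that replacing $f_S$ by a constant over each window of length at most $\max\{\tau, T-\tau\}$ perturbs $\int_0^\infty h\,f_S$ (and the analogous integral for $p$) only negligibly, and that the resulting piecewise-constant surrogate still integrates to approximately one. This is exactly where the slowly-varying assumption on $f_S$ enters, and --- as noted in the discussion surrounding Lemma~\ref{lemma:uniformity} --- it is the point at which the hypothesis specialises to the concrete conditions $\lambda T \ll 1$ for exponential and $T \ll b$ for Pareto intercontact times. Everything else (the identity $\int_0^T h = \tau$, the merging of the two PMF branches, and the algebraic substitutions into the moment formulas) is routine bookkeeping.
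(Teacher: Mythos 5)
Your proposal is correct: the Fubini identity $\int_0^T h(s)\,ds=\tau$ combined with pulling the locally constant $f_S$ out of each period does give $g\approx p\approx\tau/T$, and the subsequent substitutions into Equation~\ref{eq:pdf_n_approx} and the moment formulas are exact algebra. The paper itself only sketches this corollary and defers the proof to its technical report, but the route it indicates --- exploiting the (near-)constancy of $f_S$ so that the landing position of the next contact is effectively uniform over a duty-cycle period, whence both $g$ and $p$ reduce to the ON-fraction $\tau/T$ --- is essentially the argument you give.
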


Note that there is a subtle difference between Corollary~\ref{coro:N_geometric} and Theorem~\ref{theo:pdf_n_approx}. The former needs that $f_{S}$ is constant  in intervals of length $\max\{\tau,T-\tau\}$, while the latter only requires that $f_{S}$ is slowly varying in these intervals. Therefore, in principle results from Theorem~\ref{theo:pdf_n_approx} are less approximate than those from Corollary~\ref{coro:N_geometric}, although in practice the two conditions on $f_{S}$ are often equivalent.

In the related literature~\cite{zhou2013energy,wang2007adaptive,kouyoumdjieva2016impact}, the probability to skip a contact is often used to control and optimise the duty cycle, by making sure that the number of missed contact remains below a given threshold. This probability can be also computed with our model, and it simply corresponds to $\mathbb{P}(N\ge2) = 1- g$.

%The above result is obtained by substituting constant expressions to the value of~$f_S$ in an ON/OFF interval. This is approximately correct when the slowly varying assumption holds. On the contrary, Theorem~\ref{theo:pdf_n_approx} exploits the uniformity of contact displacement in the intervals in which they take place, but then uses the actual functional form of $f_S$. For this reason, when the slowly varying assumption does not hold, predictions are generally more accurate with Theorem~\ref{theo:pdf_n_approx} than with Corollary~\ref{coro:N_geometric}.

%Theorem~\ref{theo:pdf_n_approx} and Corollary~\ref{coro:N_geometric} hold under the same assumption of $f_S$ being slowly varying in any intervals of length $\max \{ \tau, T - \tau \}$. However, Theorem~\ref{theo:pdf_n_approx} exploits the uniformity of contact displacement in the intervals in which they take place but then uses the actual functional form of $f_S$. For this reason, when the slowly varying assumption does not hold, predictions are generally more accurate with Theorem~\ref{theo:pdf_n_approx} than with Corollary~\ref{coro:N_geometric}.
%\end{remark}

%======================================
\subsubsection{Validation of $\mathbb{P}\{N=k\}$}
\label{sec:ict_negligible_pdf_n_validation}

In this section we validate our approximation for the PMF of $N$ against simulation results. We set $T$ to $100s$, which, as discussed in Section~\ref{sec:dc_process}, was found to be a good trade-off between device discoverability and energy consumption in~\cite{trifunovic2014adaptive}. Then, we explore the parameter space assuming that intercontact times are either exponential or Pareto distributed. Monte-Carlo simulations are performed drawing intercontact time samples from these distributions and filtering them according to the reference duty cycling process (considering two cases, $\tau=20$ and $\tau=80$, corresponding to $20\%$ and $80\%$ duty cycling). The number of contacts filtered out between two detected contacts gives us one sample of $N$. We collect $100,000$ samples for $N$. When all samples for $N$ are collected, we use them to compute the empirical PMF, which is compared against the analytical predictions in Theorem~\ref{theo:pdf_n_approx} and Corollary~\ref{coro:N_geometric}.

We first consider the case of exponential intercontact times and we explore four values for the rate $\lambda$: $10$, $0.1$, $0.01$, and $0.001 s^{-1}$. When real intercontact times are exponential, $g$ and $p$ in Theorem~\ref{theo:pdf_n_approx} can be derived in closed-form:
\begin{eqnarray}
g &=& \frac{e^{-\lambda  \tau }+\lambda  \tau +\frac{e^{-\lambda  \tau } \left(e^{\lambda  \tau }-1\right)^2}{e^{\lambda  T}-1}-1}{\lambda  \tau } \nonumber \\
p &=& \frac{-e^{\lambda  \tau }-e^{\lambda  (T-\tau )}+e^{\lambda  T}+1}{\lambda  \left(e^{\lambda  T}-1\right) (T-\tau )} .
\end{eqnarray}
%
%\addCB{Based on the discussion in the previous section about the probability of missing a contact, this closed-form expression of $g$ could be used for configuring the duty cycle such that $1-g$ remains below a desired threshold.}

Because of the sufficient condition $\lambda T \ll 1$, we expect the model to provide very good approximations when $\lambda = 0.001s^{-1}$ ($\lambda T = 0.1 \ll 1$), reasonable approximations  when $\lambda = 0.01s^{-1}$ ($\lambda T = 1$), and discrepancies  when $\lambda \ge 0.1s^{-1}$ (because $\lambda T >1$). This is exactly what we observe in Figures~\ref{fig:PMF_n_approx_exp_20}-\ref{fig:PMF_n_approx_exp_80}, except that the model is pretty close to simulation results also when $\lambda=0.1s^{-1}$ and it predicts even better for $\lambda=10 s^{-1}$. To understand why, we need to observe closely what happens when $\lambda T \gg 1$. Specifically, in this case many contacts happen within the same interval (either ON or OFF). This means that there are long sequences of contacts detected one after the other one, but, as soon as the first contact falls in an OFF interval, there will also be many missed contacts. 
%Note, however, that all missed contacts in an OFF interval contribute \addCB{to a single value of $N>1$}, while a sequence of $n$ detected contacts corresponds to $n-1$ samples of $N=1$. 
\addCB{This results in a distribution of $N$ that is almost exclusively concentrated on $N=1$, with a small peak in the tail (e.g., for $\lambda=10s^{-1}$ the tail of $N$~accumulates at $800s$, which is exactly the average number of missed contacts when the OFF interval has length $80s$). Theorem~\ref{theo:pdf_n_approx} is able to predict the accumulation at $N=1$ (see Figure~\ref{fig:PMF_n_approx_exp_tail} for $\lambda=10s^{-1}$ and $0.1s^{-1}$). The inaccuracy in predicting the peak in the tail is not very critical, as anyway the total mass of probability under these peaks is typically very limited (we will show in Section~\ref{sec:ict_negligible_stilde_validation} that it has no practical effect in predicting~$\tilde{S}$).}
%
%strongly concentrated on~$1$, but also exhibiting a heavy tail around higher values (see, e.g., in Figure~\ref{fig:PMF_n_approx_exp_tail} that for $\lambda=10s^{-1}$ the tail of $N$~accumulates at $800$, which is exactly the average number of missed contacts when the OFF interval has length $80s$). A similar ``accumulation'' effect, even if less evident, can be observed for small values of $N$ when $\lambda = 0.1s^{-1}$: in this case the average length of a sequence is $8$, which again corresponds to a peak in the PMF (which can be observed both in lin-lin and in lin-log scale in Figures~\ref{fig:PMF_n_approx_exp_20} and~\ref{fig:PMF_n_approx_exp_tail}). The approximation in Theorem~\ref{theo:pdf_n_approx} works well for the body of the distribution in these cases because it is able to detect the high probability for $\mathbb{P}(N=1)$.
As far as the Geometric approximation in Corollary~\ref{coro:N_geometric} is concerned, we observe that it works very well when  $\lambda T  \ll 1$, but it degrades quickly when this assumption does not hold, especially for smaller values of $N$ (Figures~\ref{fig:PMF_n_approx_exp_20}-\ref{fig:PMF_n_approx_exp_80}). We have already anticipated the reason for this behaviour. In fact, Theorem~\ref{theo:pdf_n_approx} exploits the uniformity of contact displacement in the intervals in which they take place but then uses the actual functional form of $f_S$. For this reason, when the slowly varying assumption does not hold, predictions are generally more accurate with Theorem~\ref{theo:pdf_n_approx} than with Corollary~\ref{coro:N_geometric}. Finally, as shown in~\cite{biondi:what_you_lose:tr}\addCB{, where we fully develop the case of exponential intercontact times, the better performance of the Geometric approximation for the tail is an artefact of the fact that, in case of exponential intercontact times, the tail decays as $(T-\frac{\tau}{T})^{k-1}$. However, this is paid with a very significant inaccuracy in predicting the body of the distribution (and specifically N=1), which is way more important given the accumulation of probability mass around that value.}

\begin{figure}[h]
\begin{center}
%\vspace{-10pt}
\subfigure[$\tau=20s, T=100s$\label{fig:PMF_n_approx_exp_20}]{\includegraphics[scale=0.45]{./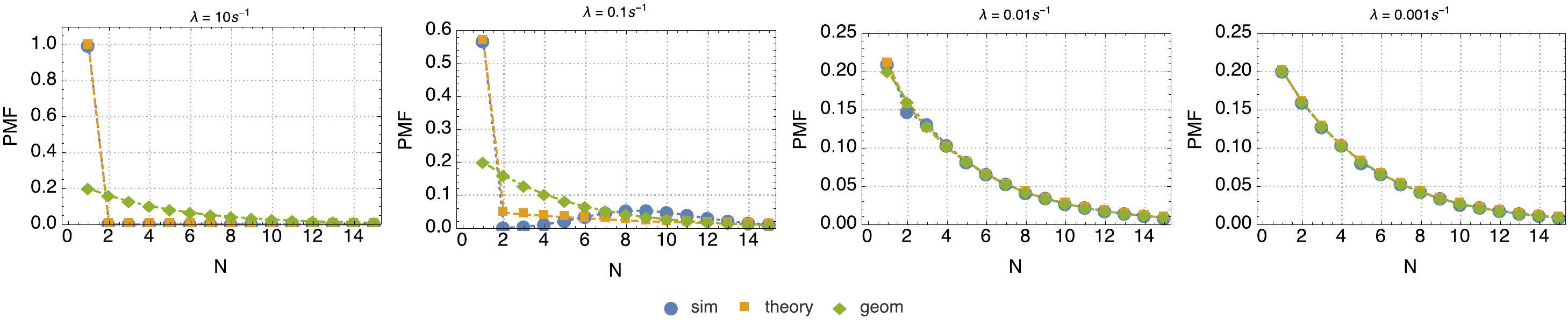}}
\subfigure[$\tau=80s, T=100s$\label{fig:PMF_n_approx_exp_80}]{\includegraphics[scale=0.45]{./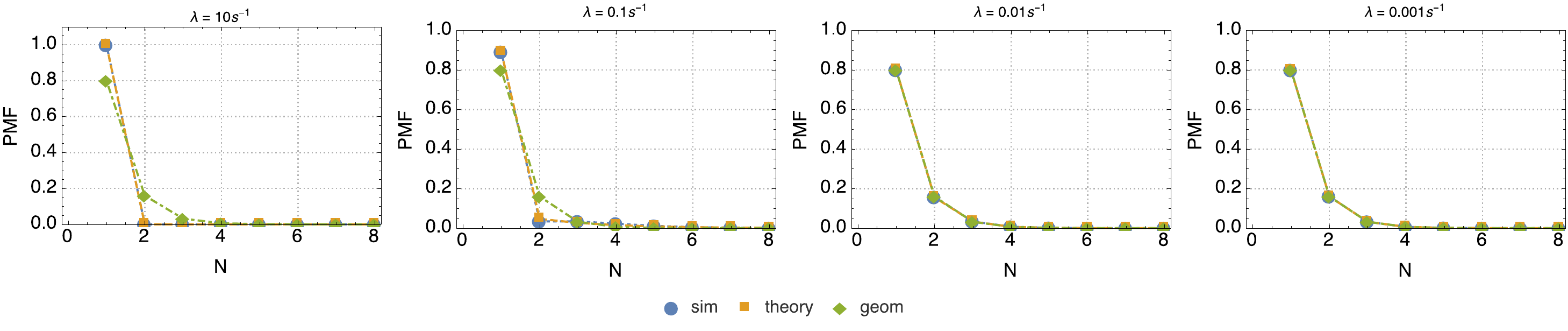}}
\subfigure[Tail behaviour, $\tau=20s$\label{fig:PMF_n_approx_exp_tail}]{\includegraphics[scale=0.45]{./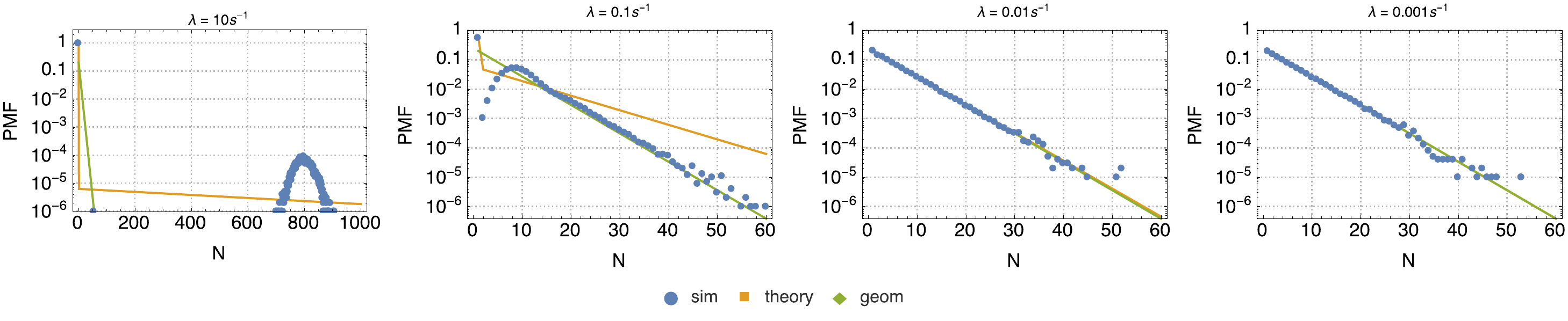}}
\caption{PMF of N: predictions of the approximate model VS empirical distribution - Exponential}
%\vspace{-15pt}
\label{fig:PMF_n_approx_exp}
\end{center}
\end{figure}

For the Pareto case, the sufficient condition for $f_S$ to be slowly varying in any interval of length~$T$ is $\frac{T}{\mathbb{E}[S](\alpha-1)} \ll 1$ or equivalently $\frac{T}{b} \ll 1$. Again, we take $T=100s$ and two different values for $\tau$ ($20s$ and $80s$). Since the Pareto distribution is characterised by two parameters,  here we have an additional degree of freedom. So, for the purpose of validation, we have selected $\alpha = 1.01$ for the Pareto, which is right after the threshold $\alpha=1$ for the convergence of the expectation~\cite{boldrini2015stability}. With $\alpha$ fixed, we have selected $b$ values in order to extensively explore the parameter space. Thus, we have chosen $b \in \{1, 10, 100, 1000\}$. Simulations are performed as discussed for the exponential case. Also in the Pareto case $g$ and $p$ in Theorem~\ref{theo:pdf_n_approx} can be derived in closed form:
\begin{eqnarray}
\scriptstyle g & \scriptstyle =& \scriptstyle \frac{T^{-\alpha } \left(T b^{\alpha } \left(\zeta \left(\alpha -1,\frac{b+T-\tau }{T}\right)+\zeta \left(\alpha -1,\frac{b+T+\tau }{T}\right)\right)-2 T^{\alpha +1} \left(\frac{b}{T}\right)^{\alpha } \zeta \left(\alpha -1,\frac{b+T}{T}\right)+T^{\alpha } \left((\alpha -1) \tau +(b+\tau ) \left(\frac{b}{b+\tau }\right)^{\alpha }-b\right)\right)}{(\alpha -1) \tau } \nonumber \\
\scriptstyle p & \scriptstyle =& \scriptstyle -\frac{T \left(\frac{b}{T}\right)^{\alpha } \left(-\zeta \left(\alpha -1,\frac{b+T+\tau }{T}\right)+\zeta \left(\alpha -1,\frac{b+T}{T}\right)+\zeta \left(\alpha -1,\frac{b+T-(T-\tau)+\tau }{T}\right)-\zeta \left(\alpha -1,\frac{b+T-(T-\tau)}{T}\right)\right)}{(\alpha -1) (T-\tau)}, 
\end{eqnarray}
where $\zeta(\cdot, \cdot)$ denotes the Hurwitz zeta function. %\addCB{Again, as a general result, we could use the above expression of $g$ to set the duty cycle such that the probability of missing a contact is below a desired threshold.}

Given the condition $\frac{T}{b} \ll 1$, we expect discrepancies between the model and simulation results for the first two values of $b$, a reasonable approximation for $b=100$, and an accurate prediction for $b=1000$.  From Figure~\ref{fig:PMF_n_approx_par} we observe that predictions are actually very good also in those cases in which they were expected to be less accurate. This happens for the same reason as for the exponential case. Also similarly to the exponential case, the predictions of the Geometric approximation in Corollary~\ref{coro:N_geometric} become quickly worse as $b$ shrinks. 
%Instead, our expectations matches exactly what we observe for the Geometric approximation in Corollary~\ref{coro:N_geometric}. We have already anticipated the reason for this behaviour. In fact, Theorem~\ref{theo:pdf_n_approx} exploits the uniformity of contact displacement in the intervals in which they take place but then uses the actual functional form of $f_S$. For this reason, when the slowly varying assumption does not hold, predictions are generally more accurate with Theorem~\ref{theo:pdf_n_approx} than with Corollary~\ref{coro:N_geometric}. 
%Finally, we plot the tail of the PMF of $N$ (Figure~\ref{fig:PMF_n_approx_par_tail}). Based on our discussion about the tail behaviour in the Exponential case, we also expect $f_S(x)$ to be approximately slowly varying for all $x > \nu$, where $\nu$ is a predefined threshold. This is indeed approximately true for all $b$ values in  Figure~\ref{fig:PMF_n_approx_par_tail} except for $b=1$.  [NON LO SO PERCHE FA SCHIFO PER B 1]

\begin{figure}[h]
\begin{center}
\subfigure[$\alpha=1.01, \tau=20, T=100$\label{fig:PMF_n_approx_par_20}]{\includegraphics[scale=0.45]{./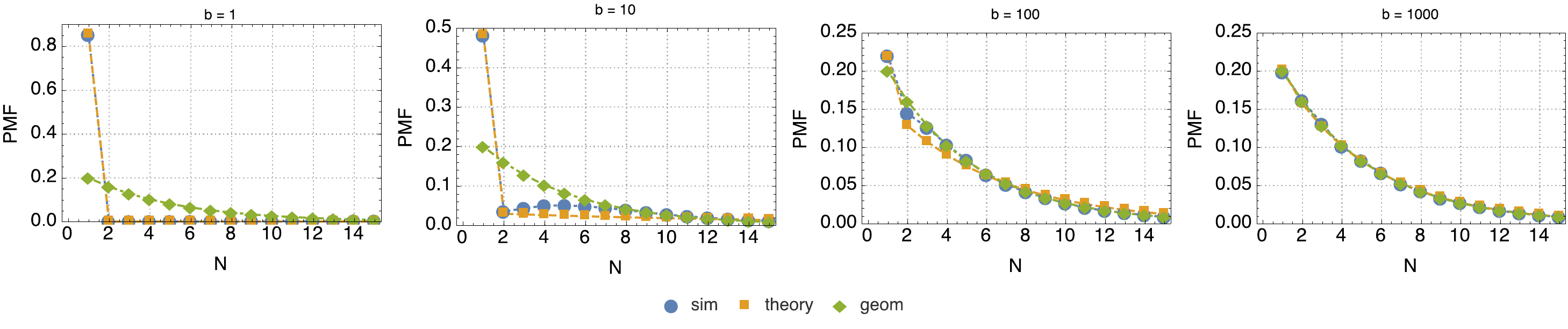}}
\subfigure[$\alpha=1.01, \tau=80, T=100$\label{fig:PMF_n_approx_par_80}]{\includegraphics[scale=0.45]{./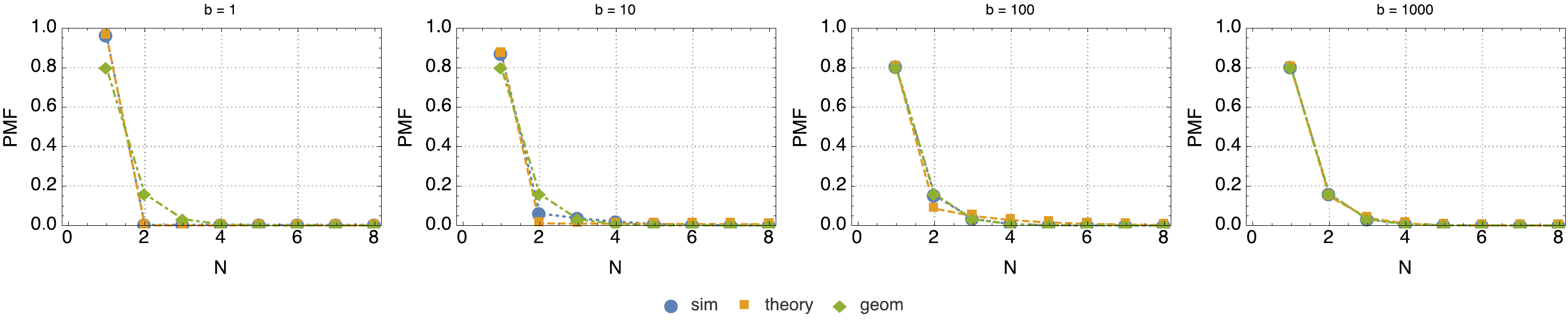}}
\subfigure[Tail behaviour, $\tau=20$\label{fig:PMF_n_approx_par_tail}]{\includegraphics[scale=0.45]{./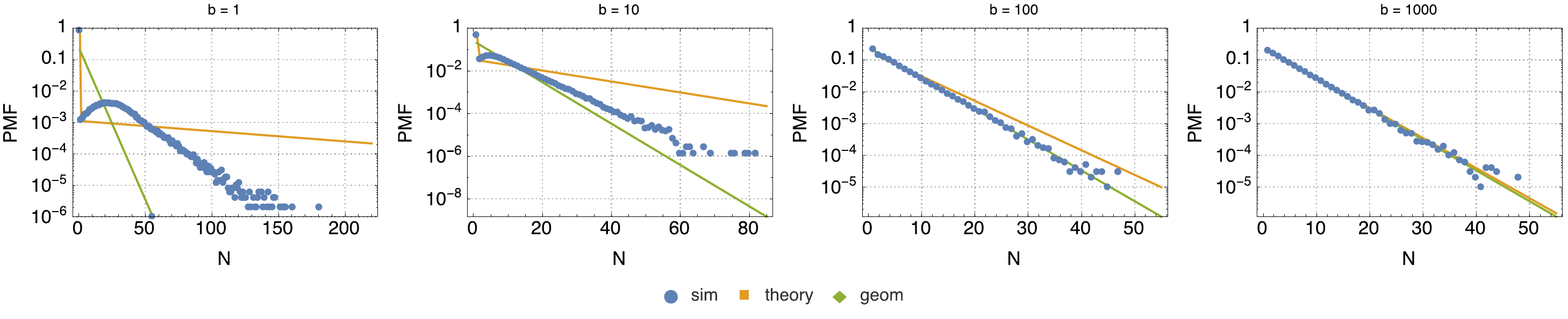}}
\caption{PMF of N: predictions of the approximate model VS empirical distribution - Pareto}
\label{fig:PMF_n_approx_par}
\end{center}
\end{figure}

Summarising, for both exponential and Pareto intercontact times, the predictions of Theorem~\ref{theo:pdf_n_approx} are generally very accurate, in particular for the body of the PMF of~$N$. As for the tail of the distribution, it is accurately captured only when the slowly varying conditions are satisfied. In Section~\ref{sec:ict_negligible_stilde_validation} we will see whether this inaccurate prediction affects the results for the measured intercontact times $\tilde{S}$. Finally, the Geometric approximation, despite being very convenient for its simplicity, is safe to use only when the slowly varying conditions are satisfied. When conditions are not satisfied, the Geometric approximation is not able to capture the behaviour of the body of the distribution and, more in general, the variability of $N$.

%\begin{figure}[h]
%\begin{center}
%\subfigure[Exponential case, $\tau=20, T=100$]{\includegraphics[scale=0.7]{./}}
%\subfigure[Pareto case, $\tau=20, T=100$]{\includegraphics[scale=0.7]{./}}
%\caption{Parameters $g$ and $p$ when the relevant parameters of the real ICT distribution vary.}
%%\vspace{-15pt}
%\label{fig:g_p_negligible}
%\end{center}
%\end{figure}

%==========================================================
\subsection{Measured intercontact times}
\label{sec:ict_negligible_detected_ict}

Exploiting the results in the previous section, here we discuss how to compute the first and second moment of the measured intercontact time $\tilde{S}$ for a generic node pair. \addCB{We know that the relation between $S$ and $\tilde{S}$ is $\tilde{S} = \sum_{i=1}^{N} S_i$.} Thus, $\tilde{S}$ is a random sum of random variables, and we can exploit well-known properties to compute its first and second moment. With the results in Proposition~\ref{prop:moments_of_tilde_S} below we can exploit commonly used representations of positive random variables as hyper/hypo-exponential random variables. The rationale of this approach is to use the moments of $\tilde{S}$ to fit a hyper- or a hypo-exponential distribution (depending on whether $cv_{\tilde{S}}^2$ is greater or smaller than~$1$, respectively) and then use this representation of $\tilde{S}$ to derive important metrics (e.g., the delay and number of hops) characterising the performance of networking protocols in the opportunistic network.
%\vspace{-5pt}
\begin{proposition}\label{prop:moments_of_tilde_S}
The first and second moment of $\tilde{S}$ are given by the following:
\begin{eqnarray}
\mathbb{E}[\tilde{S}] &=& \mathbb{E}[N] E[S] \\
\mathbb{E}[\tilde{S}^2] &=& \mathbb{E}[N^2]\mathbb{E}[S]^2 + \mathbb{E}[N]\mathbb{E}[S^2]-\mathbb{E}[N]\mathbb{E}[S]^2 \\
cv_{\tilde{S}}^2 &=& \frac{cv_S^2}{\mathbb{E}[N]} + cv_N^2. \label{eq:cv_N_negligible}
\end{eqnarray}
\end{proposition}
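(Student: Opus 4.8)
The plan is to treat $\tilde S = \sum_{i=1}^N S_i$ as a random sum of i.i.d. random variables — which is exactly the representation established in Section~\ref{sec:ict_negligible_preliminaries} under Lemma~\ref{lemma:uniformity} — and to invoke the standard Wald-type identities for the mean and variance of a compound sum. The key structural facts I would use are: (i) $N$ is independent of the sequence $\{S_i\}_{i\ge1}$ (or at least that $\{N=k\}$ depends only on the duty-cycling geometry and the displacements $Z^{ON},Z^{OFF}$, which by the slowly-varying assumption are decoupled from the actual magnitudes of the $S_i$), and (ii) the $S_i$ are i.i.d. with common mean $\mathbb{E}[S]$ and second moment $\mathbb{E}[S^2]$.

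First I would compute the first moment by conditioning on $N$: $\mathbb{E}[\tilde S] = \mathbb{E}\bigl[\mathbb{E}[\sum_{i=1}^N S_i \mid N]\bigr] = \mathbb{E}[N\,\mathbb{E}[S]] = \mathbb{E}[N]\,\mathbb{E}[S]$, which is Wald's identity. Next, for the second moment, I would again condition on $N$ and use that, given $N=k$, $\mathrm{Var}(\sum_{i=1}^k S_i) = k\,\mathrm{Var}(S)$ while $\mathbb{E}[\sum_{i=1}^k S_i] = k\,\mathbb{E}[S]$, so $\mathbb{E}[(\sum_{i=1}^k S_i)^2] = k\,\mathrm{Var}(S) + k^2\mathbb{E}[S]^2 = k(\mathbb{E}[S^2]-\mathbb{E}[S]^2) + k^2\mathbb{E}[S]^2$. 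Taking expectation over $N$ gives $\mathbb{E}[\tilde S^2] = \mathbb{E}[N]\bigl(\mathbb{E}[S^2]-\mathbb{E}[S]^2\bigr) + \mathbb{E}[N^2]\mathbb{E}[S]^2$, which rearranges to the stated formula $\mathbb{E}[N^2]\mathbb{E}[S]^2 + \mathbb{E}[N]\mathbb{E}[S^2] - \mathbb{E}[N]\mathbb{E}[S]^2$. Equivalently one can quote the classical compound-sum variance formula $\mathrm{Var}(\tilde S) = \mathbb{E}[N]\,\mathrm{Var}(S) + \mathrm{Var}(N)\,\mathbb{E}[S]^2$ directly.

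For the coefficient of variation, I would start from $cv_{\tilde S}^2 = \mathrm{Var}(\tilde S)/\mathbb{E}[\tilde S]^2$ and substitute the two expressions above. Using $\mathrm{Var}(\tilde S) = \mathbb{E}[N]\,\mathrm{Var}(S) + \mathrm{Var}(N)\,\mathbb{E}[S]^2$ and $\mathbb{E}[\tilde S]^2 = \mathbb{E}[N]^2\,\mathbb{E}[S]^2$, I divide through: the first term becomes $\frac{\mathbb{E}[N]\,\mathrm{Var}(S)}{\mathbb{E}[N]^2\,\mathbb{E}[S]^2} = \frac{cv_S^2}{\mathbb{E}[N]}$ and the second becomes $\frac{\mathrm{Var}(N)}{\mathbb{E}[N]^2} = cv_N^2$, yielding $cv_{\tilde S}^2 = \frac{cv_S^2}{\mathbb{E}[N]} + cv_N^2$, which is Equation~\eqref{eq:cv_N_negligible}.

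The only real subtlety — and the step I would be most careful about — is justifying the independence of $N$ and the $\{S_i\}$ needed for the conditioning argument. In general $N$ is determined by where the partial sums $\sum_{i=1}^k S_i$ land relative to the ON/OFF pattern, so $N$ is a stopping-time-like functional of the $S_i$ and not genuinely independent of them; a naive application of Wald would be circular. The resolution, already sketched in the text preceding Lemma~\ref{lemma:uniformity} and in the discussion around Equation~\eqref{eq:n_conditioned_with_events}, is that under the slowly-varying assumption the event $\{E_k=0\}$ depends only on the displacement within the current interval (which is approximately $Unif(0,\tau)$ or $Unif(0,T-\tau)$) and not on the actual lengths of the individual $S_i$; hence $N$ can be treated as independent of the magnitudes of the summands, and the standard compound-sum identities apply. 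I would state this reliance explicitly and otherwise keep the proof to the two short conditioning computations above.
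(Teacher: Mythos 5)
Your proposal is correct and follows essentially the same route as the paper, which simply invokes the standard compound-sum (Wald-type) identities for $\tilde S = \sum_{i=1}^{N} S_i$ after establishing, via the slowly-varying assumption of Lemma~\ref{lemma:uniformity}, that $N$ can be treated as decoupled from the magnitudes of the $S_i$. Your explicit conditioning computations and your remark on the stopping-time subtlety are consistent with (and slightly more careful than) the paper's treatment.
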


In the following we will use these formulas to get some general insights on the possible behaviour of $\tilde{S}$ depending on the distribution of $S$ and the duty cycle $\Delta$. By ``behaviour" we mean whether the distribution of $\tilde{S}$ exhibits a hypo-exponential, exponential or hyper-exponential nature. When real intermeeting times are exponential and $f_S$ is slowly varying according to Lemma~\ref{lemma:uniformity} and Lemma~\ref{lemma:uniformity_off}, we found that measured intercontact times feature a coefficient of variation that is approximately equal to $1$, from which it follows that $\tilde{S}$ can be approximated with an exponential distribution. In Lemma~\ref{lemma:stilde_neg_exp} below we summarise this result and we also provide an expression for the rate of $\tilde{S}$ in this case.

\begin{lemma}\label{lemma:stilde_neg_exp}
When intercontact times $S_i$ are exponential with rate $\lambda$ and condition $\lambda T \ll 1$ holds, the measured intercontact times are again exponential but with rate $\lambda \Delta$.
\end{lemma}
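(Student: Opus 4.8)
The plan is to combine two facts already available in the paper: (i) Corollary~\ref{coro:N_geometric}, which under the stated hypothesis makes $N$ geometric with parameter $\Delta=\tau/T$, and (ii) the elementary observation that a geometric sum of i.i.d.\ exponentials is again exponential. First I would check that the hypothesis of Corollary~\ref{coro:N_geometric} is met. For exponential $S$ with density $f_S(x)=\lambda e^{-\lambda x}$, and any two points $x_1,x_2$ lying in an interval of length $L\le\max\{\tau,T-\tau\}\le T$, we have $f_S(x_1)/f_S(x_2)=e^{-\lambda(x_1-x_2)}$ with $|\lambda(x_1-x_2)|\le\lambda T\ll 1$, so $f_S$ is (approximately) constant on such intervals. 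Hence Lemma~\ref{lemma:uniformity} and Lemma~\ref{lemma:uniformity_off} apply, the measured intercontact process is renewal with $\tilde S=\sum_{i=1}^{N}S_i$ for i.i.d.\ $S_i$, and by Corollary~\ref{coro:N_geometric} we may take $N\sim\mathrm{Geom}(\Delta)$ supported on $\{1,2,\dots\}$, i.e.\ $\mathbb{P}\{N=k\}=\Delta(1-\Delta)^{k-1}$.

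The second step is a short computation with the Laplace transform of $\tilde S$. Conditioning on $N$ and using independence of the $S_i$, $\mathbb{E}[e^{-s\tilde S}]=\mathbb{E}\big[\,(\mathcal{L}_S(s))^{N}\,\big]$ where $\mathcal{L}_S(s)=\lambda/(\lambda+s)$. Inserting the probability generating function of the geometric law, $\mathbb{E}[z^{N}]=\Delta z/\big(1-(1-\Delta)z\big)$, yields
$$\mathbb{E}[e^{-s\tilde S}]=\frac{\Delta\,\tfrac{\lambda}{\lambda+s}}{1-(1-\Delta)\tfrac{\lambda}{\lambda+s}}=\frac{\Delta\lambda}{(\lambda+s)-(1-\Delta)\lambda}=\frac{\lambda\Delta}{\lambda\Delta+s},$$
which is precisely the Laplace transform of an exponential random variable with rate $\lambda\Delta$; by uniqueness of Laplace transforms the claim follows. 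As a consistency check, Proposition~\ref{prop:moments_of_tilde_S} together with $cv_S^2=1$, $\mathbb{E}[N]=1/\Delta$ and $cv_N^2=1-\Delta$ gives $\mathbb{E}[\tilde S]=1/(\lambda\Delta)$ and, via Eq.~\eqref{eq:cv_N_negligible}, $cv_{\tilde S}^2=\Delta+(1-\Delta)=1$, exactly as for $\mathrm{Exp}(\lambda\Delta)$.

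The only genuine obstacle is that the whole statement is an approximation: the conclusion should read ``$\tilde S$ is approximately exponential with rate $\lambda\Delta$'', and its accuracy is governed entirely by how well $N$ is geometric, i.e.\ by the slowly-varying/uniformity hypotheses, which for exponential $S$ collapse to the single condition $\lambda T\ll 1$. Once $N$ is granted to be geometric, the algebra above is exact and immediate; thus the real work has already been done in establishing Corollary~\ref{coro:N_geometric}, and the lemma is essentially a corollary of it specialized to exponential $S$.
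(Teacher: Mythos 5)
Your proposal is correct, and it reaches the conclusion by a genuinely different (and strictly stronger) final step than the paper. The paper's own proof also begins by invoking Corollary~\ref{coro:N_geometric} under $\lambda T \ll 1$, but then it only substitutes $\mathbb{E}[S]=1/\lambda$ and $cv_S^2=1$ into Proposition~\ref{prop:moments_of_tilde_S} to obtain $\mathbb{E}[\tilde S]=1/(\lambda\Delta)$ and $cv_{\tilde S}^2=1$, and declares that this ``corresponds to'' an exponential with rate $\lambda\Delta$ --- a two-moment match that identifies the exponential only within the paper's hyper-/hypo-exponential fitting convention, since $cv^2=1$ does not by itself pin down the distribution. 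You instead use the classical closure property that a $\mathrm{Geom}(\Delta)$-compound of i.i.d.\ $\mathrm{Exp}(\lambda)$ variables is \emph{exactly} $\mathrm{Exp}(\lambda\Delta)$, verified via the Laplace-transform/PGF composition $\mathbb{E}[e^{-s\tilde S}]=\lambda\Delta/(\lambda\Delta+s)$; this yields the full distributional statement, not just the first two moments, and relies on the same independence of $N$ from the $S_i$ that Proposition~\ref{prop:moments_of_tilde_S} already presupposes. Your explicit check that $\lambda T\ll 1$ implies the slowly-varying hypothesis ($f_S(x_1)/f_S(x_2)=e^{-\lambda(x_1-x_2)}\sim 1$) is also a welcome addition the paper leaves implicit, and your closing caveat --- that the result is exact only conditional on the geometric approximation of $N$, so the lemma is really ``approximately exponential'' --- is an accurate reading of where the approximation actually lives.
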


\begin{proof}
When $\lambda T \ll 1$ we can apply Corollary~\ref{coro:N_geometric}. Thus, recalling that $\mathbb{E}[S] = \frac{1}{\lambda}$ and $cv_S^2 =1$, and substituting them into the equations in Proposition~\ref{prop:moments_of_tilde_S}, we obtain $\mathbb{E}[\tilde{S}] = \frac{1}{\lambda \Delta}$ and $cv_{\tilde{S}}^2 = 1$, which corresponds to an exponentially distributed random variable with rate $\lambda \Delta$. Please note we had already obtained this result in~\cite{biondi2014duty} using a more complex model and derivation.
\end{proof}

%When intercontact times are Pareto, measured intercontact times do not feature one of the well-known distributions. However, it might be useful to be able to characterise conveniently at least the tail of their distribution, which is very important when dealing with potentially high variance distributions. In fact, 

It has been proved in the literature~\cite{chaintreau2007impact,boldrini2015stability} \addCB{that the expected delay of routing protocols may not converge in opportunistic networks with Pareto intercontact times depending on the value of their shape parameter~$\alpha$, which determines the behaviour of the tail. Lemma~\ref{lemma:s_tilde_neg_paretotail} below tells us that if there are convergence issues with~$S_i$, the same problems will show up also with the measured intercontact times, since they decay as a Pareto with the same exponent. }

\begin{lemma}\label{lemma:s_tilde_neg_paretotail}
When intercontact times are Pareto with exponent $\alpha$, the CCDF of $\tilde{S}$ decays as a Pareto random variable with exponent $\alpha$.
\end{lemma}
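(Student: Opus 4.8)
The plan is to exploit the representation $\tilde S=\sum_{i=1}^{N}S_i$ established in Section~\ref{sec:ict_negligible_detected_ict}, where the $S_i$ are i.i.d.\ Pareto with exponent $\alpha$ and $N$ is independent of them with the law of Theorem~\ref{theo:pdf_n_approx}. The key observation is that $\tilde S$ is a \emph{random sum of heavy-tailed summands whose number of terms is light-tailed}, and for such sums the tail of the sum is, up to a multiplicative constant, the tail of a single summand.

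First I would record the two structural facts that drive the argument. Since $\mathbb{P}\{N=k\}=(1-g)(1-p)^{k-2}p$ for $k\ge 2$, the tail of $N$ decays geometrically, so $\mathbb{E}[\rho^{N}]<\infty$ for every $\rho<(1-p)^{-1}$; in particular $N$ has a finite exponential moment in a neighbourhood of $1$, and by the Corollary on the moments of $N$ we have $\mathbb{E}[N]=\frac{1-g+p}{p}<\infty$. On the other side, the Pareto CCDF $\overline F_S(x)=\mathbb{P}(S>x)$ is regularly varying at infinity with index $-\alpha$, hence $S$ is subexponential; crucially this requires no moment assumption on $S$, so the argument will cover every $\alpha>0$.

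The core step is the classical tail asymptotics for random sums of subexponential variables: if $F$ is subexponential and $N$ is independent of the i.i.d.\ sequence with $\mathbb{E}[(1+\delta)^{N}]<\infty$ for some $\delta>0$, then $\mathbb{P}\!\big(\sum_{i=1}^{N}S_i>x\big)\sim \mathbb{E}[N]\,\overline F_S(x)$. To keep the proof self-contained I would reprove it in this setting. The lower bound is immediate from $\{N=1\}\cap\{S_1>x\}\subseteq\{\tilde S>x\}$, giving $\mathbb{P}(\tilde S>x)\ge g\,\overline F_S(x)$. For the upper bound I would condition on $N$, writing $\mathbb{P}(\tilde S>x)=\sum_{k\ge1}\mathbb{P}\{N=k\}\,\mathbb{P}\!\big(\sum_{i=1}^{k}S_i>x\big)$, use the subexponential asymptotics $\mathbb{P}\!\big(\sum_{i=1}^{k}S_i>x\big)\sim k\,\overline F_S(x)$ together with Kesten's uniform bound $\mathbb{P}\!\big(\sum_{i=1}^{k}S_i>x\big)\le K_{\varepsilon}(1+\varepsilon)^{k}\,\overline F_S(x)$ valid for all $k$ and all $x$, and choose $\varepsilon>0$ small enough that $(1+\varepsilon)(1-p)<1$, so the geometric decay of $\mathbb{P}\{N=k\}$ dominates the factor $(1+\varepsilon)^{k}$. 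Dominated convergence applied to $\mathbb{P}(\tilde S>x)/\overline F_S(x)$ then yields $\lim_{x\to\infty}\mathbb{P}(\tilde S>x)/\overline F_S(x)=\sum_{k\ge1}k\,\mathbb{P}\{N=k\}=\mathbb{E}[N]$.

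Putting the two bounds together gives $\mathbb{P}(\tilde S>x)\sim \mathbb{E}[N]\,\overline F_S(x)$, so $\mathbb{P}(\tilde S>x)$ is regularly varying with index $-\alpha$ — i.e.\ $\tilde S$ decays as a Pareto random variable with the same exponent $\alpha$, the scale being merely inflated by the factor $\mathbb{E}[N]^{1/\alpha}$. I expect the delicate point to be the interchange of limit and infinite sum in the upper bound: it relies simultaneously on the sharp first-order asymptotics $\mathbb{P}(\sum_{i=1}^k S_i>x)\sim k\,\overline F_S(x)$ and on a uniform-in-$k$ domination (Kesten's lemma), played off against the geometric tail of $N$; once that is set up, the remainder is routine bookkeeping.
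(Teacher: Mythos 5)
Your proposal is correct and follows essentially the same route as the paper: condition on $N$, use the $k$-fold convolution tail asymptotics $\mathbb{P}(S_1+\cdots+S_k>x)\sim k\,\overline F_S(x)$, and sum to obtain $\mathbb{P}(\tilde S>x)\sim \mathbb{E}[N]\,\overline F_S(x)$. The only difference is that you explicitly justify the interchange of limit and infinite sum via Kesten's uniform bound played against the geometric tail of $N$, a step the paper's proof passes over silently.
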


\begin{proof}
Since $\tilde{S} = \sum_{i=1}^{N} S_i$, we can write the following:
\begin{equation}
\mathbb{P}(\tilde{S} > x) = \sum_{k=1}^{\infty} \mathbb{P}\{N=k\} \mathbb{P}(S_1 + \cdots + S_k > x).
\end{equation}
It has been shown~\cite{ramsay2006distribution} that, when $S_i$ are i.i.d. Pareto random variables, $\mathbb{P}(S_1 + \cdots + S_k > x)$ goes as $L(x) k \left(\frac{b}{x}\right)^{\alpha}$ when $x \rightarrow \infty$, where $L(x)$ is a slowly varying (in the sense of~\cite{bingham1989regular})
%\footnote{Slowly varying means that for a fixed $t>0$, $\frac{L(tx)}{L(x)} \rightarrow 1$ as $x \rightarrow \infty$.}
 function of $x$. Using this relationship, we can rewrite the above equality as:
\begin{eqnarray}
\mathbb{P}(\tilde{S} > x) &\sim & L(x) \left(\frac{b}{x}\right)^{\alpha} \sum_{k=1}^{\infty} k \mathbb{P}\{N=k\} \\
& \sim & L(x) \left(\frac{b}{x}\right)^{\alpha} \mathbb{E}[N] \\
& \sim & \left(\frac{b}{x}\right)^{\alpha}
\end{eqnarray}
\end{proof}
%
%This result is important because it tells us that if there are convergence issues with~$S_i$, the same problems will show up also with the measured intercontact times, since they decay as a Pareto with the same exponent. 

Now, in Lemma~\ref{lemma:stilde_neg_general} (whose proof can be found in~\cite{biondi:what_you_lose:tr}) we want to answer a more general question: can duty cycling transform a hypo-exponential intercontact time into a hyper-exponential measured intercontact time, and vice versa? The answer is yes, and whether this happens or not depends on the values of $cv_S^2, g$ and $p$. This result has a serious implication for opportunistic networks: if exponential intercontact times can turn into hyper-exponential measured intercontact times, delay convergence issues may show up also in apparently safe exponential mobility scenarios if a duty cycling policy is deployed.

\begin{lemma}\label{lemma:stilde_neg_general}
When $f_S$ is slowly varying in intervals of length $\max \{ \tau, T-\tau\}$, the measured intercontact times behave according to Table~\ref{tab:hypo_hyper_regions}, where $\xi(g,p) = 1 - \frac{2g}{p} + \frac{2}{1-g+p}$ and $\omega(g) = \frac{3}{2}(g-1)+\frac{1}{2}\sqrt{9-10g+g^2}$.
\end{lemma}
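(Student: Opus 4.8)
The plan is to reduce the three‑way classification to a single scalar comparison and then identify the two threshold curves. Under the slowly varying hypothesis, Theorem~\ref{theo:pdf_n_approx} and its corollary give $\mathbb{E}[N]=\frac{1-g+p}{p}$ and $cv_N^2=\frac{(1-g)(g-p+1)}{(1-g+p)^2}$, while Proposition~\ref{prop:moments_of_tilde_S} (applicable because $\tilde S=\sum_{i=1}^N S_i$ with $N$ i.i.d.\ by Lemma~\ref{lemma:uniformity}) gives $cv_{\tilde S}^2=\frac{cv_S^2}{\mathbb{E}[N]}+cv_N^2$. Since the coefficient $\tfrac{1}{\mathbb{E}[N]}=\tfrac{p}{1-g+p}$ is strictly positive, the map $cv_S^2\mapsto cv_{\tilde S}^2$ is strictly increasing and affine; hence $\tilde S$ is hypo‑exponential, exponential, or hyper‑exponential (i.e.\ $cv_{\tilde S}^2<1$, $=1$, or $>1$) exactly according to whether $cv_S^2$ lies below, at, or above the value $\xi(g,p)$ that solves $cv_{\tilde S}^2=1$.

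First I would carry out that solve. Substituting the expressions above into $\frac{cv_S^2\,p}{1-g+p}+\frac{(1-g)(1+g-p)}{(1-g+p)^2}=1$ and isolating $cv_S^2$ gives
\[
cv_S^2=\frac{(1-g+p)^2-(1-g)(1+g-p)}{p\,(1-g+p)}=\frac{q(g,p)}{p\,(1-g+p)},\qquad q(g,p):=p^2+3(1-g)\,p-2g(1-g),
\]
and writing $1-\frac{2g}{p}+\frac{2}{1-g+p}$ over the common denominator $p(1-g+p)$ produces the same numerator $q(g,p)$; so this value is precisely $\xi(g,p)$. This identity (and the companion one for $\omega$ below) is the only genuine computation, and it is just a routine quadratic expansion.

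Next I would account for the constraint $cv_S^2\ge 0$, which makes part of the threshold line unattainable. At $cv_S^2=0$ one has $cv_{\tilde S}^2=cv_N^2$, so whenever $cv_N^2>1$ no admissible $cv_S^2$ can bring $cv_{\tilde S}^2$ down to $1$ and $\tilde S$ is hyper‑exponential for every $cv_S^2\ge 0$; equivalently, $\xi(g,p)<0$ in that regime, because $q(g,p)=(1-g+p)^2-(1-g)(1+g-p)$ has the same sign as $1-cv_N^2$, whence $\xi(g,p)\ge 0\Leftrightarrow cv_N^2\le 1$. It then remains to place the curve $cv_N^2=1$ in the $(g,p)$ plane: this is $q(g,p)=0$, an upward parabola in $p$ with $q(g,0)=-2g(1-g)<0$, discriminant $9(1-g)^2+8g(1-g)=9-10g+g^2$, and unique positive root $p=\tfrac32(g-1)+\tfrac12\sqrt{9-10g+g^2}=\omega(g)$. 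A one‑line sign check — the numerator of $\partial_p\,cv_N^2$ is $(1-g)(p-g-3)<0$ on $0<g<1$, $0<p\le 1$ — shows $cv_N^2$ is strictly decreasing in $p$, hence $q(g,p)\ge 0\Leftrightarrow p\ge\omega(g)$.

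Combining the pieces reproduces Table~\ref{tab:hypo_hyper_regions}: for $p<\omega(g)$ (so $cv_N^2>1$ and $\xi(g,p)<0$) the measured intercontact time is hyper‑exponential regardless of $cv_S^2$, while for $p\ge\omega(g)$ the threshold $\xi(g,p)\ge 0$ is meaningful and $\tilde S$ is hypo‑exponential, exponential, or hyper‑exponential according as $cv_S^2<\xi(g,p)$, $cv_S^2=\xi(g,p)$, or $cv_S^2>\xi(g,p)$. The only real obstacle is bookkeeping — keeping the admissible ranges $g,p\in(0,1)$ in view so all denominators stay positive and $\omega(g)$ is the relevant root, and pushing the two quadratic identities through without a sign slip.
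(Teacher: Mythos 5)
Your proof is correct and follows essentially the same route as the paper's (the full proof is deferred to the technical report, but the discussion after the lemma confirms it rests on exactly this argument): invert $cv_{\tilde S}^2 = cv_S^2/\mathbb{E}[N] + cv_N^2 = 1$ to obtain the threshold $\xi(g,p)$, and identify $p=\omega(g)$ as the locus $cv_N^2=1$, below which $\xi(g,p)<0$ and $\tilde S$ is hyper-exponential for every admissible $cv_S^2$. The only piece you leave implicit is the translation of this single threshold criterion into the exact row/column split of Table~\ref{tab:hypo_hyper_regions}, which requires the two easy additional facts $\xi(g,p)\le 1 \Leftrightarrow p\le g$ and $\xi(g,p)<3$ for all $g,p\in(0,1)$.
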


\begin{table}[ht]
\caption{Summary of hypo-exponential/hyper-exponential regions depending on $cv_S^2, g,p$}
\fontsize{8.0}{9.0}\selectfont
{\renewcommand{\arraystretch}{1.5}
\begin{tabular}{c||c|c|}
%\hline
& $\tilde{S}$ hypo-exponential & $\tilde{S}$ hyper-exponential \\[3pt] \hline \hline
S hyper-exponential $(cv_S^2 > 3)$ & - & always \\ \hline
\multirow{2}{*}{S hyper-exponential $(1 < cv_S^2 <= 3)$} & $p > g \wedge  cv_S^2 < \xi(g,p) $ & $g \ge p$\\
& & $p > g \wedge  cv_S^2 > \xi(g,p) $ \\ \hline
\multirow{2}{*}{S hypo-exponential} & $p \ge g$ & $p < \omega(g)$ \\ 
& $g > p > \omega(g) \land cv_S^2 < \xi(g,p)$ & $g > p > \omega(g)\land cv_S^2 > \xi(g,p) $  \\ \hline
\hline
\end{tabular}}
\label{tab:hypo_hyper_regions}
\end{table}

Despite their apparent complexity, the conditions in Table~\ref{tab:hypo_hyper_regions} have an intuitive explanation. Take for example the important case where hypo-exponential intercontact times can turn into hyper-exponential measured intercontact times, corresponding to the bottom right entry of Table~\ref{tab:hypo_hyper_regions}. As worked out in~\cite{biondi:what_you_lose:tr}, condition $p < \omega(g)$ corresponds to the case $cv_N^2 > 1$, which, according to Equation~\ref{eq:cv_N_negligible} is enough for the coefficient of variation of $\tilde{S}$ to be greater than $1$. In practice, what happens with $cv_N^2 > 1$ is that both very small and very large values of $N$ are possible, hence $\tilde{S}$ will be a combination of the sums of a small number of intercontact times $S$ and the sums of a large number of intercontact times $S$, thus introducing a lot of variability in the distribution of $\tilde{S}$. Vice versa, when condition $p < \omega(g)$ does not hold, $cv_S^2$ starts to weigh in in Equation~\ref{eq:cv_N_negligible} and can tilt the balance in favour of hyper-exponentiality depending on how it relates to $\xi(g,p)$ (cfr second condition in the bottom right entry of Table~\ref{tab:hypo_hyper_regions}).

Finally, it is interesting to note from Table~\ref{tab:hypo_hyper_regions}, that when $g=p$ (which we know happens under the conditions of Corollary~\ref{coro:N_geometric}) $\tilde{S}$ inherits the behaviour of $S$: if $S$ is hyper-exponential then $\tilde{S}$ is hyper-exponential, and vice versa. In Section~\ref{sec:ict_negligible_stilde_validation} we use this result to assess the behaviour of $\tilde{S}$. Hence, with Corollary~\ref{coro:N_geometric} we would never observe a change in the behaviour of $\tilde{S}$.

\subsubsection{Validation of $\tilde{S}$}
\label{sec:ict_negligible_stilde_validation}

In order to validate the model, we compare simulation results against theoretical predictions for $\tilde{S}$. Specifically, we generate the measured intercontact times via Monte Carlo simulation, similarly to what we have done for the validation of $N$. The theoretical predictions are obtained as discussed in the previous section.

We start our validation with the exponential case and in Figure~\ref{fig:s_tilde_exp} we plot the distribution of $\tilde{S}$ for the same parameter values ($\lambda \in \{ 0.001s^{-1}, 0.01s^{-1}, 0.1s^{-1}, 10s^{-1}\}$) used in the validation of the PMF of~$N$. We plot in orange the distribution predicted using Theorem~\ref{theo:pdf_n_approx} and in green the distribution of an Exponential random variable with rate $\lambda \Delta$ (corresponding to the prediction of Lemma~\ref{lemma:stilde_neg_exp}, which in turn relies on Corollary~\ref{coro:N_geometric}). Since the PMF of $N$ was predicted with very good accuracy by our model in Theorem~\ref{theo:pdf_n_approx}, we also expect that the distribution of $\tilde{S}$ matches its theoretical prediction. This is indeed the case, both for $\lambda$ values such that $\lambda T \ll 1$ but also in the opposite case, when $\lambda T \ge 1$. This is due to the fact that when $\lambda$ is large (with respect to $\tau$) $\mathbb{P}(N = 1)$ becomes large, and this effect is captured by Theorem~\ref{theo:pdf_n_approx} (as we have discussed in Section~\ref{sec:ict_negligible_pdf_n_validation}). Vice versa, when $\lambda$ is large the predictions based on the Geometric approximation in Corollary~\ref{coro:N_geometric} fail to capture the high probability of event $N=1$. % Again the advantages of Theorem~\ref{theo:pdf_n_approx} with respect to Corollary~\ref{coro:N_geometric} are evident: in Theorem~\ref{theo:pdf_n_approx} we use the actual functional form of the distribution of $S_i$, even if we neglect its ``memory'', and this results in more accurate predictions in the body of the distribution. 

In Figure~\ref{fig:s_tilde_exp_tail} we also plot the tail of the CCDF of $\tilde{S}$. We observe that both models accurately predict the tail when $\lambda T \ll 1$ holds. When the condition does not hold, we need to apply Lemma~\ref{lemma:stilde_neg_general}. Specifically, computing $p$, $g$ and $\omega(g)$, Lemma~\ref{lemma:stilde_neg_general} predicts that $\tilde S$ is hyper-exponential, while Lemma~\ref{lemma:stilde_neg_exp} would predict a squared coefficient of variation equal to $1$. Indeed, the squared coefficient of variation measured in simulation is $142$.

We complete the validation of the results for $\tilde{S}$ by considering Pareto intercontact times (with $\alpha=1.01$, as in Section~\ref{sec:ict_negligible_pdf_n_validation}). When the slowly varying condition holds ($b=100$, $b=1000$), we observe a perfect match between simulation results and predictions (with both proposed models, as can be seen in Figure~\ref{fig:s_tilde_par_all}). When the hypotheses of the model are not met (small $b$), the model is nevertheless able to well approximate simulation results, though with a slight mismatch in the initial part of the CDF, where the proposed model underestimates the presence of small values of $\tilde{S}$. This is due to the long tail of $N$ under the model in Theorem~\ref{theo:pdf_n_approx}, which we had observed in Figure~\ref{fig:PMF_n_approx_par_tail}. Owing to this fact, the model overestimate the presence of large values of $N$, and this has an impact on the predictions for $\tilde{S}$. However, this effect is quite limited. The above considerations also apply to the tail of $\tilde{S}$ in Figure~\ref{fig:s_tilde_par_tail}. Here we have also plotted a curve for $(\frac{b}{x})^{\alpha}$, in order confirm that $\tilde{S}$  decays as predicted by Lemma~\ref{lemma:s_tilde_neg_paretotail}.

\begin{figure}[t]
\begin{center}
%\vspace{-10pt}
\subfigure[$\tau=20, T=100$\label{fig:s_tilde_exp_all}]{\includegraphics[scale=0.4]{./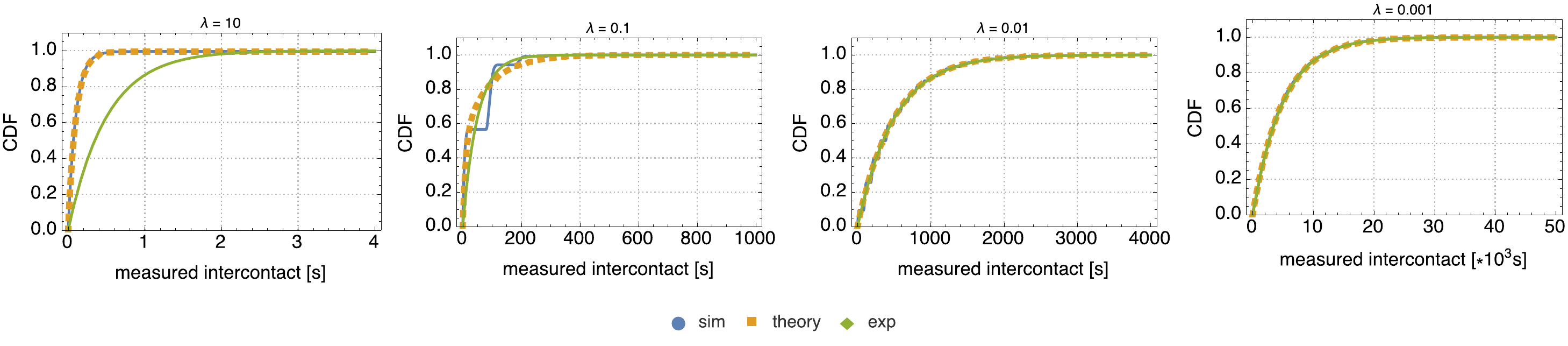}}
\subfigure[$\tau=20, T=100\label{fig:s_tilde_exp_tail}$, tail behaviour]{\includegraphics[scale=0.4]{./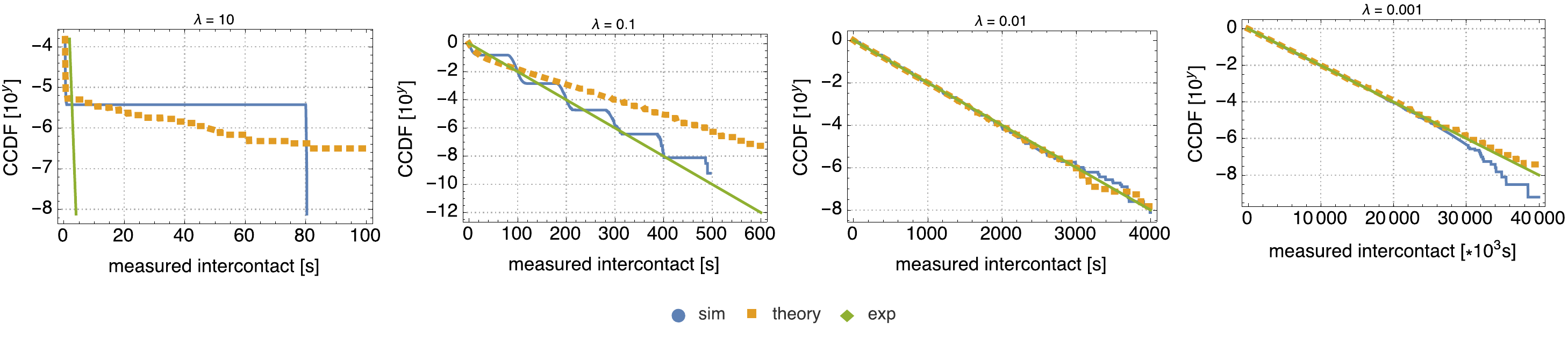}}
\caption{$\tilde{S}$: predictions of the approximate model VS empirical distribution - Exponential.}
%\vspace{-15pt}
\label{fig:s_tilde_exp}
\end{center}
\end{figure}

\begin{figure}[t]
\begin{center}
%\vspace{-10pt}
\subfigure[$\tau=20, T=100$\label{fig:s_tilde_par_all}]{\includegraphics[scale=0.4]{./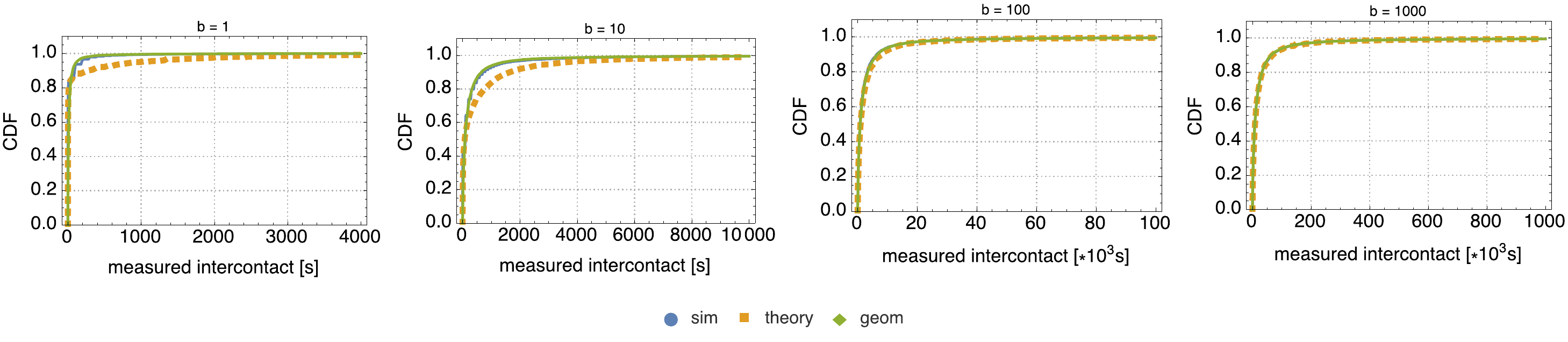}}
\subfigure[$\tau=20, T=100$, tail behaviour\label{fig:s_tilde_par_tail}]{\includegraphics[scale=0.4]{./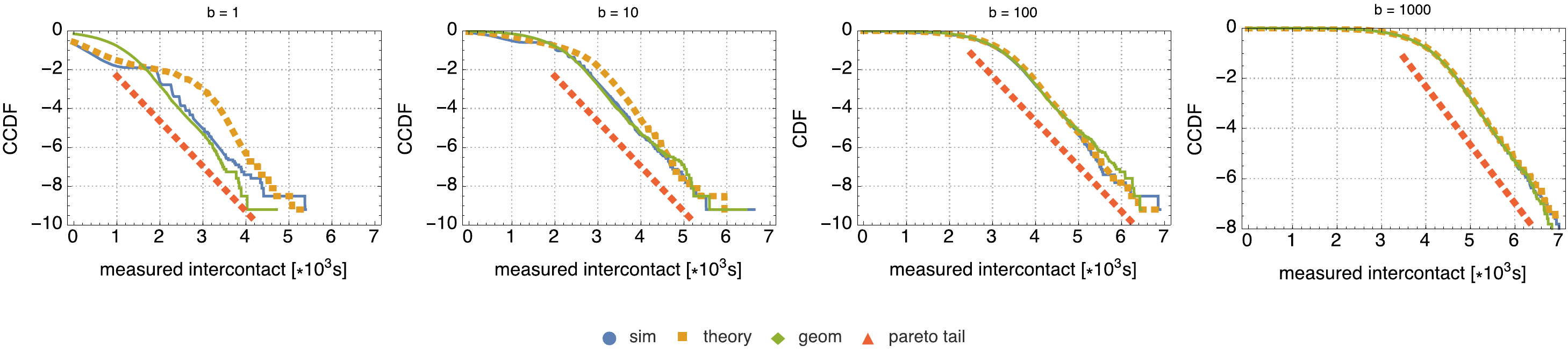}}
\caption{$\tilde{S}$: predictions of the approximate model VS empirical distribution - Pareto.}
%\vspace{-15pt}
\label{fig:s_tilde_pareto}
\end{center}
\end{figure}

%===============================================================

\section{Detected contact process when contact duration is non-negligible}
\label{sec:ict_non_negligible}

%In Section~\ref{sec:ict_negligible} we have assumed that contact duration was negligible with respect to the duration of ON and OFF intervals, and this allowed us to derive a model that we have shown to be generally very accurate. However, not always is the contact duration negligible, and in this section we want to modify the model from Section~\ref{sec:ict_negligible} to take into account this fact. To this aim, 
\addCB{While this is typically a reasonable approximation, contact durations might in general not be negligible with respect to duty cycle periods. Therefore, }in Section~\ref{sec:ict_non_negligible_preliminaries} we revise the slowly varying conditions for the case of non-negligible contact duration (this boils down to extending the slowly varying assumption to the distribution of the contact duration, $C_i$).
Then in Section~\ref{sec:ict_non_negligible_measured_contact} we compute the distribution of the measured contact duration $\tilde{C}$. This measured contact duration cannot, by definition, exceed the length of an ON interval, and its distribution depends on how the real contact duration intersects with the ON interval. Building upon these results, in Section~\ref{sec:ict_non_negligible_measured_ict} we derive the measured intercontact times $\tilde{S_i}$. The main differences with respect to when contact duration is negligible is that the portions of contact duration not overlapping with ON intervals have to be included into the measured intercontact time. For both $\tilde{C}$ and $\tilde{S}$  we also provide a model for the case when nodes refrain from entering the OFF state upon contact detection, as in~\cite{trifunovic2014adaptive}. In this case, the effect of the duty cycle on the measured contact duration is generally limited, since only the initial part of the original contact is missed. As for the measured intercontact times, we do not observe anymore the contribution of the portion of contact duration not overlapping with ON intervals. Finally, in Section~\ref{sec:ict_non_negligible_validation} we validate this extended model, showing that its predictions are once again very accurate.

%===============================================================
\subsection{Preliminaries}
\label{sec:ict_non_negligible_preliminaries}

%When contact duration is not negligible, the contact process, as discussed in Section~\ref{sec:contact_process}, can be represented as an alternating renewal process $\{ C_i, S_i\}$, whose two states correspond to the two nodes being or not being in contact, respectively. Thus, there are now two types of events: the beginning of a contact and the end of a contact.
%Since a contact is not anymore a punctual event but lasts for some time, contacts are detected more easily when their duration is not negligible. In fact, also contacts starting in an OFF interval can be detected, as long as they last until the next ON interval. Actually, a contact can cover also two or more ON intervals, and in this case \emph{pseudo-intercontact times} are introduced. Pseudo-intercontact times are measured intercontact times of length $T-\tau$, which are the result of long contacts being split into smaller measured contacts separated by intervals of size $T-\tau$ in which no communication is possible (because the two nodes, despite theoretically in each other's radio range, are in the OFF interval of their duty cycle). We will discuss them in Section~\ref{sec:ict_non_negligible_measured_ict}.

In order to make the model tractable, similarly to what we have done in Section~\ref{sec:ict_negligible}, we need to model the measured contact process as a renewal process, this time of type alternate-renewal. The measured contact process $\{ \tilde{C}_i, \tilde{S}_i\}$ is alternate renewal if $\{ \tilde{C}_i, \tilde{S}_i\}$ are independent sequences of i.i.d. random variables. Based on the discussion for the negligible contact case, it is clear that also in this case the measured contact process is not in general renewal. %since the probability that a contact overlaps or not with an ON interval depends on where the contact starts, and the time at which the contact starts depends on the sequence of the previous contact and intercontact events.  
However, using an argument similar to the one used in Lemma~\ref{lemma:uniformity} it is possible to prove (Lemma~\ref{lemma:uniformity_s_nonneg}) that also in this case we can assume, approximately, the independence for $\{ \tilde{C}_i, \tilde{S}_i\}$, provided that the PDF of $S_i$ is slowly varying in intervals of length $\max \{T-\tau, \tau\}$.  From this, it follows that the displacement of the beginning of a real contact in the ON/OFF interval in which it takes place is uniformly distributed in that interval. We omit the proof of this result as it would be a repetition of the same concepts discussed in Lemma~\ref{lemma:uniformity}.  %We summarise this discussion in Lemma~\ref{lemma:uniformity_s_nonneg} and Corollary~\ref{coro:uniformity_sc_nonneg} below. 

\begin{lemma}[$S_i$ slowly varying]\label{lemma:uniformity_s_nonneg}
When $f_S$ is a slowly varying function in any interval of length~$\max \{T-\tau, \tau\}$, the \emph{measured} contact process can be approximated as an alternating renewal process $\{ \tilde{C}_i, \tilde{S}_i\}$ and the displacement of the \emph{beginning} of a contact within its ON (or OFF) interval can be approximated as uniformly distributed in that interval.
\end{lemma}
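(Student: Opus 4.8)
The plan is to mirror the structure of the proof of Lemma~\ref{lemma:uniformity}, adapting it to the alternating-renewal (rather than pure-renewal) setting. The target is twofold: first, that the displacement of the \emph{beginning} of a real contact inside its containing ON or OFF interval is approximately $\mathit{Unif}$ on that interval; second, that this in turn makes $\{\tilde C_i,\tilde S_i\}$ an (approximate) alternating renewal process. The key observation is that, in the non-negligible-contact model, the contact process is governed by the sequence $\{C_i,S_i\}$, but the \emph{position} of a contact relative to the duty cycle only advances through the intercontact gaps $S_i$ plus the (bounded, and by the extended hypothesis also slowly-varying) contact lengths $C_i$. First I would fix a tagged contact, say the one beginning at absolute time $X_i=\sum_{j\le i}S_j+\sum_{j<i}C_j$, and track its phase $\varphi_i := X_i \bmod T$. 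I would show that, conditionally on the phase at the previous (detected or missed) contact, the new phase is obtained by adding $C_{i-1}+S_i \pmod T$, and that because $f_S$ varies slowly over any window of length $\max\{\tau,T-\tau\}$ — and hence, a fortiori, the conditional law of the incremental shift spread over such a window is close to flat — the phase mixes to the uniform law on $[0,T)$ up to an $O(\varepsilon)$ error, where $\varepsilon$ quantifies the ``$\sim 1$'' in the slowly-varying definition. Restricting the uniform-on-$[0,T)$ phase to the event ``beginning lands in an ON interval'' (resp.\ OFF interval) yields the conditional uniform law on $[0,\tau)$ (resp.\ on $[0,T-\tau)$), which is exactly the claimed displacement distribution.

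Next I would promote the displacement statement to the renewal claim. Once the beginning of each real contact is uniformly placed in its interval independently of the past, the pair $(\tilde C_i,\tilde S_i)$ is a deterministic function of (a) that uniform displacement, (b) the current contact length $C_i$, and (c) the subsequent intercontact/contact lengths up to the next detection — all of which are, by the i.i.d.\ assumption on $\{C_i,S_i\}$ together with the just-established ``fresh uniform start'', independent of everything preceding the previous detection. This is precisely the regeneration argument used for Lemma~\ref{lemma:uniformity}: the process forgets its past at each detection, so $\{\tilde C_i,\tilde S_i\}$ is i.i.d.\ across $i$, i.e.\ an alternating renewal process. Here one must be slightly careful to check that $\tilde C_i$ and $\tilde S_i$ are allowed to be dependent \emph{for the same $i$} (they are, just as $C_i$ and $S_i$ are) while being independent across different $i$; this matches the definition of alternating renewal recalled in Section~\ref{sec:contact_process}.

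The main obstacle is controlling the error introduced by contact durations in the mixing argument. In the negligible-contact case the phase increment is just $S_i \bmod T$, and slow variation of $f_S$ over length $\max\{\tau,T-\tau\}$ directly gives near-uniformity. With non-negligible contacts the increment is $(C_{i-1}+S_i)\bmod T$, so one needs that the \emph{convolved} density — or at least its restriction to each ON/OFF sub-interval — is still nearly flat. The clean way around this is to invoke the extended hypothesis (stated in Section~\ref{sec:ict_non_negligible_preliminaries}) that $f_C$ is also slowly varying on intervals of length $\max\{\tau,T-\tau\}$: a convolution of two densities each nearly constant on such a window is itself nearly constant there, so the argument goes through with an error that is at worst the sum of the two individual slow-variation errors. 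I would therefore state this as the one genuinely new ingredient relative to Lemma~\ref{lemma:uniformity} and otherwise refer to that lemma's proof (and to \cite{biondi:what_you_lose:tr}) for the remaining, essentially identical, details — which is consistent with the paper's stated intention to omit the full proof here.
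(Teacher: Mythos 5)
Your overall strategy (mirror Lemma~\ref{lemma:uniformity}, then regenerate at each detection to get the alternating renewal structure) is the one the paper intends --- it explicitly omits the proof as ``a repetition of the same concepts discussed in Lemma~\ref{lemma:uniformity}.'' However, your third paragraph introduces a genuine mismatch with the statement being proved. Lemma~\ref{lemma:uniformity_s_nonneg} assumes \emph{only} that $f_S$ is slowly varying; the hypothesis that $f_C$ is also slowly varying belongs to Corollary~\ref{coro:uniformity_sc_nonneg}, and the paper stresses that it ``is not needed for the independence of $\{\tilde{C}_i,\tilde{S}_i\}$'' nor for the uniformity of the \emph{beginning} of a contact. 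By tracking the phase of the contact's \emph{start} time $X_i$ and writing the increment as $(C_{i-1}+S_i)\bmod T$, you force yourself to flatten a convolution $f_C * f_S$, and you then ``invoke the extended hypothesis'' on $f_C$ to do so. That proves the lemma only under a strictly stronger assumption than the one stated, so as written the argument does not establish the lemma.

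The fix is structural rather than technical: condition on the \emph{end} time $Y_{i-1}$ of the previous contact (whatever its phase in the duty cycle) rather than on its beginning. The start of the next contact is $X_i = Y_{i-1} + S_i$, so, given $Y_{i-1}=y$, the density of $X_i$ at $t$ is $f_S(t-y)$, which is approximately constant as $t$ ranges over any single ON or OFF sub-interval (each of length at most $\max\{\tau,T-\tau\}$) purely by the slow variation of $f_S$. No convolution with $C_{i-1}$ ever enters, which is exactly why the lemma needs no hypothesis on $f_C$; the $f_C$ assumption is required only when one wants the displacement of the contact's \emph{end} point $X_i + C_i$ to be uniform, which is the content of Corollary~\ref{coro:uniformity_sc_nonneg}, not of this lemma. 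With that correction your regeneration argument in the second paragraph (including the remark that $\tilde{C}_i$ and $\tilde{S}_i$ may be dependent for the same $i$ but independent across $i$) goes through as you describe.
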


\addCB{For our derivations, it may also be convenient to assume that not only the beginning of a real contact but also its end is approximately uniformly distributed in the interval in which it takes place. Thus, if needed, we will leverage the additional assumption that the PDF of $C_i$ is slowly varying in intervals of length $\max \{T-\tau, \tau\}$ (note, though, that this assumption is not needed for the independence of $\{ \tilde{C}_i, \tilde{S}_i\}$).}

\begin{corollary}[$S_i$ and $C_i$ slowly varying]\label{coro:uniformity_sc_nonneg}
When \emph{both} $f_S$ and $f_C$ are slowly varying functions in any interval of length~$\max \{T-\tau, \tau\}$, the displacement of the \emph{beginning (end)} of a contact within its ON (OFF) interval can be approximated as uniformly distributed in that interval.
\end{corollary}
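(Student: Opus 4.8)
The plan is to mimic the renewal/regeneration argument behind Lemma~\ref{lemma:uniformity} and Lemma~\ref{lemma:uniformity_s_nonneg}, adding one convolution step to transport uniformity from the \emph{beginning} of a contact to its \emph{end}. The statements about the beginning of a contact (whether it falls in an ON or in an OFF interval) are nothing new: they are exactly Lemma~\ref{lemma:uniformity_s_nonneg}, which only needs $f_S$ to be slowly varying on intervals of length $\max\{\tau,T-\tau\}$. So the only thing left to prove is that, under the additional hypothesis that $f_C$ is slowly varying on the same scale, the \emph{end} of a contact is approximately uniform in the ON (resp. OFF) interval in which it falls.

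First I would condition on the interval in which a given contact begins and on the displacement $Z$ of its beginning within that interval; write $L$ for the length of that interval, so $L=\tau$ if it is an ON interval and $L=T-\tau$ if it is an OFF interval, and in either case $L\le\max\{\tau,T-\tau\}$. By Lemma~\ref{lemma:uniformity_s_nonneg}, $Z$ is approximately $Unif(0,L)$. Measured from the start of that same interval, the end of the contact is located at $W=Z+C$, where $C$ is the contact duration. Since the duty-cycle process is independent of the contact process, and since the start time of the contact does not involve its own duration $C$, we may treat $Z$ and $C$ as (approximately) independent, so the density of $W$ is the convolution
\begin{equation}
f_W(w)=\frac1L\int_{w-L}^{w} f_C(v)\,dv .
\end{equation}
This is a sliding-window average of $f_C$ over a window of width $L\le\max\{\tau,T-\tau\}$.

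The key step — and the one I expect to be the main obstacle, because it is where the informal ``slowly varying'' notion has to be pinned down — is to argue that averaging $f_C$ over a window no wider than its slow-variation scale does not create fast variation: if $f_C(x_1)/f_C(x_2)\sim1$ for $x_1,x_2$ in any interval of length $\max\{\tau,T-\tau\}$, then $f_W\sim f_C$ pointwise and, in particular, $f_W(w_1)/f_W(w_2)\sim1$ for $w_1,w_2$ in any interval of that same length. Granting this, the argument closes exactly as in Lemma~\ref{lemma:uniformity}: partition the time axis (shifted so that $0$ is the start of the beginning interval) into the periodic ON/OFF intervals; the end of the contact falls into one of them, call it $J$, with $|J|\in\{\tau,T-\tau\}$; conditioning on $\{W\in J\}$, the displacement of the end within $J$ has density proportional to $f_W$ restricted to $J$, which is approximately constant on $J$, hence the displacement is approximately $Unif(0,|J|)$. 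Finally I would de-condition over the choice of $J$, over $Z$, and over the beginning interval: since the conclusion holds for every conditioning event, and since all ON intervals have length $\tau$ and all OFF intervals length $T-\tau$, the displacement of the end of a contact within its ON interval is approximately $Unif(0,\tau)$ and within its OFF interval approximately $Unif(0,T-\tau)$, which together with the beginning statements is the claim.

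One residual point to be careful about is the independence of $C_i$ from the displacement $Z_i$ of the beginning of the same contact: in the alternating-renewal model $C_i$ and $S_i$ may be dependent for the same $i$, so strictly speaking one should run the convolution above with the conditional density $f_{C_i\mid Z_i}$ in place of $f_{C_i}$ and require that it, too, be slowly varying on the scale $\max\{\tau,T-\tau\}$. This is a harmless strengthening that is implicit in the regime considered throughout the paper and leaves every step unchanged.
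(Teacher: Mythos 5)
Your proof is correct and follows essentially the same route as the paper: the beginning-of-contact statement is inherited directly from Lemma~\ref{lemma:uniformity_s_nonneg}, and the end-of-contact statement follows because the density of the end time is a window average of $f_C$ over a width at most $\max\{\tau,T-\tau\}$, hence approximately constant on each ON/OFF interval, which is exactly the slowly-varying-implies-uniform argument of Lemma~\ref{lemma:uniformity}. Your closing caveat about the possible dependence of $C_i$ on the displacement of the contact's start (via $S_i$) is a sensible refinement that the paper leaves implicit.
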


Exploiting the fact that the measured contact process is approximately renewal under the conditions in Lemma~\ref{lemma:uniformity_s_nonneg}, we can again focus on what happens between two detected contacts. Differently from the previous case with negligible contact duration, the detected real contact corresponding to the measured contact can now start also in an OFF interval. For this reason, operating a shift of index as we have done in the previous section, the starting point of our analysis will be the interval $[0, T]$ in which we know a detected contact has taken place, where its OFF interval is $[0, T-\tau]$ and its ON interval $[T-\tau, T]$. All following OFF (ON) intervals will be of type $[(i-1)T, iT-\tau]$ ($[iT-\tau, iT]$) with $i > 1$. Based on Lemma~\ref{lemma:uniformity_s_nonneg}, we also know that the displacement of the beginning of a contact in an ON interval is distributed as $Z^{ON}\sim Unif(0, \tau)$ while that of a contact in an OFF interval as $Z^{OFF}\sim Unif(0, T-\tau)$. Similarly, under Corollary~\ref{coro:uniformity_sc_nonneg}, the end of a contact in ON and OFF intervals will be displaced as $Z^{ON}$ and $Z^{OFF}$, respectively. Given the distribution of displacements, we can easily quantify (Lemma~\ref{lemma:nonneg_ps} and Corollary~\ref{coro:nonneg_pe} below) the probability of a contact beginning (ending) in an ON interval, and that of a contact beginning (ending) in an OFF interval, which we denote as $p_s^{ON}, p_e^{ON}, p_s^{OFF}, p_e^{OFF}$, respectively. 

\begin{lemma}[Probability $p_s^{ON}$/$p_s^{OFF}$]\label{lemma:nonneg_ps}
When $f_S$ is a slowly varying function in intervals of length~$\max \{T-\tau, \tau\}$, the probability $p_s^{ON}$ of a contact beginning in an ON interval and the probability  $p_s^{OFF}$ of a contact beginning in an OFF interval are equal to $\frac{\tau}{T}$ and $\frac{T-\tau}{T}$, respectively. 
\end{lemma}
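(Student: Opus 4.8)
The plan is to compute $p_s^{ON}$ directly as a long-run fraction of contacts that begin in an ON interval, using the renewal structure established in Lemma~\ref{lemma:uniformity_s_nonneg} together with the slowly-varying assumption on $f_S$. First I would set up the picture: by Lemma~\ref{lemma:uniformity_s_nonneg}, under the slowly-varying condition the measured contact process is (approximately) alternating renewal, and — more importantly for this computation — the slowly-varying property of $f_S$ means that the point in time at which a real contact begins is, modulo $T$, approximately uniformly distributed over the whole period $[0,T)$. This is the same regeneration argument used to prove Lemma~\ref{lemma:uniformity}: if $f_S$ hardly changes over any window of length $\max\{\tau, T-\tau\}$ (hence over any window of length $T$), then knowing where the previous contact fell carries essentially no information about where within the current period the next contact falls, so the phase $X_i \bmod T$ is asymptotically $Unif(0,T)$.

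Next I would simply partition $[0,T)$ into the OFF part $[0, T-\tau)$ of length $T-\tau$ and the ON part $[T-\tau, T)$ of length $\tau$ (using the index shift adopted in Section~\ref{sec:ict_non_negligible_preliminaries}, in which the reference period has its OFF interval first). Since the phase of the contact start is uniform over $[0,T)$, the probability that it lands in the ON interval is the ratio of lengths, $p_s^{ON} = \tau/T$, and the probability that it lands in the OFF interval is $p_s^{OFF} = (T-\tau)/T$. These are complementary and sum to one, as they must, since a contact must begin either in an ON or in an OFF interval. Note that this argument uses only the slowly-varying property of $f_S$ — not of $f_C$ — because it concerns only the \emph{beginning} of a contact, consistent with the statement of the lemma (the analogous claim for contact \emph{endings}, which would need $f_C$ slowly varying, is deferred to Corollary~\ref{coro:nonneg_pe}).

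The only delicate point — and the step I would flag as the main obstacle, though it is more a matter of careful bookkeeping than of real difficulty — is justifying rigorously that the contact-start phase is uniform over the full period, rather than merely ``slowly varying is enough.'' The clean way is to invoke the already-proven Lemma~\ref{lemma:uniformity}/\ref{lemma:uniformity_off} machinery: those lemmas established uniformity of the displacement within an ON interval and within an OFF interval separately, with the two events occurring in proportion to the respective interval lengths times the local density; when $f_S$ varies slowly over a full period, the local density is effectively the same in both sub-intervals, so the unconditional start time is uniform over $[0,T)$ and the weights collapse to $\tau/T$ and $(T-\tau)/T$. I would therefore write the proof as a short reduction to Lemma~\ref{lemma:uniformity_s_nonneg} plus this length-ratio observation, noting (as the paper does for Corollary~\ref{coro:N_geometric}) that in the strict limit where $f_S$ is constant on windows of length $\max\{\tau,T-\tau\}$ the approximation becomes exact.
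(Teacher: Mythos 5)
Your argument is correct and is essentially the same as the paper's: the paper likewise invokes Lemma~\ref{lemma:uniformity_s_nonneg} to conclude that the start of a contact is uniformly distributed over the duty-cycle period, so the probabilities reduce to the length ratios $\tau/T$ and $(T-\tau)/T$. Your version is merely a more carefully spelled-out rendering of the paper's two-sentence proof, including the (correct) observation that only $f_S$, not $f_C$, needs to be slowly varying here.
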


\begin{proof}
When Lemma~\ref{lemma:uniformity_s_nonneg} hold, the starting point of contacts are uniformly distributed in the duty cycling interval in which they take place. Thus, the probability of having an event in an ON or OFF interval depends on the length of that interval with respect to the overall duty cycle interval.
\end{proof}

\begin{corollary}[Probability $p_e^{ON}$/$p_e^{OFF}$]\label{coro:nonneg_pe}
If also $f_C$ is slowly varying in intervals of length~$\max \{T-\tau, \tau\}$, then the probability $p_e^{ON}$ of a contact ending in an ON interval and the probability  $p_e^{OFF}$ of a contact ending in an OFF interval are equal to $\frac{\tau}{T}$ and $\frac{T-\tau}{T}$, respectively.
\end{corollary}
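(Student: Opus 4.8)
The plan is to replicate the proof of Lemma~\ref{lemma:nonneg_ps}, replacing the \emph{beginning} of a contact with its \emph{end} and invoking Corollary~\ref{coro:uniformity_sc_nonneg} (which additionally assumes $f_C$ slowly varying) in place of Lemma~\ref{lemma:uniformity_s_nonneg}. The statement reduces to the following claim: under the hypotheses, the instant at which a contact ends, reduced modulo the period $T$, is approximately $Unif(0,T)$. Once this is granted, $p_e^{ON}$ is simply the probability that a $Unif(0,T)$ random variable falls in the length-$\tau$ sub-interval corresponding to the ON state, i.e.\ $p_e^{ON}=\frac{\tau}{T}$, and likewise $p_e^{OFF}=\frac{T-\tau}{T}$, closing the argument with a one-line length-ratio computation.

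To establish the mod-$T$ uniformity of the contact end I would argue as follows. From (the proof of) Lemma~\ref{lemma:nonneg_ps}, the \emph{start} of a contact, reduced modulo $T$, is approximately $Unif(0,T)$: its within-interval displacement is $Z^{OFF}\sim Unif(0,T-\tau)$ or $Z^{ON}\sim Unif(0,\tau)$, and the two cases carry weights $\frac{T-\tau}{T}$ and $\frac{\tau}{T}$, which glue together into the constant density $\frac{1}{T}$ on $[0,T)$. The end of that same contact occurs $C$ seconds after its start, so, reducing modulo $T$, the end position equals $\big((\text{start}\bmod T)+C\big)\bmod T$ --- an identity that holds whether or not $C$ exceeds $T$, since only $C\bmod T$ matters. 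Because the duty-cycle process is independent of the real contact process, the phase at the contact's start is independent of that contact's own duration $C$; hence the mod-$T$ sum of a $Unif(0,T)$ variable and an independent quantity is again $Unif(0,T)$, which is exactly the claim. Equivalently, the result can be read off Corollary~\ref{coro:uniformity_sc_nonneg} directly: conditioned on the interval type in which a contact ends, the end displacement is uniform there, and the slowly-varying assumptions on $f_S$ and $f_C$ force the unconditional ON/OFF split to be proportional to the interval lengths, so the pieces reassemble into the uniform density $\frac{1}{T}$.

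The only genuinely delicate point --- the rest being routine --- is this unconditional split: showing that a contact ends in an ON interval with probability exactly $\frac{\tau}{T}$, and not some value biased by the random shift $C$. This is where the extra hypothesis on $f_C$ is used: slow variation of $f_C$ over intervals of length $\max\{T-\tau,\tau\}$ guarantees that the end-position density is flat over a period at the relevant granularity, so the shift by $C$ does not favour one interval type over the other; without it, a rapidly varying contact duration could concentrate contact ends preferentially in ON or OFF sub-intervals. As in Lemma~\ref{lemma:nonneg_ps} and Theorem~\ref{theo:pdf_n_approx}, the statement is an approximation that becomes exact in the slowly-varying limit.
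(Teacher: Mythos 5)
Your proposal is correct and takes essentially the same route as the paper's own (very terse) proof: invoke Corollary~\ref{coro:uniformity_sc_nonneg} to get uniformity of the contact end-point within the ON/OFF interval in which it falls, then conclude, exactly as in the proof of Lemma~\ref{lemma:nonneg_ps}, that the probability of ending in each interval type is proportional to its length. The additional mod-$T$ convolution argument you supply is a harmless (and somewhat more explicit) elaboration of the same idea.
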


\begin{proof}
It follows from Corollary~\ref{coro:uniformity_sc_nonneg} that the end point of contacts are uniformly distributed in the duty cycling interval in which they take place. Hence the same argument used in the previous proof can be exploited.
\end{proof}

%========================================
\subsection{Measured contact times}
\label{sec:ict_non_negligible_measured_contact}

We start our derivation from the \emph{measured contact times}, which are defined as the time intervals during which the two nodes are in contact and both in the ON state. 
%Thus, the measured contact times are given by the overlap between the CONTACT state of the contact process and the ON state of the joint duty cycle. There are three ways in which this intersection can take place. In the simplest case, the contact starts during the ON interval (based on our discussion above, this happens with probability $p_s^{ON}$). Otherwise, it means that the contact started during an OFF interval (probability $p_s^{OFF}$). From here, either the contact lasts until the next ON interval or it ends in the same OFF interval. In the latter case, the contact is not detected and it does not contribute to the measured contacts. In the former case, the contact is instead detected. 
%%
%Detected 
Therefore, real contact times, by definition, overlap with at least one ON interval. We denote with $H$ the number of ON intervals spanned by the detected contact. \addCB{The distribution of $H$ is very important for the rest of the analysis.} The probability that $H$ takes a specific value $h$ depends on $\tau$, $T$, and on the distribution of the contact time, as intuition suggests and as we show in Lemma~\ref{lemma:pmf_h} below.

\begin{lemma}[PMF of $H$]\label{lemma:pmf_h}
The PMF of H, defined as the number of ON intervals spanned by a detected contact, is given by the following:
\begin{equation*}
\mathbb{P}\{H = 1\} = p_s^{ON} \mathbb{P}\left(Z^{ON} + C < T \right) + p^{OFF} \mathbb{P}\left(Z^{OFF} + C^{hit} < 2T -\tau\right),
\end{equation*}
\begin{multline}\label{eq:pmf_h}
\mathbb{P}\{H = h\} = p_s^{ON} \mathbb{P}\left((h-1)T < Z^{ON} + C < h T \right) + \\
+ p_s^{OFF} \mathbb{P}\left((h-1)T-\tau < Z^{OFF} + C^{hit} < hT -\tau\right), \quad \forall h >1,
\end{multline}
where $C^{hit}$ can be computed as follows:
\begin{equation}\label{eq:pdf_c_hit}
\mathbb{P}\left(C^{hit} = c\right) = f_C(c) \int_{T-\tau-c}^{T-\tau} \frac{1}{1-F_{C}(T-\tau-z)} \frac{T-\tau}{T} \textrm{d} z.
\end{equation}
\end{lemma}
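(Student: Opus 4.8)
The plan is to condition on whether the detected contact begins in an ON interval or in an OFF interval, which by Lemma~\ref{lemma:nonneg_ps} happens with probabilities $p_s^{ON}=\frac{\tau}{T}$ and $p_s^{OFF}=\frac{T-\tau}{T}$ respectively, and then, in each case, express the event $\{H=h\}$ in terms of how far the contact duration $C$ (or a suitably conditioned version of it) reaches past the interval in which the contact started. By Lemma~\ref{lemma:uniformity_s_nonneg}, when the contact starts in an ON interval its displacement from the start of that ON interval is $Z^{ON}\sim Unif(0,\tau)$, and when it starts in an OFF interval its displacement is $Z^{OFF}\sim Unif(0,T-\tau)$. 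Using the shifted indexing set up in Section~\ref{sec:ict_non_negligible_preliminaries} (where the reference OFF interval is $[0,T-\tau]$ and the reference ON interval is $[T-\tau,T]$, with subsequent ON intervals $[iT-\tau,iT]$), I would translate "the contact spans exactly $h$ ON intervals'' into a statement about which of the windows $[(h-1)T,hT)$ (or the OFF-shifted analogue $[(h-1)T-\tau,hT-\tau)$) the end of the contact falls into.

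The ON-start case is the cleaner one. If the contact starts in an ON interval at displacement $Z^{ON}$, then it occupies that first ON interval, and it spans exactly $h$ ON intervals precisely when its endpoint $Z^{ON}+C$ lands in the $h$-th "period block''. Placing the start of the ON interval at the origin, the endpoint lies in $[0,T)$ iff $H=1$, and more generally in $[(h-1)T,hT)$ iff $H=h$; this gives the two ON-start terms $p_s^{ON}\,\mathbb{P}(Z^{ON}+C<T)$ and $p_s^{ON}\,\mathbb{P}((h-1)T<Z^{ON}+C<hT)$. The OFF-start case requires the extra observation that \emph{only detected contacts} are being considered: a contact that starts in an OFF interval must, in order to be detected at all, survive until the following ON interval. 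Hence the relevant residual duration is not $C$ itself but the contact duration \emph{conditioned on reaching the next ON boundary}, which is exactly the random variable $C^{hit}$. I would derive \eqref{eq:pdf_c_hit} by writing, for a contact starting at displacement $z$ in the reference OFF interval $[0,T-\tau]$, the conditional density of its length given that it exceeds $T-\tau-z$ (the time remaining to the ON boundary), namely $f_C(c)/\bigl(1-F_C(T-\tau-z)\bigr)$, and then averaging over $z\sim Unif(0,T-\tau)$ (which contributes the factor $\frac{1}{T-\tau}$, cancelling against the $T-\tau$ appearing because the integration range has that length; the net $\frac{T-\tau}{T}$ in \eqref{eq:pdf_c_hit} is then the normalising probability $p_s^{OFF}$ that a contact starts in the OFF interval at all — I would double check this bookkeeping carefully). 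Once $C^{hit}$ is in hand, an OFF-started detected contact spans exactly $h$ ON intervals iff its endpoint, measured from the start of the reference OFF interval, falls in $[(h-1)T-\tau,hT-\tau)$: for $h=1$ the endpoint must lie before $2T-\tau$ (it has already passed the first ON boundary at $T-\tau$ by construction of $C^{hit}$, and it must not reach the second ON interval starting at $2T-\tau$), giving $\mathbb{P}(Z^{OFF}+C^{hit}<2T-\tau)$; for general $h$ it must lie in $((h-1)T-\tau,\,hT-\tau)$.

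The main obstacle I anticipate is getting the OFF-start bookkeeping exactly right: one must be careful about (i) the distinction between $C$ and $C^{hit}$ and the fact that the conditioning event depends on the starting displacement $z$, so the averaging over $z$ must be done \emph{inside} the conditional density rather than after; (ii) the shifted indexing, i.e.\ whether "end of contact in block $h$'' should be stated with $T$ or with $T-\tau$ offsets, and making sure the $h=1$ boundary case $2T-\tau$ is consistent with the general formula at $h=1$; and (iii) verifying that $C^{hit}$ as defined in \eqref{eq:pdf_c_hit} is genuinely a probability density (integrates to one) — this is where the interplay between the $\frac{1}{1-F_C(T-\tau-z)}$ factor and the uniform averaging must balance. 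The ON/OFF decomposition itself, and the translation of "$H=h$'' into an endpoint-location event, are routine given Lemmas~\ref{lemma:uniformity_s_nonneg} and~\ref{lemma:nonneg_ps}; for the full details of the uniformity approximation I would refer to the argument of Lemma~\ref{lemma:uniformity}, as the paper does.
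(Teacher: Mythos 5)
Your proposal follows essentially the same route as the paper's proof: conditioning on whether the detected contact starts in an ON or an OFF interval (with weights $p_s^{ON}$ and $p_s^{OFF}$ from Lemma~\ref{lemma:nonneg_ps}), using the uniformity of the starting displacement from Lemma~\ref{lemma:uniformity_s_nonneg}, translating the event $\{H=h\}$ into the location of the contact's endpoint, and introducing $C^{hit}$ as the contact duration conditioned on surviving to the next ON boundary in the OFF-start case. The bookkeeping subtleties you flag --- the normalisation of $C^{hit}$ in Equation~\ref{eq:pdf_c_hit} and the consistency of the $h=1$ window $2T-\tau$ with the general $h>1$ formula --- are precisely the details the paper defers to its technical report, so your outline is a faithful reconstruction of the intended argument.
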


\begin{proof}
The detailed proof can be found in~\cite{biondi:what_you_lose:tr}. Its rationale is to find the probability associated with the different combinations under which a given event can occur. For example, case $H=1$ occurs either when the contact starts in an ON interval and ends in the same ON interval or when it starts in an OFF interval, lasts until the next ON  interval and ends in that same ON interval. The probabilities of these two combinations of events correspond to the first and second terms in the summation for case $H=1$ in Equation~\ref{eq:pmf_h}.
\end{proof}

Exploiting the PMF of $H$, we can now derive the measured contact time in Theorem~\ref{theo:detected_ct} below. Figure~\ref{fig:combinations} shows some possible instances of the problem when $H=1$. Note that each detected contact spanning $H$ ON intervals introduces $H$ samples of measured contact duration. In the following lemma, we investigate how these samples are characterised and how frequently they occur.
\begin{theorem}[Measured contact duration]\label{theo:detected_ct}
The measured contact time can be approximated with the following expression:
{\arraycolsep=0.5pt\def\arraystretch{2.2}
%\tiny
\begin{equation}\label{eq:detected_ct}
\tilde{C} = \left\{ \begin{array}{lr}
C^{short} &  \frac{\mathbb{P}(H=1)}{\sum_{h=1}^{\infty} h \mathbb{P}(H=h)} \frac{\tau}{T} \frac{\mathbb{P}(Y^{ON} \le \tau)}{\mathbb{P}(Y^{ON} \le T)}\\
C^{res}  & \frac{\mathbb{P}(H=1)}{\sum_{h=1}^{\infty} h \mathbb{P}(H=h)} \frac{\tau(T-\tau)}{T} \frac{\mathbb{P}(T-\tau < Y^{OFF} \le T)}{\mathbb{P}(T-\tau < Y^{OFF} \le 2T - \tau)}\\
Z^{ON}  & \frac{\mathbb{P}(H=1) \frac{\tau}{T} \left(1- \frac{\mathbb{P}(Y^{ON} \le \tau)}{\mathbb{P}(Y^{ON} \le T)}\right)}{\sum_{h=1}^{\infty} h \mathbb{P}(H=h)} + \frac{\mathbb{P}(H\geq 2) \left(\frac{2\tau}{T}\right)}{\sum_{h=1}^{\infty} h \mathbb{P}(H=h)}\\
\tau  & \frac{\mathbb{P}(H=1) \frac{(T-\tau)}{T} \left(1-  \frac{\mathbb{P}(T-\tau < Y^{OFF} \le T)}{\mathbb{P}(T-\tau < Y^{OFF} \le 2T - \tau)} \right)}{\sum_{h=1}^{\infty} h \mathbb{P}(H=h)} + \frac{\mathbb{P}(H\geq 2) \left(\frac{2 (T-\tau)}{T}\right)+ \sum_{h=3}^{\infty}(h-2) \mathbb{P}(H=h)}{\sum_{h=1}^{\infty} h \mathbb{P}(H=h)}
\end{array}\right., \normalsize
\end{equation}
}
where the PMF of $H$ is given in Lemma~\ref{lemma:pmf_h}, $Y^{ON} = Z^{ON} + C$, $Y^{OFF} = Z^{OFF} + C$, and the distribution of $C^{short}$ and $C^{res}$ can be computed as follows:
\begin{equation}\label{eq:pdf_c_short}
\mathbb{P}\left(C^{short} = c\right) =  \left\{ \begin{array}{lr}
 \frac{1}{\tau} f_C(c) \int_c^{\tau} \frac{1}{F_C(u)} \textrm{d} u & 0 < c < \tau \\
0 & \textrm{otherwise}
\end{array}\right., 
\end{equation}
\begin{equation}\label{eq:pdf_c_res}
\mathbb{P}\left(C^{res} \le c\right) =  \left\{ \begin{array}{lr}
\frac{F_{Y^{OFF}}(c + T - \tau) - F_{Y^{OFF}}(T - \tau) }{F_{Y^{OFF}}(T) - F_{Y^{OFF}}(T - \tau)} & 0 < c < \tau \\
0 & \textrm{otherwise}
\end{array}\right.. 
\end{equation}
\end{theorem}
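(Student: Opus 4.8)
The plan is to condition on the number $H$ of ON intervals overlapped by the detected real contact — whose PMF is given by Lemma~\ref{lemma:pmf_h} — and to observe that such a contact contributes exactly $H$ samples to the empirical distribution of the measured contact duration, one per overlapped ON interval. Consequently, in steady state the law of $\tilde{C}$ is the mixture of the laws of these per-interval samples, each weighted by $1/\mathbb{E}[H]$ with $\mathbb{E}[H]=\sum_{h\ge1}h\,\mathbb{P}(H=h)$; this explains the common denominator appearing in every branch of Equation~\ref{eq:detected_ct}. I would then split the $H$ samples by position: the \emph{first} overlapped ON interval, the $\max\{H-2,0\}$ \emph{interior} ones, and (only when $H\ge2$) the \emph{last} one.

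First I would dispose of the easy pieces. Every interior ON interval is entirely covered by the contact, hence contributes the deterministic value $\tau$; for $H=h\ge3$ there are $h-2$ of them, which yields the term $\sum_{h\ge3}(h-2)\mathbb{P}(H=h)$ in the weight of the $\tau$-branch. For the last ON interval (case $H\ge2$), Corollary~\ref{coro:uniformity_sc_nonneg} says the end of the contact sits at a displacement $Z^{ON}\sim Unif(0,\tau)$ when it ends inside that interval, and the interval is fully covered (sample $=\tau$) when the contact ends in the subsequent OFF interval; by Corollary~\ref{coro:nonneg_pe} these two events carry the relative weights $\tau/T$ and $(T-\tau)/T$. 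The first overlapped ON interval behaves the same way whenever the contact \emph{does not} start there, i.e.\ it is entered from the left, so the $Z^{ON}$-branch and the $\tau$-branch each collect one further contribution with the coefficients $2\tau/T$ and $2(T-\tau)/T$ visible in Equation~\ref{eq:detected_ct}.

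The interesting branches, $C^{short}$ and $C^{res}$, come from the first overlapped ON interval when the contact is local to it, i.e.\ $H=1$. Using Lemma~\ref{lemma:uniformity_s_nonneg} I would split on whether the real contact starts in an ON interval (probability $p_s^{ON}=\tau/T$ by Lemma~\ref{lemma:nonneg_ps}, with start displacement $Z^{ON}$) or in an OFF interval (probability $p_s^{OFF}=(T-\tau)/T$, with displacement $Z^{OFF}$). In the ON-start, $H=1$ case the contact ends inside the same ON interval iff $Y^{ON}=Z^{ON}+C\le\tau$ — conditional probability $\mathbb{P}(Y^{ON}\le\tau)/\mathbb{P}(Y^{ON}\le T)$ given $H=1$ — and then the sample is the real contact length conditioned on fitting; a Bayes-rule computation, in which conditioning on the contact actually being attributed to this interval produces the $1/F_C$ factor, gives exactly the density $C^{short}$ of Equation~\ref{eq:pdf_c_short}, while otherwise the contact spills into the next OFF interval and the sample is $\tau-Z^{ON}\stackrel{d}{=}Z^{ON}$. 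In the OFF-start, $H=1$ case the contact enters the ON interval at its beginning; if it ends within that interval the sample is the part of $Y^{OFF}=Z^{OFF}+C$ lying in $[T-\tau,T]$, with conditional CDF Equation~\ref{eq:pdf_c_res}, and otherwise the whole ON interval is covered, a further $\tau$ contribution. Collecting all branches, multiplying each by $1/\mathbb{E}[H]$ and checking that the weights sum to one recovers Equation~\ref{eq:detected_ct}; the companion formula Equation~\ref{eq:pdf_c_hit} for $C^{hit}$ used inside Lemma~\ref{lemma:pmf_h} follows by the same conditioning, with $1/(1-F_C)$ arising from conditioning an OFF-starting contact on surviving long enough to reach the ON interval.

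The main obstacle is not any single computation but the combinatorics: the sample of $\tilde{C}$ decomposes along (start in ON vs.\ OFF) $\times$ ($H=1$ vs.\ $H\ge2$) $\times$ (contact ends in an ON vs.\ an OFF interval), and one must verify that these sub-cases form a genuine partition, that no double counting occurs when the first and last overlapped ON intervals coincide versus when they do not, and that the normalisation by $\mathbb{E}[H]$ yields total mass $1$. The second delicate point is the correct treatment of the conditioning on detection inside the renewal description of the real contact process, which is exactly what turns the naive uniform-placement densities into the expressions carrying the $1/F_C$ and $1/(1-F_C)$ weights in Equations~\ref{eq:pdf_c_short} and~\ref{eq:pdf_c_hit}.
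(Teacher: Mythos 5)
Your proposal is correct and follows essentially the same route as the paper's own (sketched) proof: both decompose the measured contact into the four overlap types of Figure~\ref{fig:combinations} ($C^{short}$, $C^{res}$, $Z^{ON}$, $\tau$), weight them via the per-position contributions of the $H$ spanned ON intervals normalised by $\mathbb{E}[H]$, and obtain the $1/F_C$ factors in Equations~\ref{eq:pdf_c_short} and~\ref{eq:pdf_c_hit} by conditioning on the contact fitting in (or surviving to) the relevant interval while keeping the start displacement uniform. The paper defers the weight computations to the technical report, so your sketch is, if anything, slightly more explicit about the $H$-conditioning and the $1/\mathbb{E}[H]$ normalisation that the theorem's weights presuppose.
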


\begin{proof}
The complete proof is available at~\cite{biondi:what_you_lose:tr}. In the following we will provide an intuitive explanation of its derivations. We start by noting that there are basically four ways, illustrated in Figure~\ref{fig:combinations}, in which a contact can intersect with an ON interval. The contact can be fully contained in the ON interval (case $a$), partially overlapping (cases $c,d$), or completely overlapping with the ON interval (case $b$).  In the latter case the measured contact is deterministically equal to $\tau$. In the other cases, the measured contacts are equal to $C_{short}$, $Z^{ON}$, and $C_{res}$ respectively. $Z^{ON}$ is uniformly distributed in the ON interval, while quantities $C_{short}$ and $C_{res}$ are derived in~\cite{biondi:what_you_lose:tr}. 
\begin{figure}[t]
\begin{center}
\includegraphics[scale=0.4,angle=90]{./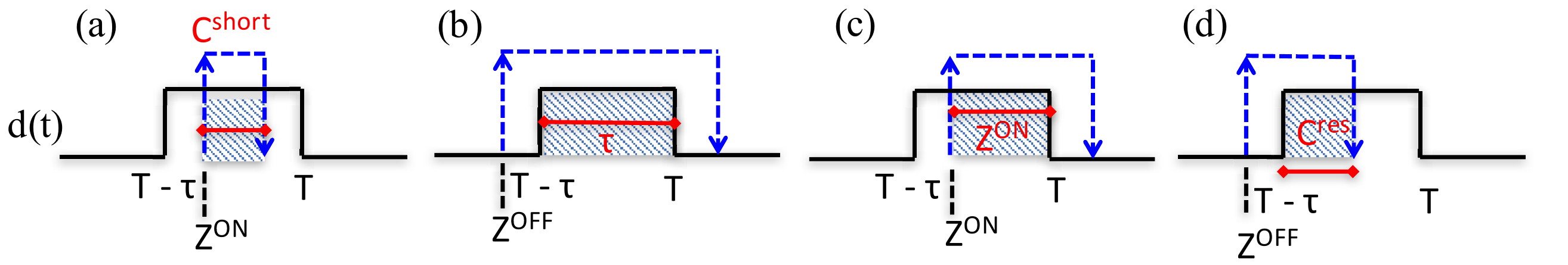}
\caption{Case $H=1$: types of overlapping.} %\vspace{-10pt}
\label{fig:combinations}
\end{center}
\end{figure}
Each of the four components weights differently in the distribution of the measured contact times. 
%Specifically, the weights of each component depend on the distribution of the contact time $C$ and on the distribution of $H$ (i.e., of the number of ON intervals intersecting with the contact). Intuitively, the partial overlap can only happen at the beginning and end of a contact, and it will affect more or less equally long and short contacts. Instead, total overlaps, i.e., measured contacts of length $\tau$, will be more frequent when contacts are longer, because all overlaps between the first one and the last one are necessarily of length $\tau$. 
Due to lack of space, the exact derivation of these weights is left to~\cite{biondi:what_you_lose:tr}.
\end{proof}

With the above lemma we are able to fully characterise the measured contact duration in a general duty cycling scenario. When the network is sparse, i.e., the density of users is so low that the probability that a user has more than one neighbour at a given time is low as well, we can derive a useful additional result. Specifically, in some practical applications like the one in~\cite{trifunovic2014adaptive}, users are required to refrain from entering the low-power mode of the duty cycle if a new contact is detected in the current ON interval. This is done to maximise the amount of information that can be transferred between user pairs. In the general case, it can be complex to model this scenario, since a newly detected contact between node $i$ and $j$ effectively alters the joint duty cycle of $i$ and $j$ with all other nodes (because the duration of the joint ON intervals depends on this newly detected contact). However, when the network is sparse, we can ignore the effect that a detected contact between $i$ and $j$ has on the joint duty cycles with all other nodes. 
%In fact, since nodes have at most one neighbour at a time, contacts are temporally disjoint, hence a larger ON interval due to a detected contact between $i$ and $j$ only affects nodes $i$ and $j$. 
Thus, we can derive the following lemma (the proof can be found in~\cite{biondi:what_you_lose:tr}). We assume, for simplicity, that after the contact ends, the two nodes synchronise with the joint duty cycling as originally planned.

\begin{lemma}\label{lemma:detected_ct_notoff}
When nodes remain active after a contact is detected, the measured contact time can be approximated with the following expression:
\begin{equation}\label{eq:detected_ct_notoff}
\tilde{C} = \left\{ \begin{array}{lr}
C &  \frac{\tau}{T}\\
C^{res*}  & \frac{T-\tau}{T}
\end{array}\right.,
\end{equation}
where the distribution of $C^{res*}$ can be computed as follows:
\begin{equation}\label{eq:pdf_c_resstar}
\mathbb{P}\left(C^{res*} \le c\right) =  \left\{ \begin{array}{lr}
\frac{F_{Y^{OFF}}(c + T - \tau) - F_{Y^{OFF}}(T - \tau) }{1 - F_{Y^{OFF}}(T - \tau)} & 0 < c < \tau \\
0 & \textrm{otherwise}
\end{array}\right.. 
\end{equation}
and $Y^{OFF}=Z^{OFF} + C$.
\end{lemma}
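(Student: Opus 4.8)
The plan is to condition on where the real contact underlying the measured one begins, relying on the alternating-renewal approximation and the uniformity of displacements granted by Lemma~\ref{lemma:uniformity_s_nonneg}. The key simplification with respect to Theorem~\ref{theo:detected_ct} is that, under the stay-active policy, the nodes never return to the OFF state while a contact is in progress, so each detected real contact produces exactly one measured contact and the only portion that can be lost is the prefix preceding the first ON instant the contact reaches. Hence only two situations arise, according to whether that real contact begins in an ON or in an OFF interval, and the proof essentially amounts to characterising $\tilde{C}$ in each and weighting them.

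If the real contact begins in an ON interval --- which by Lemma~\ref{lemma:nonneg_ps} happens with probability $p_s^{ON}=\tau/T$ --- the contact is detected immediately and the nodes stay active for its entire duration, so $\tilde{C}=C$; this is the first branch of Equation~\ref{eq:detected_ct_notoff}. If instead it begins in an OFF interval (probability $p_s^{OFF}=(T-\tau)/T$ by Lemma~\ref{lemma:nonneg_ps}), place that OFF interval at $[0,T-\tau]$ and the following ON interval at $[T-\tau,T]$. By Lemma~\ref{lemma:uniformity_s_nonneg} the start of the contact is $Z^{OFF}\sim Unif(0,T-\tau)$, independent of its length $C$ because the contact process and the duty cycle are independent, so the contact ends at $Y^{OFF}=Z^{OFF}+C$. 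The contact is detected if and only if it survives until the ON interval, i.e.\ $Y^{OFF}>T-\tau$ (otherwise it is entirely missed inside the OFF interval); conditioning on this event, the measured contact is the residual duration from the ON boundary, $\tilde{C}=Y^{OFF}-(T-\tau)$, whence
\[ \mathbb{P}(\tilde{C}\le c)=\mathbb{P}\!\left( Y^{OFF}-(T-\tau)\le c \mid Y^{OFF}>T-\tau \right)=\frac{F_{Y^{OFF}}(c+T-\tau)-F_{Y^{OFF}}(T-\tau)}{1-F_{Y^{OFF}}(T-\tau)}, \]
which is exactly the distribution of $C^{res*}$ in Equation~\ref{eq:pdf_c_resstar}. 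This yields the second branch.

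The step I expect to be the main obstacle is pinning down the two mixing weights as \emph{precisely} $\tau/T$ and $(T-\tau)/T$: a detected contact that started in an OFF interval has passed through a survival filter (it had to reach the next ON interval), so one must argue --- as in the accounting behind Theorem~\ref{theo:detected_ct}, carried out in detail in~\cite{biondi:what_you_lose:tr} --- that for the purpose of sampling a typical measured contact the start of the underlying real contact can still be taken uniform over a whole period, i.e.\ that the regeneration of Lemma~\ref{lemma:uniformity_s_nonneg} kicks in at the ON boundary and washes out the conditioning bias; this is also where the sparse-network assumption enters, letting us treat the joint duty cycle of the tagged pair as unperturbed by the detected contact. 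Everything else is the elementary conditioning above. I would additionally sanity-check the support of $C^{res*}$, since --- unlike $C^{res}$ in Theorem~\ref{theo:detected_ct}, which is capped at $\tau$ because the interface is switched off at the end of the ON interval --- here $C^{res*}$ is not truncated, the nodes remaining active for the whole residual, which is precisely why the effect of duty cycling on measured contact duration is limited under this policy.
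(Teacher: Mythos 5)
Your proof is correct and follows essentially the same route as the paper's: the paper defers the detailed proof to its technical report, but the rationale it gives for Theorem~\ref{theo:detected_ct} is exactly this conditioning on whether the underlying real contact begins in an ON or an OFF interval, with uniform displacements from Lemma~\ref{lemma:uniformity_s_nonneg} and the residual $Y^{OFF}-(T-\tau)$ conditioned on survival past the ON boundary. The two caveats you raise --- that the mixing weights $\tau/T$ and $(T-\tau)/T$ are not renormalised by the probability that an OFF-starting contact survives to be detected, and that the stated support $0<c<\tau$ for $C^{res*}$ is inconsistent with the untruncated residual under the stay-active policy --- are accurate observations about the approximation as the lemma is stated, not gaps in your argument.
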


In the next section, we study the measured intercontact times in order to complete the characterisation of the measured contact process. 

%============================================
\subsection{The measured intercontact time}
\label{sec:ict_non_negligible_measured_ict}

Measured intercontact times $\tilde{S}_i$ are defined as the time interval between two consecutive measured contacts (which we have characterised in the previous section). 
%During $\tilde{S}_i$ the tagged node pair cannot communicate, either because the two nodes are in contact but also in the OFF state of their joint duty cycle or because they are not in contact at all. 
Thus, intuitively, measured intercontact times are a composition of portions of real contact times (those that do not intersect any ON interval) and intercontact times. Actually, there is another component, which only shows up when the real contact spans more than one ON interval, which we will discuss later on. 
%The expression for $\tilde{S}_i$ when contact duration is not negligible is more complex than in the negligible contact case (for which, recall, the expression was simply $\tilde{S}_i = \sum_{i=1}^{N} S_i$). 
In order to make the derivation of $\tilde{S}$ more tractable, in the following we exploit the slowly varying approximation for~$C_i$ (Corollary~\ref{coro:uniformity_sc_nonneg}). Hence, both the start time and the end time of a real contact can be assumed uniformly distributed in the ON/OFF interval in which they take place.

%\begin{table}[t]
%\tbl{Table of notation}{
%\begin{tabular}{|c|c|}
%\hline
%$C_i, S_i$ & Contact and intercontact times \\
%$\tilde{C}_i, \tilde{S}_i$ & Detected contact and intercontact times \\
%$C_i^*$  & Real contact that can be detected \\
%$C_i^{miss}$ & Real contact that cannot be detected \\
%$C^{in}$ & Detected contact fully contained in an ON interval \\
%$C^{short}$ & Contact that intersects a single ON interval \\
%\hline
%\end{tabular}}
%\label{default}
%\end{table}%

We start with the simplest case, focusing on a situation in which the last detected contact overlaps with only one ON interval  (i.e., $H=1$), as shown in Figure~\ref{fig:detectedict_shortct}. The end of the previous measured contact can be the actual end point of its corresponding detected original contact (if the latter ends in an ON interval, as in Figure~\ref{fig:detectedict_shortct}(b)) or otherwise the end point of the ON interval (Figure~\ref{fig:detectedict_shortct}(a)). 
\begin{figure}[t]
\begin{center}
\includegraphics[scale=0.4,angle=90]{./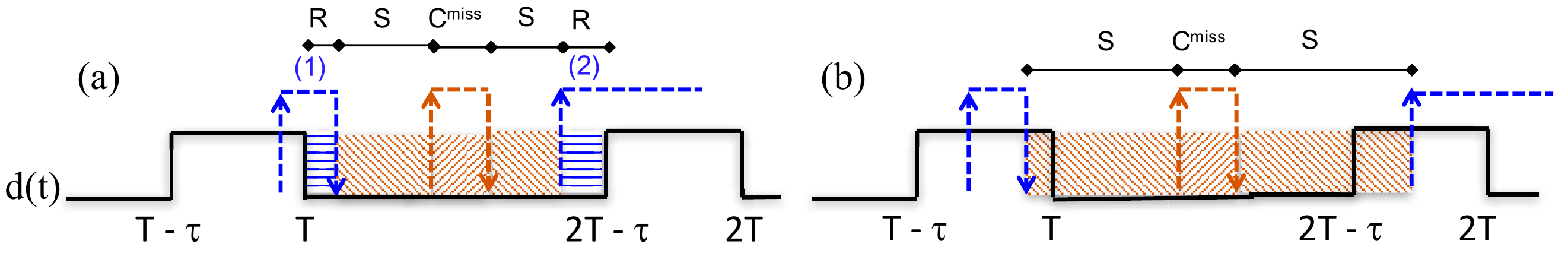}
\caption{Two examples of measured intercontact time when $H=1$.}
\label{fig:detectedict_shortct}
\end{center}
\end{figure}
In the latter case, the portion (corresponding to the blue area in Figure~\ref{fig:detectedict_shortct}(a)) of the detected contact between the end of the ON interval and the time at which the real contact ends contributes to the measured intercontact time. This quantity corresponds to the displacement of the endpoint of the real contact in the OFF interval, which is distributed as $Z^{OFF}$ if we assume $C_i$ to be slowly varying (Corollary~\ref{coro:uniformity_sc_nonneg}). \addCB{A similar line of reasoning holds for the other extreme, resulting in an additional time interval (labelled $(2)$ in the figure), again distributed as $Z^{OFF}$.}
%In fact, a measured intercontact time ends when a measured contact time starts. When the detected real contact time corresponding to the second measured contact time starts in an OFF interval, the time interval between the beginning of the detected contact and the beginning of the next ON interval contributes again to the measured intercontact time. Specifically, this time interval is again $Z^{OFF}$ long. These ``residuals'' of the contact (i.e. the part of detected contacts not overlapping with an ON interval) are only present depending on the ending (starting) point of the last (next) detected contact time. 
Therefore, recalling that $p_s^{OFF}=p_e^{OFF}=\frac{T-\tau}{T}$ when both $S_i$ and $C_i$ are slowly varying in OFF intervals, these residuals can be modelled as $R$ below:
\begin{equation}\label{eq:residuals_R}
R = \left\{ \begin{array}{lr}
Z^{OFF} & \textrm{with prob. } \frac{T-\tau}{T}\\
0 &  \textrm{with prob. } \frac{\tau}{T}
\end{array}\right.
\end{equation}
Let us now focus on what happens between the two detected contacts. It is easy to see (Figure~\ref{fig:detectedict_shortct}) that, regardless of when detected contact times start and end, the central part of the measured intercontact time contains a certain number $N$ of intercontact times~$S_i$ (this is analogous to the negligible contacts case). In addition, this central part can contain \emph{missed contacts}, i.e., contacts that do not overlap with any ON interval. The probability density of these missed contacts (which we call $C_i^{miss}$) can be written as $\mathbb{P}(C_i^{miss}=c) = \mathbb{P}(C_i=c | Z^{OFF} + C_i < T-\tau)$, hence it can be obtained similarly to $C^{short}$ in Theorem~\ref{theo:detected_ct}. Putting together our observations, $\tilde{S}_i$ is equal to $R + \sum_{i=1}^{N} S_i + \sum_{i=2}^{N} C_i^{miss} + R$ when $H=1$.

When $H \ge 2$, contacts are long and span more than one ON interval. However, these long contacts cannot be used in their entirety for communication, unless under the assumptions of Lemma~\ref{lemma:detected_ct_notoff}. Specifically, their portion overlapping with OFF intervals cannot be used. Hence, each long contact is split into smaller measured contacts separated by what we call \emph{pseudo-intercontact times}, i.e. measured intercontacts of length $T-\tau$, as shown in Figure~\ref{fig:detectedict_longct}. 
%So, while when $H=1$ one sample of measured intercontact time corresponded to a single pair of detected contacts, here a long contact introduces multiple (specifically, $H-1$) samples of intercontact times, all of size $T-\tau$. 
What happens after the end of the long detected contact is exactly the same as what we discussed for case $H=1$: there will be a sequence of intercontact times $S_i$ and missed contacts $C_i^{miss}$, whose number depends on how many contacts are missed before the next one is detected. This corresponds to formula $R + \sum_{i=1}^{N} S_i + \sum_{i=2}^{N} C_i^{miss} + R$, which we have derived in the previous section. 

\begin{figure}[ht]
\begin{center}
\includegraphics[scale=0.5,angle=-90]{./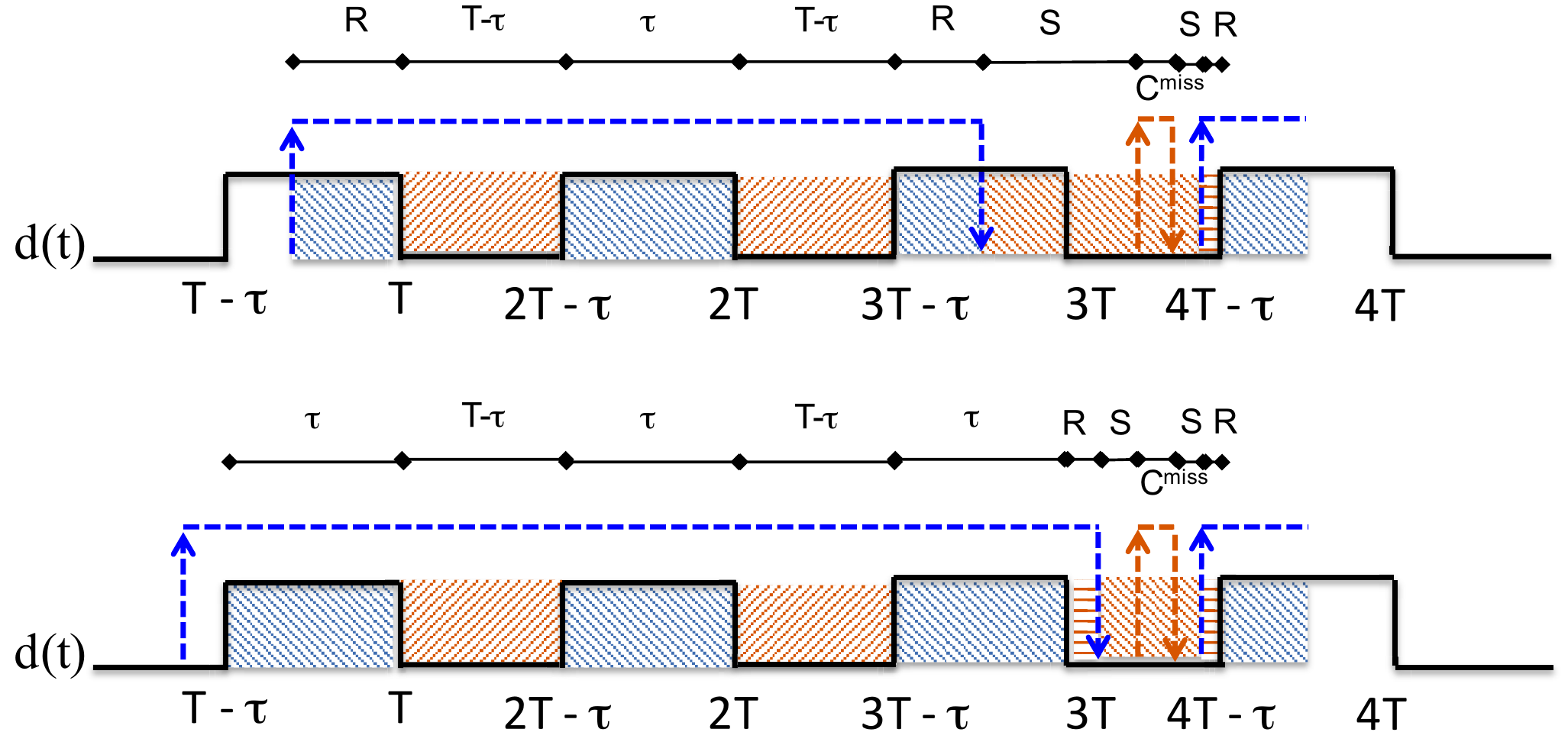}
\caption{Two examples of measured intercontact time when $H=3$}
\label{fig:detectedict_longct}
\end{center}
\end{figure}

Based on the above discussion, we know that measured intercontact times when contact duration is non-negligible can be either equal to $T-\tau$ (which is the contribution of pseudo-intercontact times) or to $R + \sum_{i=1}^{N} S_i + \sum_{i=2}^{N} C_i^{miss} + R$, thus the following theorem holds. \addCB{The weights of the two components of the mixture distribution are derived in}~\cite{biondi:what_you_lose:tr} according to the line of reasoning discussed above.

\begin{theorem}[Measured ICT]\label{theo:nonneg_measured_ict}
The measured intercontact time can be approximated as follows:
\begin{equation}\label{eq:theo:nonneg_measured_ict}
\tilde{S} = \left\{ \begin{array}{lr} 
R + \sum_{i=1}^{N} S_i +  \sum_{i=2}^{N} C_i^{miss}  + R & \frac{\sum_{h=1}^{\infty} P(H=h)}{\sum_{h=1}^{\infty} h P(H=h)}\\
T - \tau & \frac{\sum_{h=2}^{\infty} (h-1) P(H=h)}{\sum_{h=1}^{\infty} h P(H=h)}\\
\end{array}\right. \normalsize
\end{equation}
where $R$ is defined in Equation~\ref{eq:residuals_R} and the PDF of $C^{miss}$ is given by the following:
\begin{equation}\label{eq:pdf_c_miss}
\mathbb{P}\left(C^{miss} = c\right) =  \left\{ \begin{array}{lr}
 \frac{1}{T-\tau} f_C(c) \int_c^{T-\tau} \frac{1}{F_C(u)} \textrm{d} u & 0 < c < T - \tau \\
0 & \textrm{otherwise}
\end{array}\right.. 
\end{equation}
\end{theorem}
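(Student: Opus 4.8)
The plan is to establish the mixture structure of $\tilde{S}$ by carefully accounting for all the contributions to the time between two consecutive measured contacts, and then to compute the weights of the two mixture components by a counting argument analogous to the one used for $N$ in Theorem~\ref{theo:pdf_n_approx}. First I would fix a detected contact as the starting reference (index shift as in Section~\ref{sec:ict_non_negligible_preliminaries}), and recall from Lemma~\ref{lemma:uniformity_s_nonneg} and Corollary~\ref{coro:uniformity_sc_nonneg} that, under the slowly varying assumptions on $f_S$ and $f_C$, both endpoints of every real contact are uniformly displaced in the ON/OFF interval containing them, so the measured contact process is (approximately) alternating renewal and we may restrict attention to a single regeneration cycle between two detected contacts.

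Next I would distinguish the two ways a "unit" of measured intercontact time can arise. The first is the \emph{pseudo-intercontact} of deterministic length $T-\tau$: this occurs whenever a single long real contact spans $H\ge 2$ ON intervals, in which case it is chopped into $H$ measured contacts separated by $H-1$ OFF gaps, each contributing a pseudo-intercontact of length exactly $T-\tau$ (here the slowly varying assumption on $f_C$ is what guarantees the intermediate OFF stretches are full OFF intervals up to negligible error). The second unit is the "genuine" measured intercontact following the \emph{end} of a detected contact and running to the \emph{beginning} of the next detected contact. For that unit I would decompose it into: (i) the residual $R$ at the trailing edge — if the detected contact ends in an OFF interval (prob.\ $\frac{T-\tau}{T}$ by Corollary~\ref{coro:nonneg_pe}, the leftover real-contact tail has length $Z^{OFF}$; if it ends in an ON interval, $R=0$); (ii) a random number $N$ of full real intercontact times $S_i$, exactly as in the negligible-contact analysis (the PMF of $N$ is still given by Theorem~\ref{theo:pdf_n_approx} because the argument there only used uniformity of the contact displacement in its interval); (iii) the $N-1$ missed contacts $C_i^{\mathrm{miss}}$ that fall strictly inside OFF intervals between those intercontacts, whose density is $f_C$ conditioned on $Z^{OFF}+C<T-\tau$, which unwinds to the expression in Equation~\ref{eq:pdf_c_miss} by the same computation that produced $C^{short}$ in Theorem~\ref{theo:detected_ct}; and (iv) the leading residual $R$ at the front edge of the next detected contact, by symmetry of the uniform-displacement argument. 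Summing, this unit equals $R+\sum_{i=1}^{N}S_i+\sum_{i=2}^{N}C_i^{\mathrm{miss}}+R$.

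Then I would compute the mixture weights by a renewal-reward / cycle-counting argument. Consider one regeneration cycle of the measured contact process keyed on the sequence of detected \emph{real} contacts: each detected real contact produces, on average, $\mathbb{E}[H]=\sum_h h\,\mathbb{P}(H=h)$ measured contacts, of which $1$ is terminal (followed by a genuine measured intercontact of the form in (i)--(iv)) and $H-1$ are internal (each followed by a pseudo-intercontact $T-\tau$). Hence among all measured intercontacts, the fraction that are of the "genuine" type is $\frac{\sum_{h\ge1}\mathbb{P}(H=h)}{\sum_{h\ge1}h\,\mathbb{P}(H=h)}$ and the fraction that are pseudo-intercontacts is $\frac{\sum_{h\ge2}(h-1)\mathbb{P}(H=h)}{\sum_{h\ge1}h\,\mathbb{P}(H=h)}$, which are exactly the weights in Equation~\ref{eq:theo:nonneg_measured_ict}; these two fractions sum to one since $\sum_{h\ge1}\mathbb{P}(H=h)+\sum_{h\ge2}(h-1)\mathbb{P}(H=h)=\sum_{h\ge1}h\,\mathbb{P}(H=h)$.

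The main obstacle I expect is justifying that the \emph{genuine} measured intercontact really does begin with an independent copy of $R$ and contains exactly $N-1$ missed contacts with the stated conditional law, i.e., disentangling the correlation between how the \emph{previous} detected contact ended (which determines the first $R$) and where the \emph{next} detected contact starts (which determines $N$ and the number of $C^{\mathrm{miss}}$ terms). The slowly varying hypotheses are precisely what make this decorrelation approximately valid — each displacement is (approximately) uniform in its own interval regardless of history, so the process regenerates — but spelling out that the two $R$ terms, the $S_i$'s, and the $C_i^{\mathrm{miss}}$'s can be treated as (conditionally) independent requires the same care as in Lemma~\ref{lemma:uniformity}; I would handle the bookkeeping in detail in the companion technical report~\cite{biondi:what_you_lose:tr} and give only the intuitive decomposition here, exactly as the paper does for its other results.
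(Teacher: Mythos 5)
Your decomposition and your cycle-counting argument for the mixture weights are exactly the paper's line of reasoning (the paper itself defers the detailed bookkeeping to the technical report~\cite{biondi:what_you_lose:tr}, as you propose to do): the genuine component $R+\sum_{i=1}^{N}S_i+\sum_{i=2}^{N}C_i^{\mathrm{miss}}+R$, the $H-1$ pseudo-intercontacts of length $T-\tau$ per long contact, and the weights $1/\mathbb{E}[H]$ and $\mathbb{E}[H-1]/\mathbb{E}[H]$ all match.

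There is, however, one concrete misstep in point (ii): you assert that the PMF of $N$ is still the one from Theorem~\ref{theo:pdf_n_approx} ``because the argument there only used uniformity of the contact displacement in its interval.'' That is not correct when contact duration is non-negligible. In the negligible case a contact falling in an OFF interval is automatically missed; here, a contact \emph{starting} in an OFF interval is still detected if it survives into the next ON interval, so the miss probability acquires the extra factor $\mathbb{P}(Z^{OFF}+C<T-\tau)$ and the geometric parameters change from $g,p$ to $\hat{g},\hat{p}$ as in Lemma~\ref{lemma:pdf_n_approx_nonneg}. The paper explicitly warns that the distribution of $N$ differs between the two cases and derives it separately after the theorem. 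Since Theorem~\ref{theo:nonneg_measured_ict} leaves the law of $N$ unspecified, this does not invalidate your proof of the decomposition and the weights, but you should either drop the claim or replace it with the corrected PMF; otherwise any downstream computation of the moments of $\tilde{S}$ from your proof would be wrong.
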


%\begin{proof}
%\end{proof}

At this point, in order to obtain $\tilde{S}$ the only missing piece is the distribution of $N$ when contact duration is not negligible. Thus we derive it in the lemma below. The proof follows the same line of reasoning as the proof of Lemma~\ref{lemma:pmf_h}, and it can be found in~\cite{biondi:what_you_lose:tr}.

\begin{lemma}[PMF of $N$]\label{lemma:pdf_n_approx_nonneg}
When contact duration is non-negligible, the probability mass function of $N$ can be approximated by the following: 
\begin{equation}\label{eq:pdf_n_approx_nonneg}
\left\{ \begin{array}{lr}
\mathbb{P}\{N=1\} = \hat{g}, & k = 1\\
\mathbb{P}\{N=k\} = (1-\hat{g})(1-\hat{p})^{k-2}\hat{p}, & k\geq 2
\end{array}\right.
\end{equation}
where $\hat{g} = 1 - \sum_{n_1=2}^{\infty} \mathbb{P}\left( Z + S \in \mathcal{I}^{OFF}_{n_1}\right) \mathbb{P}\left( Z^{OFF} + C < T-\tau \right)$ and  \\ $\hat{p}=1- \sum_{n_2=2}^{\infty} \mathbb{P}\left( Z^{OFF} + S \in \mathcal{I}^{OFF}_{n_2}\right) \mathbb{P}\left( Z^{OFF} + C < T-\tau \right)$.
\end{lemma}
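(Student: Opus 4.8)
The plan is to mirror the derivation of Theorem~\ref{theo:pdf_n_approx}, now accounting for the fact that in the non-negligible case a contact can be detected even when it \emph{starts} in an OFF interval, provided it lasts long enough to reach the next ON interval. First, by Lemma~\ref{lemma:uniformity_s_nonneg} the measured contact process is (approximately) alternating renewal, so it suffices to analyse a single inter-detection cycle: fix the reference detected contact, shift indices so that it lies in $[0,T]$, and let $E_k$ be the indicator that the $k$-th subsequent real contact overlaps some ON interval (i.e., is detected). Then, exactly as in Equation~\ref{eq:n_conditioned_with_events}, $\mathbb{P}(N=k)=\mathbb{P}(E_1=0,\dots,E_{k-1}=0,E_k=1)$.

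The new ingredient is the ``missed'' criterion: a contact whose start lies in an ON interval trivially overlaps it and is always detected, whereas a contact whose start lies in an OFF interval is missed precisely when it terminates before that OFF interval ends, i.e., when $Z^{OFF}+C<T-\tau$, with $Z^{OFF}\sim Unif(0,T-\tau)$ the (approximately uniform, by Corollary~\ref{coro:uniformity_sc_nonneg}) displacement of its start within the OFF interval. Using the slowly-varying property of $f_S$ --- so that, conditioned on landing in a given interval, the endpoint of the next intercontact time $S$ is approximately uniform there --- together with the independence of the $S_i$ and the regeneration argument already invoked for Lemma~\ref{lemma:uniformity}, the joint probability above factorises into a product of marginals. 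For $1-\hat g=\mathbb{P}(E_1=0)$ one reasons in two steps: starting the next intercontact $S$ from the end of the reference detected contact (at displacement $Z$ inside whichever interval contains that endpoint), the next contact must first land in some OFF interval $\mathcal{I}^{OFF}_{n_1}$ with $n_1\ge2$, with probability $\sum_{n_1\ge2}\mathbb{P}(Z+S\in\mathcal{I}^{OFF}_{n_1})$, and it must then stay inside that interval, with probability $\mathbb{P}(Z^{OFF}+C<T-\tau)$; since this second factor does not depend on $n_1$ it pulls out of the sum, reproducing the stated $\hat g$. The quantity $1-\hat p$ is obtained by the same two-step argument but conditioned on the \emph{previous} contact having been missed, which forces its start (hence the point from which the new $S$ departs) to be uniform in an OFF interval, giving $1-\hat p=\sum_{n_2\ge2}\mathbb{P}(Z^{OFF}+S\in\mathcal{I}^{OFF}_{n_2})\,\mathbb{P}(Z^{OFF}+C<T-\tau)$.

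Assembling the pieces yields Equation~\ref{eq:pdf_n_approx_nonneg}: $\mathbb{P}(N=1)=\hat g$; and for $k\ge2$, contact $1$ is missed (probability $1-\hat g$), contacts $2,\dots,k-1$ each start in an OFF interval --- because being missed forces the start there --- and are each missed with the \emph{same} probability $1-\hat p$ (the regeneration/uniformity argument means the phase carries no extra memory), and contact $k$ is finally detected with probability $\hat p$, so $\mathbb{P}(N=k)=(1-\hat g)(1-\hat p)^{k-2}\hat p$. The main obstacle is, as for Theorem~\ref{theo:pdf_n_approx}, justifying the factorisation: the $E_k$ are genuinely dependent through the exact phase of the renewal process inside the duty cycle, and the product form --- in particular the clean split into a single correction $\hat g$ at $k=1$ followed by a geometric tail with parameter $\hat p$ --- only emerges after the slowly-varying approximation lets us replace the true conditional phase by the uniform $Z^{ON}$/$Z^{OFF}$ and lets us equate ``previous contact missed'' with ``previous contact's start uniform in its OFF interval''. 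The remaining work is the routine case analysis of how a contact can overlap these intervals (analogous to the proof of Lemma~\ref{lemma:pmf_h}) and of the constant ``stays inside the OFF interval'' factor, which is deferred to~\cite{biondi:what_you_lose:tr}.
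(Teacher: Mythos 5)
Your proposal is correct and follows essentially the same route as the paper, which itself only sketches the argument and defers details to the technical report: you reuse the factorisation of Equation~\ref{eq:n_conditioned_with_events} from Theorem~\ref{theo:pdf_n_approx}, replace the ``falls in an OFF interval'' miss criterion by ``starts in an OFF interval \emph{and} satisfies $Z^{OFF}+C<T-\tau$'', and justify the product form and the constant second factor via the same slowly-varying/uniform-displacement regeneration argument. The only (harmless) imprecision is that the point from which each new $S_i$ departs is the \emph{end} of the previous missed contact rather than its start, but both are approximated as $Unif(0,T-\tau)$ in that OFF interval, so the stated $\hat g$ and $\hat p$ are unaffected.
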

%
%\addCB{In the exponential case, the above can be solved in closed-form, while for the Pareto case, numerical solutions are obtained. Again, we recall that $\mathbb{P}\{N\ge2\} = 1-\hat{g}$ gives the probability of missing a contact. This expression can be used to derive a duty cycle value that keeps this probability under a target threshold.}

Going back to Theorem~\ref{theo:nonneg_measured_ict}, it is easy to see that, in most practical applications, $\tilde{S}$ is still dominated by the component $\sum_{i=1}^{N} S_i$. 
%In fact, both $R$ and $C^{miss}$ can be at most $T-\tau$ long, which is typically in the order of seconds/minutes, while $S_i$ can be in the order of hours or even days.
Thus, all the properties discussed in Section~\ref{sec:ict_negligible} (e.g., the conversion from hypo-exponential behaviour to a hyper-exponential one) will not change significantly. Please note, however, that the distribution of $N$ in the two cases is different, so the result obtained in the negligible contact duration case does not apply as is to the non-negligible contact duration case.

Finally, Theorem~\ref{theo:nonneg_measured_ict} can be easily modified to take into account the effect of nodes not entering the OFF state after a contact has been detected, which we have discussed in Section~\ref{sec:ict_non_negligible_measured_contact}. In this case, pseudo-intercontact times will not be present in $\tilde{S}$, since nodes do not enter the OFF state when a contact is ongoing. Thus, we obtain $\tilde{S} = R + \sum_{i=1}^{N} S_i +  \sum_{i=2}^{N} C_i^{miss}  + R$.

%================================
\subsection{Validation}
\label{sec:ict_non_negligible_validation}

We now validate the results we have obtained for the measured contact (Theorem~\ref{theo:detected_ct}) and measured intercontact time (Theorem~\ref{theo:nonneg_measured_ict}). \addCB{Specifically, as these are the most common distributions found in real traces, we consider real human mobility traces and extract pairs for which exponential or Pareto distributions fit intercontact or contact times. We use standard Maximum Likelihood estimation and goodness-of-fit techniques to this end.  Specifically, the method discussed in}~\cite{clauset2009power} for estimating both scale and shape for the Pareto distribution, and the Cram\'er-von Mises test with significance level $\phi = 0.01$. For statistical reliability, only pairs with more than $9$ samples are considered, as in~\cite{tournoux2011density}).

Several traces of real contacts between nodes are publicly available\footnote{E.g., at~\url{http://crawdad.cs.dartmouth.edu/}} and have been often used in the related literature (Table~\ref{tab:summary_traces} in~\cite{biondi:what_you_lose:tr} summarises the most popular ones). However, it is usually neglected the fact that all of them implement a form of duty cycling in the neighbour discovery process, owing to the technology-dependent scanning period (typically in the order of $100s$). Hence, what they track is actually the measured contacts and intercontact times, rather than the real ones. However, there are a few datasets that use quite a small duty cycling (in the order of a few seconds) and hence can realistically approximate the real contact and intercontact times in practice (i.e., assuming that both contact and intercontact times last for longer than a few seconds). These traces are PMTR~\cite{unimi-pmtr-20081201} and RollerNet~\cite{upmc-rollernet-20090202}, and they will be the focus of our analysis\footnote{In~\cite{biondi:what_you_lose:tr} we also provide a discussion on other traces (Infocom and Reality Mining), which are very popular in the literature but that were not suitable for our validation due to their long duty cycle.}. The PMTR trace has been obtained from the readings across 19 days (in November 2008) of $44$ Pocket Mobile Trace Recorders (PMTRs), custom devices built for contact detection and distributed to faculty members, PhD students, and technical staff at the University of Milan. Contacts are sampled every $1$ seconds. The RollerNet experiment was carried out to analyse the mobility of rollerbladers in Paris. The dataset was collected on August 20, 2006, and it is composed of two sessions of 80 minutes, interspersed with a break of 20 minutes. The $62$ Bluetooth sensors (iMotes) were distributed to organisers' friends, members of rollerblading associations and members of staff. Here contacts are sampled every $15$ seconds.

By applying the fitting technique described at the beginning of the section to each node pair in the PMTR and RollerNet traces, we obtained the results in Tables~\ref{tab:fitting_exp}-\ref{tab:fitting_pareto} in~\cite{biondi:what_you_lose:tr}. In the remaining of the section, we focus only on those pairs for which a given hypothesis (either exponential or Pareto) is not rejected, and we apply our theoretical framework to representative user pairs (i.e. we configure the model using the MLE parameters of the selected pair). In this analysis, we use the same duty cycle configuration that we used in Section~\ref{sec:ict_negligible_stilde_validation}, i.e., $\tau=20s$, $T=100s$. Assuming that the contact and intercontact times are either exponential (Sec.~\ref{sec:ict_non_negligible_validation_exp}) or Pareto (Sec.~\ref{sec:ict_non_negligible_validation_pareto}), we draw contact and intercontact times samples for either distribution, then we filter them according to the reference duty cycling process (hence, we simulate the effects of the duty cycling on the original contact process that we have extracted from the trace). Please note that the fitting results are used to configure the distribution from which contact and intercontact times are sampled for relevant pairs in the datasets.

%\addCB{We take a different perspective from Section~\ref{sec:ict_negligible_stilde_validation}: instead of exploring the solution space of $\lambda, \alpha, \beta$ parameters, here we model contact and intercontact times relying on the parameter values observed in real contact traces. To this aim, we first need to investigate for which user pairs the Exponential/Pareto assumption for contact and intercontact times is reasonable. Then, for these selected pairs, we need to find out the parameter values that provide the best fit.} As discussed in Section~\ref{sec:contact_process}, among all the available traces of human mobility we focus on the PMTR and RollerNet traces because of their small duty cycling period. For each user pair in these datasets we fit contact and intercontact times to an exponential and Pareto distribution (for statistical reliability, only pairs with more than $9$ samples are considered, as in~\cite{tournoux2011density}).  We use MLE for fitting (specifically, the method discussed in~\cite{clauset2009power} for estimating both scale and shape for the Pareto distribution) and the Cram\'er-von Mises test \addCB{with significance level $\phi = 0.01$} for evaluating the goodness of fit. The results are summarised in Tables~\ref{tab:fitting_exp}-\ref{tab:fitting_pareto} in~\cite{biondi:what_you_lose:tr}. 

%===================================
\subsubsection{The exponential case}
\label{sec:ict_non_negligible_validation_exp}

We first consider the measured contact duration $\tilde{C}$, which we can approximate as discussed in Theorem~\ref{theo:detected_ct}. 
%More specifically, from Theorem~\ref{theo:detected_ct} we know that the measured contact time cannot be larger than $\tau$ (20s in our configuration), and that its exact shape is a blend of values extracted from $C^{short}, C^{res}, Z^{ON}$, and $\tau$ (the latter is the only non-stochastic quantity in the set). 
\addCB{The probabilities of observing each component of the mixture predicted in Theorem~\ref{theo:detected_ct} only depend on $\tau$, $T$ (that are constant in our case), and on the distribution of real contact times. Since we have fixed $\tau$ and $T$, the interplay between the different $\tilde{C}$ components is regulated only by~$C_i$, which we are assuming exponential with rate $\mu$ (a thorough discussion on this dependence is provided in}~\cite{biondi:what_you_lose:tr}). \addCB{Thus, in the following, for different $\mu$ values, we compare the predictions of Theorem~\ref{theo:detected_ct} for the measured contact time against simulation results.} Similarly to Section~\ref{sec:ict_negligible_pdf_n_validation}, we draw contact and intercontact times (30,000 samples each) from an exponential distribution and we filter them using our reference duty cycling process with $\tau=20s, T=100s$. We set the rate $\lambda$ of intercontact times equal to the mean rate in the corresponding trace (PMTR or RollerNet). Then, we set the rate of the exponential distribution of contact times equal to significant points of the $\mu$ distribution (corresponding to the minimum and maximum values, first and third quartiles, median and mean). Parameter $\mu$ varies a lot across pairs in the PMTR trace, spanning several orders of magnitude (from $10^{-5} s^{-1}$ to $10^{-1} s^{-1}$). For the RollerNet trace, the body of the distribution of $\mu$ is quite compact, and only the minimum and maximum values are more distant. Thus, in this case we omit the plot for the median and the first and third quartiles. Please note that we only consider those pairs for which the exponential hypothesis was not rejected by the Cram\'er-von Mises test. The results for the measured contact times are shown in Figure~\ref{fig:nonneg_exp_pmtr} for the PMTR trace and in Figure~\ref{fig:nonneg_exp_rollernet} for the RollerNet trace. In both cases, predictions are generally accurate. The largest discrepancies appear for values of $\mu$ around $10^{-2}$. \addCB{The figures also plot the distribution of the original contact time without duty cycling. As we have set $\tau=20$ and $T=80$, the maximum contact duration we observe with duty cycling is $20s$, while contact durations without duty cycling can be also much longer. Moreover, the model predicts a discontinuity in the density for contact around the value of $\tau$. This justifies the differences in the corresponding curves in the figures.} It is also interesting to note that when $\mu$ is large, the discontinuity in the CDF of the measured contact disappears, because contacts are so short that they are generally fully contained in ON intervals. Because of this, we expect that the original contact time distribution, $C$, is not significantly modified by the duty cycling. This is indeed the case, as we can observe\footnote{For large $\mu$ the main component of $\tilde{C}$ is $C^{res}$, which, in the exponential case, converges to $C$ for large $\mu$.} in Figure~\ref{fig:nonneg_exp_rollernet}.

%In both cases, predictions are generally accurate. The largest discrepancies appear for values of $\mu$ around $10^{-2}$ (e.g., both the mean/median and third quartile of the $\mu$ distribution for PMTR fall in this range). However, the difference between simulations and model predictions observed in Figure~\ref{fig:nonneg_exp_pmtr} remains quite small even in the worst case (e.g., in the PMTR case the largest difference between the two CDFs, corresponding to the Kolmogorov-Smirnov statistic, is $0.6 \cdot 10^{-4}, 0.026, 0.075, 0.10, 0.11, 0.04$, for the six reference $\mu$ values respectively). \addCB{It is interesting to note that when $\mu$ is large, the discontinuity in the CDF of the measured contact disappears, because contacts are so short that they are generally fully contained in ON intervals. Because of this, we expect that $C$ is not significantly modified by the duty cycling. This is indeed the case, as we can observe in Figure~\ref{fig:nonneg_exp_rollernet} comparing the curve for $\tilde{C}$ obtained from simulations and the green curve which represents the CDF of the original contact duration. This result is also matched by Theorem~\ref{theo:detected_ct}. In fact, for large $\mu$ the main component of $\tilde{C}$ is $C^{res}$, which, in the exponential case, converges to $C$ for large $\mu$. }

\begin{figure}[t]
\begin{center}
%\vspace{-10pt}
\includegraphics[scale=0.6]{./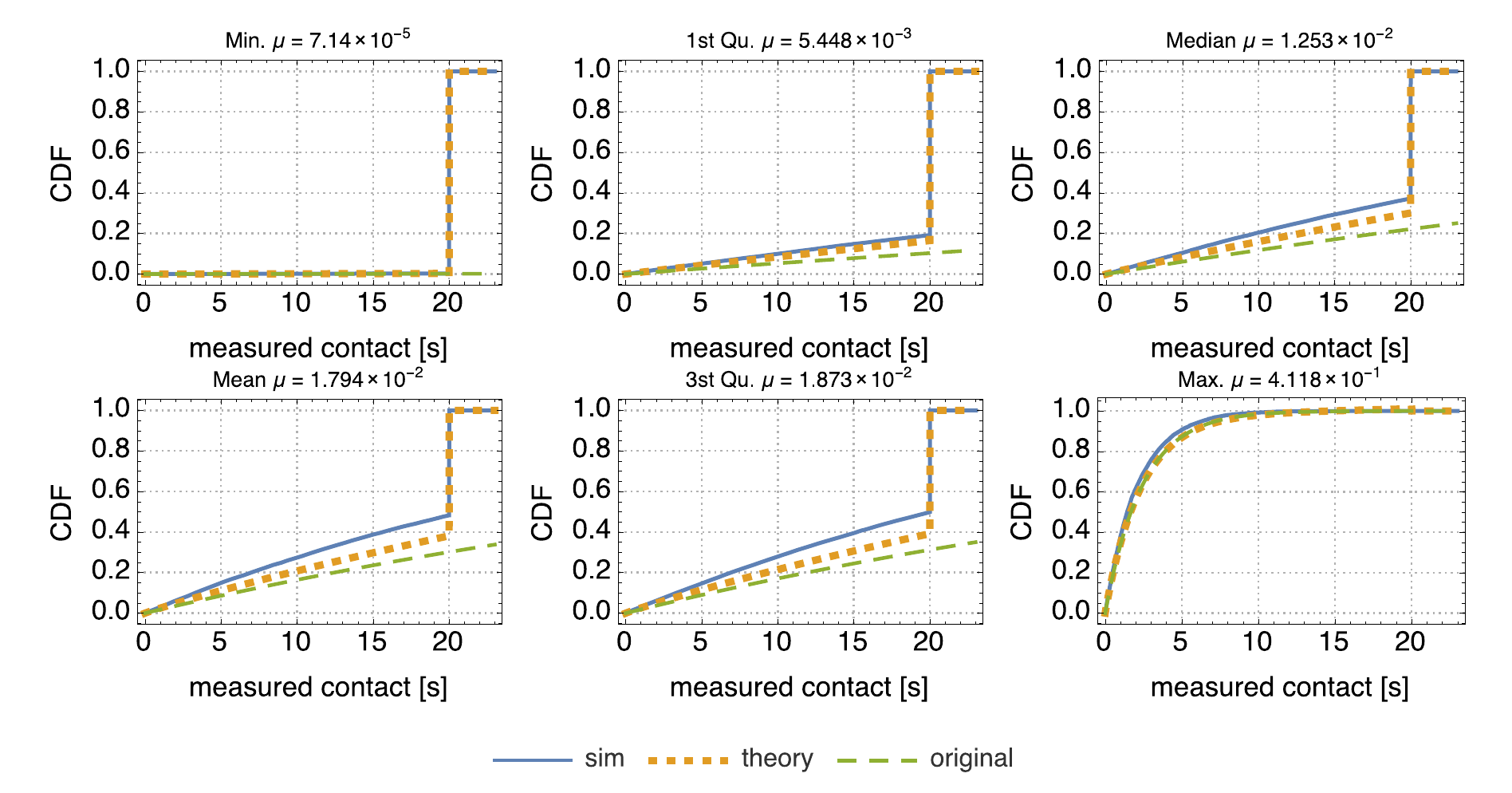}
\caption{Measured contact duration when real contact and intercontact times are exponential (PMTR).}
%\vspace{-15pt}
\label{fig:nonneg_exp_pmtr}
\end{center}
\end{figure}

\begin{figure}[h]
\begin{center}
%\vspace{-10pt}
\hspace{-5pt}\includegraphics[scale=0.64]{./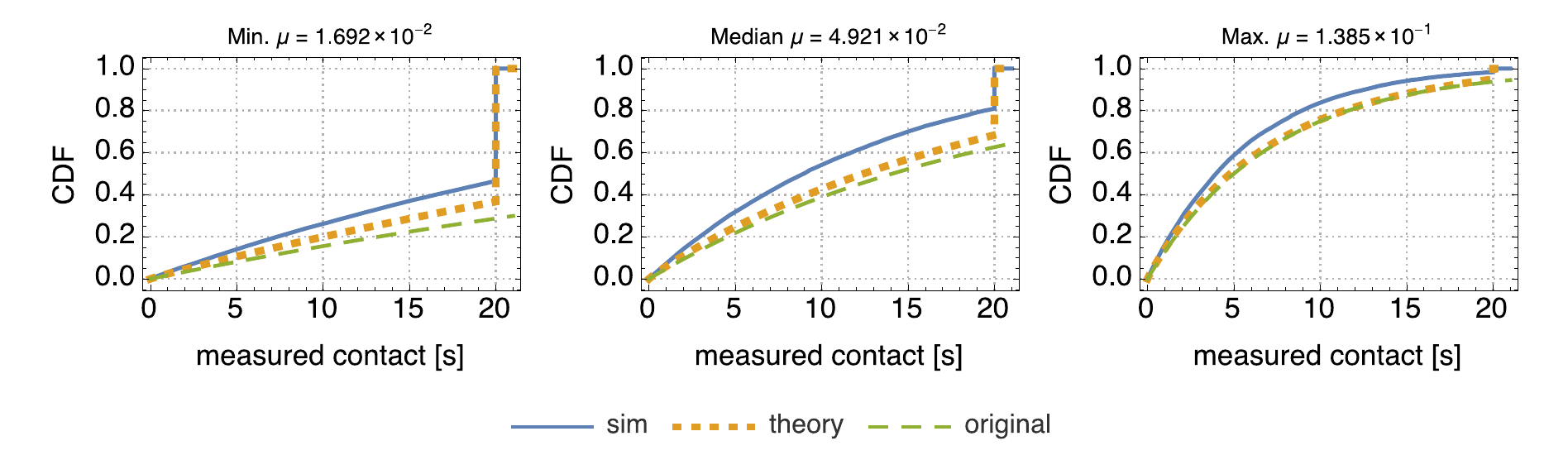}
\caption{Measured contact duration when real contact and intercontact times are exponential (RollerNet).}
%\vspace{-15pt}
\label{fig:nonneg_exp_rollernet}
\end{center}
\end{figure}

We now focus on the measured intercontact times $\tilde{S_i}$. Measured intercontact times have two components, one stochastic component which is the sum of several random variables (see Theorem~\ref{theo:nonneg_measured_ict}) and one constant component $T-\tau$ (which we have called pseudo-intercontact time). 
%Measured intercontact times depend on the rate $\mu$ of contact duration, the rate $\lambda$ of intercontact duration, as well as on $\tau$ and $T$. However, t
The probability of selecting either component only depends on $\mu$, $\tau$, and $T$ (because it is a function of $H$). Fixing, as usual, $\tau=20s$ and $T=100s$, we plot in Figure~\ref{fig:nonneg_exp_stilde_pseudo} the probability of observing pseudo-intercontact times in $\tilde{S_i}$. Pseudo-intercontact times start appearing for $\mu$ values smaller than $0.06$. In this range, long contacts are split into many shorter measured contacts, and the portions of real contacts not overlapping with an ON interval become pseudo-intercontact times.

%We now focus on the conditions under which the predictions of Theorem~\ref{theo:nonneg_measured_ict} for measured intercontact times are good. Theorem~\ref{theo:nonneg_measured_ict} is based on the slowly varying assumption for contact and intercontact times, so its predictions will be as good as this assumption is closer to being satisfied. 
In order to validate Theorem~\ref{theo:nonneg_measured_ict}, we start with the PMTR trace and we plot the CDF of the measured intercontact times keeping $\mu$ fixed (specifically, equal to the mean value) and varying $\lambda$. In all cases, theoretical predictions and simulations results are overlapping (Figure~\ref{fig:nonneg_exp_stilde_pmtr_fixed_mu}). We omit the plots for the RollerNet case, since the same considerations apply and theoretical predictions remain virtually indistinguishable from simulation results. We expected this good result, since $\lambda$ values are rather small (and, in all case but one, smaller than $1/T$), and therefore the slowly varying assumption of Theorem~\ref{theo:nonneg_measured_ict} is verified. In these plots we also observe the contribution of pseudo-intercontact times to the CDF, corresponding to its initial bump (pseudo-intercontact times are of length $T-\tau=80s$, hence they affect the very beginning of the distribution). 
This was also expected, as, with $\tau=20$ and $T=100$, the region of $\mu$ where the probability of pseudo intercontact times is not negligible is $\mu<0.06s^{-1}$, and the mean value of $\mu$ in PMTR is $0.02s^{-1}$.
%In particular, as predicted by Figure~\ref{fig:nonneg_exp_stilde_pseudo}, given the mean value of $\mu\sim0.02s^{-1}$ used in the simulation, we obtain a contribution of roughly $20\%$ from the pseudo-intercontact times. 
Note also that with $\mu\sim0.02s^{-1}$, contact duration is not negligible, hence the predictions (green curve in Figure~\ref{fig:nonneg_exp_stilde_pmtr_fixed_mu}) that we would obtain using the results for the negligible contact case (i.e., using Lemma~\ref{lemma:stilde_neg_exp}) are not at all accurate.

\begin{figure}[t]
\begin{center}
\includegraphics[scale=0.6]{./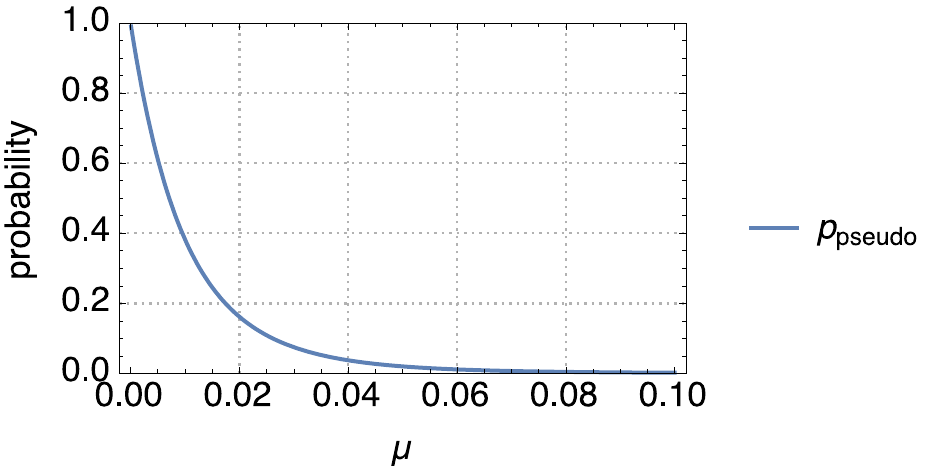}
\caption{Probability of observing pseudo-intercontact times in $\tilde{S}$.}
\label{fig:nonneg_exp_stilde_pseudo}
\end{center}
\end{figure}

\begin{figure}[t]
\begin{center}
%\vspace{-10pt}
\includegraphics[scale=0.6]{./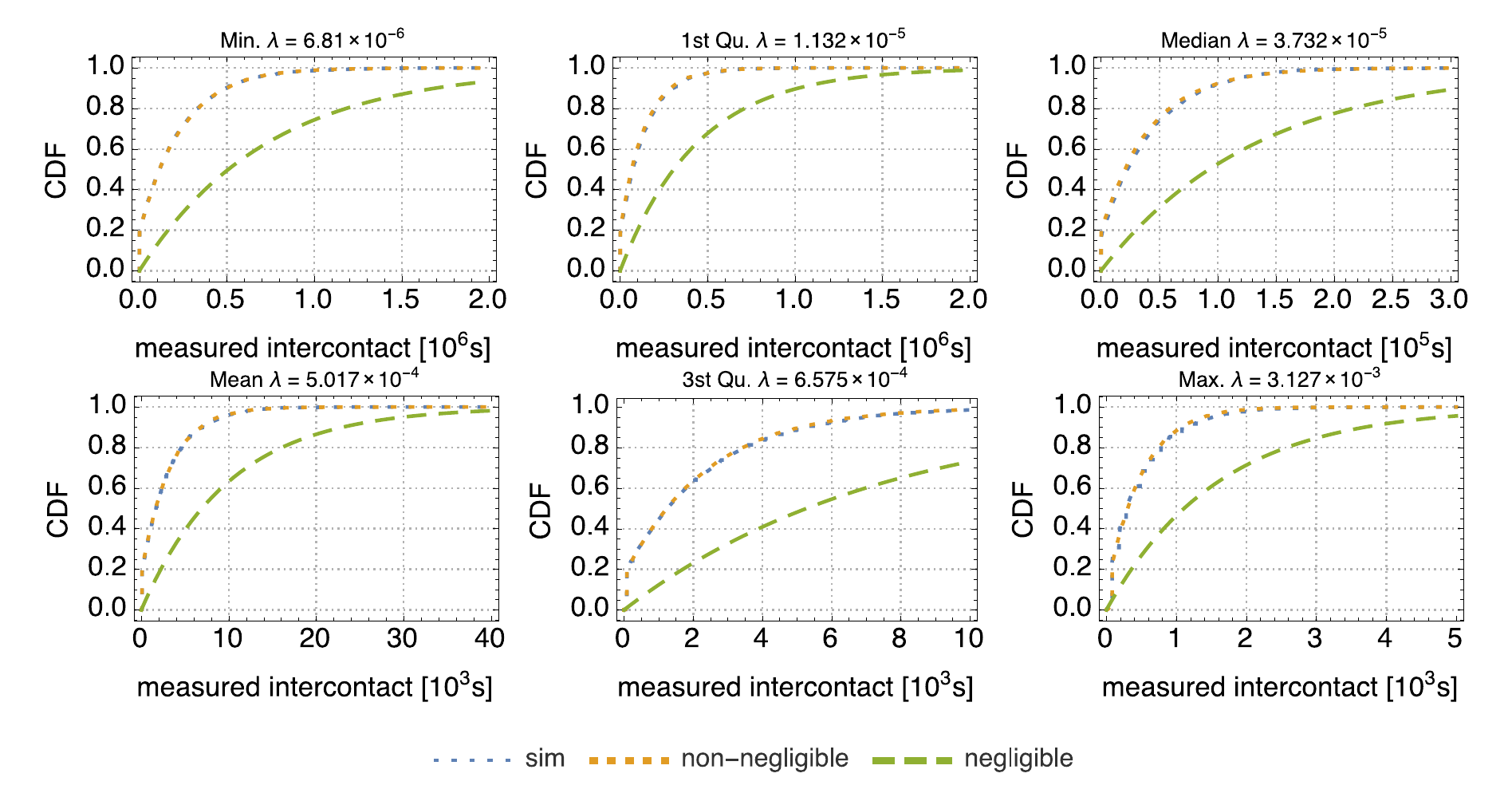}
\caption{Measured intercontact duration when real contact and intercontact times are exponential (PMTR).}
%\vspace{-15pt}
\label{fig:nonneg_exp_stilde_pmtr_fixed_mu}
\end{center}
\end{figure}

%%% PARETO 
\subsubsection{The Pareto case}
\label{sec:ict_non_negligible_validation_pareto}
We now perform a similar analysis for those pairs for which the Pareto hypothesis for contact and intercontact times was not rejected by the Cram\'er-von Mises test (more than 97\% of pairs overall, see~\cite{biondi:what_you_lose:tr}). Here we only focus on measured contact and measured intercontact times. Further results, such as the analysis of $H$, can be found in~\cite{biondi:what_you_lose:tr}. In Figure~\ref{fig:nonneg_ct_pareto_pmtr} we plot the CDF of measured contact times obtained from simulations against the theoretical predictions of Theorem~\ref{theo:detected_ct}. Simulations are performed as described for the exponential case, except that here we sample from Pareto distributions. We start with the PMTR dataset. In the first set of plots in Figure~\ref{fig:nonneg_ct_pareto_pmtr_vara} we fix $b$ to the average value $b=245.40$ observed in the dataset and we vary $\alpha$. As expected, predictions are very accurate, owing to the fact that $b > T$ and thus the slowly varying assumption holds true. In Figure~\ref{fig:nonneg_ct_pareto_pmtr_varb} we vary $b$ fixing $\alpha$ to its average value ($\alpha =1.898$, which smaller than the threshold $\alpha=2$ discussed in~\cite{biondi:what_you_lose:tr} for having accurate predictions of $\tilde{C}$). \addCB{As expected, the only case when predictions are not very accurate are for $\alpha$ and $b$ small, when the slowly varying assumption does not hold.}
%We observe that when both $b$ and $\alpha$ are small, predictions are not very accurate, because the slowly varying approximation does not hold. % nor $P(H=1)$ dominates (see Figure~\ref{fig:nonneg_par_hsplit_alpha}). As $b$ increases, theoretical predictions improve. 
Similar considerations hold for the RollerNet dataset, for which we omit the plot.

\begin{figure}[t]
\begin{center}
\subfigure[Varying $\alpha$ ($b$ set to the mean $b=245.40$ in the dataset).\label{fig:nonneg_ct_pareto_pmtr_vara}]{\includegraphics[scale=0.6]{./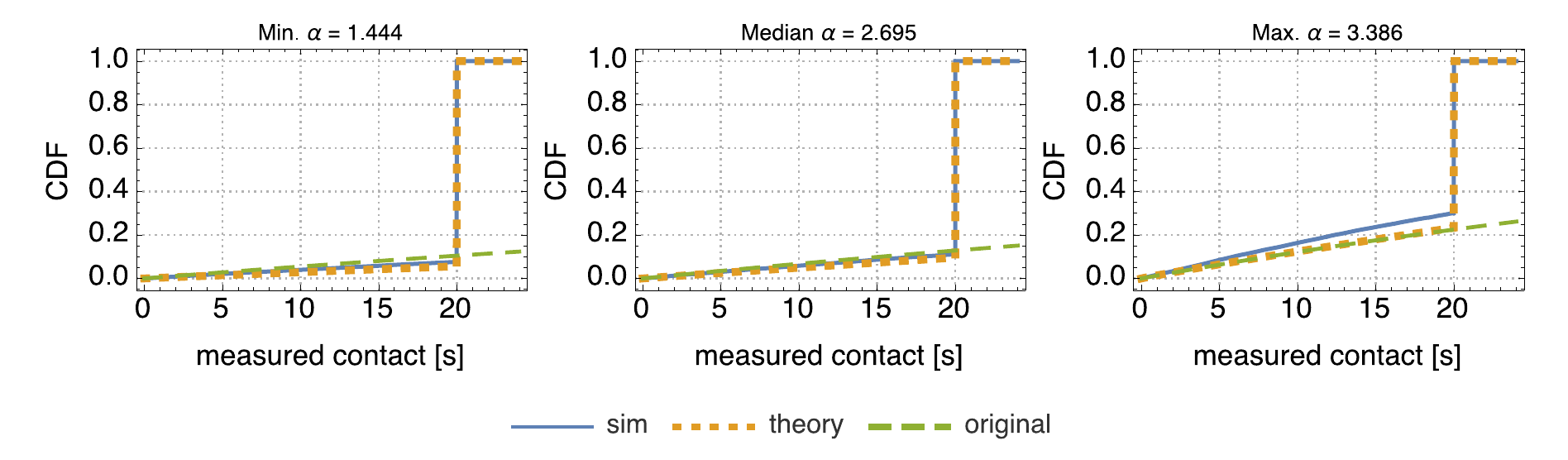}}
\subfigure[Varying $b$ ($\alpha$ set to the mean $\alpha=1.898$ in the dataset).\label{fig:nonneg_ct_pareto_pmtr_varb}]{\includegraphics[scale=0.6]{./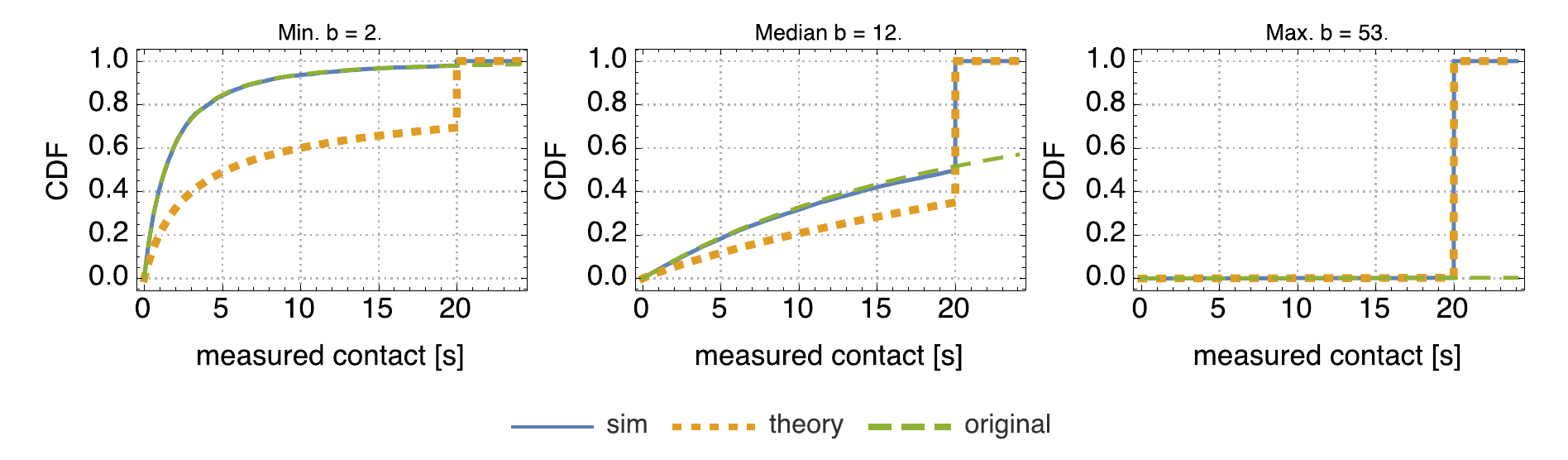}}
\caption{Measured contact duration when real contact and intercontact times are Pareto (PMTR).}
%\vspace{-15pt}
\label{fig:nonneg_ct_pareto_pmtr}
\end{center}
\end{figure}

%%%%%% ROLLERNET C_TILDE
%\begin{figure}[h]
%\begin{center}
%%\vspace{-10pt}
%\subfigure[Varying $\alpha$ ($b$ set to the mean $b=13.68$ in the dataset).]{\includegraphics[scale=0.74]{./plots/non_negligible/pareto/nn_ctilde_cdf_par_tau20_rollernet_perc_all_varalpha.pdf}}
%\subfigure[Varying $b$ ($\alpha$ set to the mean $\alpha=2.667$ in the dataset).]{\includegraphics[scale=0.74]{./plots/non_negligible/pareto/nn_ctilde_cdf_par_tau20_rollernet_perc_all_varb.pdf}}
%\caption{Measured contact duration when real contact and intercontact times are Pareto (RollerNet).}
%%\vspace{-15pt}
%\label{fig:nonneg_ct_pareto_rollernet}
%\end{center}
%\end{figure}

\addCB{Finally, we study the behaviour of measured intercontact times when contact and intercontact times are Pareto.} 
%In Figure~\ref{fig:nonneg_par_ppseudo} we show how the probability of observing pseudo-intercontact times varies varying the parameters $b$ and $\alpha$ of contact times (recall that this probability does not depend on intercontact times). We observe that the smaller $b$, the higher this probability. The opposite holds true for $\alpha$. Then, i
In Figure~\ref{fig:nonneg_ict_par_pmtr} we plot the measured intercontact times (simulations vs theoretical predictions) fixing the Pareto parameters of the contact times to their average values 
%(corresponding to the Mean column in Table~\ref{tab:fitting_pareto} of~\cite{biondi:what_you_lose:tr}) 
and varying the parameters $b$ and $\alpha$ of intercontact times. We observe that in all cases the predictions are very accurate. Note how, in the PMTR+Pareto case, contacts tend to be long and pseudo-intercontact times are observed often, as the big jump at $T-\tau = 80s$ shows.
 
%\begin{figure}[ht]
%\begin{center}
%\includegraphics[scale=0.7]{./}
%\caption{Probability of pseudo-intercontact times as a function of $\alpha,b$.}
%\label{fig:nonneg_par_ppseudo}
%\end{center}
%\end{figure}

\begin{figure}[ht]
\begin{center}
\includegraphics[scale=0.6]{./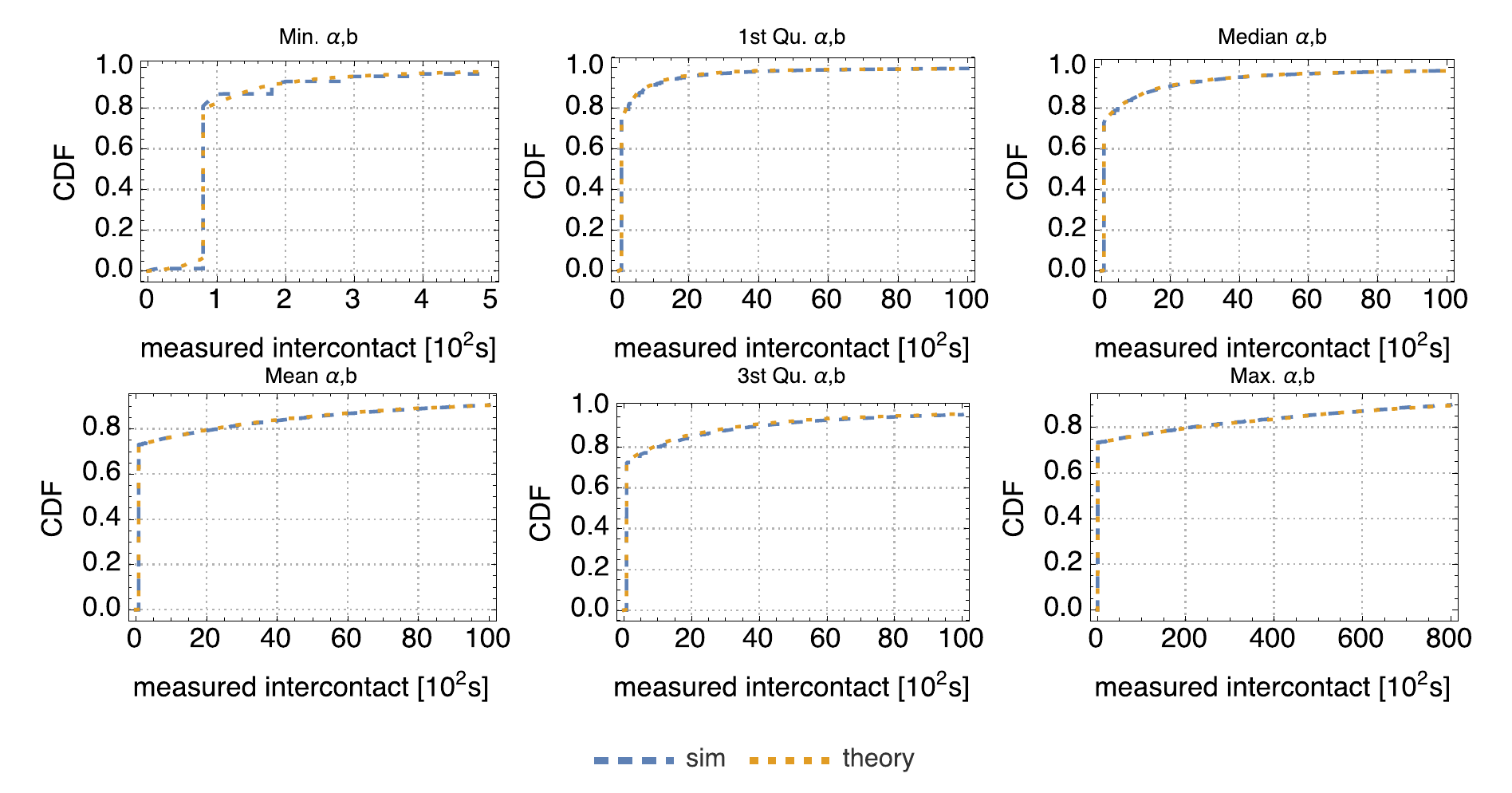}
\caption{Measured intercontact duration when real contact and intercontact times are Pareto (PMTR).}
\label{fig:nonneg_ict_par_pmtr}
\end{center}
\end{figure}

%-------------------------------------------------------------

\section{From deterministic to stochastic duty cycling policies}
\label{sec:exp_dc}

In this section we show that the fixed duty cycling model studied so far in the paper provides a good approximation also for non deterministic duty cycling. Although the method can be generalised, for the sake of example, we assume that individual stochastic duty cycles alternate between OFF and ON intervals whose lengths are both exponentially distributed with rate $\alpha$ and $\beta$, respectively. %Exponentially distributed duty cycles are  convenient to manipulate with Markov chains but the same results can be easily extended to non exponential duty cycles using semi-Markov chains~\cite{ross2006introduction}. Due to space limitation, we will only go through the exponential case here.
%, and we only consider the case of negligible contact duration. 

The first step in the analysis is the derivation of the \emph{joint} duty cycling (which is in the ON state when both nodes are ON, in the OFF state otherwise). \addCB{$T_{ON}$ and $T_{OFF}$ denote the length of ON and OFF phases in the joint duty cycle.} 
%Specifically, we investigate what is the length of the ON and OFF intervals in the joint duty cycle, which we denote as $T_{ON}$ and $T_{OFF}$. 
\addCB{To this aim, we model the states of our system using a Continuous Time Markov Chain (CTMC). We obtain (the detailed derivation can be found in}~\cite{biondi:what_you_lose:tr}\addCB{) that $T_{ON}$ is exponentially distributed with rate~$2\beta$, and that $T_{OFF}$ has the following first and second moments:}
\begin{equation}\label{eq:first_and_second_moment_OFF_joint_stoc}
\mathbb{E}[T_{OFF}] = \frac{2\alpha + \beta}{2\alpha^2}, \quad \mathbb{E}[T_{OFF}^2] = \frac{10 \alpha ^3+11 \alpha ^2 \beta +6 \alpha  \beta ^2+\beta ^3}{2 \alpha ^4 (\alpha +\beta )}.
\end{equation}
We assume that both $\alpha$ and $\beta$ %, which are the rates of the exponential random variables representing the duration of individual OFF and ON intervals, 
are strictly greater than zero\footnote{When equal to zero we have the two extreme cases of nodes either always ON or always OFF. In the first case, there is no need to study the effect of duty cycling, in the second case nodes are never able to detect each other.}. For $\beta$ approaching zero, the squared coefficient of variation of $T_{OFF}$ approaches 4. For $\alpha$ approaching zero, it approaches $1$. Hence, we can conclude that the duration of the OFF interval of the joint duty cycle ranges from a hyper-exponential behaviour to an exponential behaviour, depending on the values of $\alpha$ and $\beta$.

\addCB{The fixed joint duty cycling analysed in Sec.~\ref{sec:ict_negligible} can be considered as an approximation of the stochastic joint duty cycle studied in this section, setting the fixed duty cycling parameters $\tau$ and $T-\tau$ equal to the average ON and OFF durations of the joint duty cycle.}
%with the fixed duty cycling assumed to approximate the average behavior of the above CTMC.} 
%Specifically, the stochastic duty cycle with average time in ON $2 \beta$ and average time in OFF $\frac{2\alpha + \beta}{2\alpha^2}$ corresponds to a fixed duty cycle with $\tau=2 \beta$ and $T-\tau = \frac{2\alpha + \beta}{2\alpha^2}$.
%Hence, the prediction about the measured intercontact times in the fixed case are expected to provide a good approximation of the measured intercontact times with exponential duty cycling.
%
Figure~\ref{fig:exp_vs_fixed} validates this statement by comparing simulation results obtained with exponential duty cycling against the prediction obtained assuming the duty cycling is deterministic. Specifically, we consider $\tau=20s$ and $T=100s$ for the fixed joint duty cycling (as in Section~\ref{sec:ict_negligible_stilde_validation}), thus we obtain for stochastic individual duty cycling $\beta=0.025$ and $\alpha=0.02$ (with $cv^2=1.96$ for the joint OFF intervals). We assume that real intercontact times are exponentially distributed with rates $\lambda \in \{0.001, 0.01, 0.1, 10 \}$ and that contact duration is negligible. We sample measured intercontact times using Monte Carlo simulations. \addCB{We test both stochastic and fixed duty cycling, and we compare the results obtained against the theoretical predictions of Theorem~\ref{theo:nonneg_measured_ict}, showing (Figure~\ref{fig:exp_vs_fixed}) that the model with fixed duty cycling well approximate also cases where duty cycling is stochastic, which is popular in the literature.}

%, setting the fixed duty cycling parameters $\tau$ and $T-\tau$ equal to the average ON and OFF durations of the joint duty cycle (Figure~\ref{fig:exp_vs_fixed}). 
%In other words, we compare the theoretical predictions obtained assuming a fixed duty cycling against results from two types of simulations, one with stochastic duty cycling, one with the corresponding fixed duty cycling. This comparison is shown in Figure~\ref{fig:exp_vs_fixed}. %We observe that measured intercontact times obtained from simulations with stochastic and fixed duty cycling overlap. This implies that the fixed process is able to capture the behaviour of the stochastic one. The theoretical predictions are very accurate for both small and large $\lambda$ values.

%, but tend to worsen as $\lambda$ increases. The discrepancies of the theoretical models are not due to errors in approximating a stochastic duty cycling with its average behaviour. Instead these inaccuracies are due to the fact that our theoretical model is very good only as long as $\lambda T \ll 1$, which is true only when $\lambda=0.01$ in this case.

\begin{figure}[t]
\begin{center}
\includegraphics[scale=0.48]{./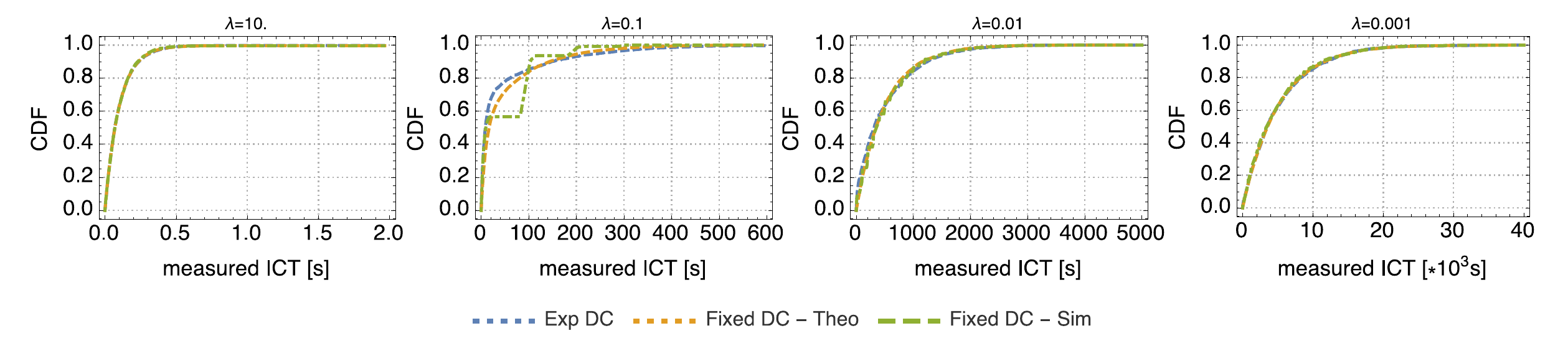}\vspace{-10pt}
\caption{Measured intercontact time with stochastic and deterministic duty cycles.}
\label{fig:exp_vs_fixed}\vspace{-10pt}
\end{center}
\end{figure}

%The above result is very important, as it shows that the fixed duty cycling case (which would require synchronisation between devices) is actually a good approximation also for the \emph{asynchronous} case in which the duty cycling is stochastic. This strengthens the contribution of our model, which can be used to find approximate solutions for the common case of stochastic duty cycling.

%\textcolor{blue}{Things we could add: a) it is easy to obtain the results also for the case in which the two nodes have different $\alpha$ and $\beta$ (CTMC with 4 states). In this case, we would have a different $\Delta$ for each pair of nodes. This is not a problem from the detected intercontact time standpoint, but it really complicates the analysis of the delay (it would be difficult to establish a clear dependence on the duty cycle). b) I would expected that the above approximation works in general, not only for exponential duty cycling.}

%-------------------------------------------------------------------------------
%\vspace{-10pt}
\section{Related work}
\label{sec:relwork}

%\color{blue}
Ad hoc communications in opportunistic networks traditionally use either the WiFi or Bluetooth interfaces, which can consume a significant fraction of the smartphone's battery depending on the their current state (idle, scanning, or connected)~\cite{friedman2013power}. For this reason, duty cycling techniques have been introduced in order to save energy by putting devices into a low-power state whenever possible. With Bluetooth, this low-power state corresponds to the \emph{discoverable} state, which is entered by a device after a scanning phase (energy hungry) is concluded. Hence, duty cycling with Bluetooth implies striking the right balance between keeping as much as possible the devices in the discoverable state and not missing too many contacts. The situation is different with ad hoc WiFi,  which has no particularly energy-efficient state\footnote{For 802.11 cards used in \emph{infrastructure} mode energy consumption in the idle state has been drastically reduced by the introduction of the Power Saving Mode (PSM)~\cite{friedman2013power}. Unfortunately, PSM for ad hoc is typically not implemented in smartphones' 802.11 interfaces.}. In this case, the best power saving strategy is simply to switch off the network interface entirely. Recently, two innovative WiFi-based ad hoc communications modes, namely, WiFi Direct~and WLAN-Opp~\cite{trifunovic2011wifi}, have been proposed to address the problems of ad hoc communications in off-the-shelf smartphones. Unfortunately, their energy consumption is still quite high, in particular as far as neighbour discovery is concerned~\cite{trifunovic2014adaptive}. As discussed in~\cite{trifunovic2014adaptive}, reducing the scanning frequency remains the only viable power saving option for both WiFi Direct and WLAN-Opp. 
%\textcolor{blue}{\st{Once the devices are connected, energy consumption is mainly affected by the asymmetry in the device roles (i.e., the device acting as AP in WiFi mode or slave in Bluetooth mode consume much more energy). In order to address this problem, the authors of}~\cite{trifunovic2014adaptive}\st{ suggest a contact-duration-aware swap role algorithm in order to fairly distribute energy consumption between the communicating parties. Differently from the techniques that affect the discovery phase, this power saving strategy does not alter the measured contact process, as it simply redistributes the energy burden between two devices that are already in contact.}}

Abstracting the specific communication technology used and building upon the idea that neighbour discovery is an energy-expensive operation in general, the vast majority of papers dealing with power saving issues in opportunistic networks have focused on the contact probing phase. As seen above, reducing and optimising the probing frequency is equivalent to implementing a duty cycling policy in which nodes switch between low-power and high-power states, corresponding to OFF intervals (in which contacts are not detected) and ON intervals (in which contacts are detected), respectively.
Contact probing schemes can be classified into fixed, when the ON/OFF duration of the duty cycle is established at the beginning and never changed \cite{jun2005power,biondi14optimal}, or adaptive, when the frequency of probing is increased or decreased according to some policy \cite{choi2009adaptiveexponential}. Both fixed and adaptive strategies can be context-oblivious \cite{jun2005power}, if they do not exploit information on user past behaviour or position, or context-aware \cite{drula2007adaptive,wang2009opportunistic,bracciale2016sleepy} otherwise. 
Differently from the above contributions, in this work we do not aim at deriving an optimised power saving strategy for DTN. Instead, our goal is to understand, \emph{given a duty cycling strategy}, how the measured contact process between nodes is changed by this strategy. Below, we briefly contrast the most relevant related literature against our contribution.

The model we introduce in this paper is more general than the one discussed in \cite{zhou2016energy-efficiency,zhou2013energy} as it is not bound to the RWP model but it can be applied to any distribution for intercontact times. If the intercontact times distributions are Pareto or exponential -- also in heterogeneous cases where the parameters change across pairs of nodes -- our model can be solved with closed form expressions (note that  Pareto and exponential are the two most popular assumptions for contact and intercontact times in the related literature). Otherwise, for any other intercontact time distribution,  numerical solutions can be found. In addition, \cite{zhou2016energy-efficiency,zhou2013energy} only consider a fixed probing every $T$ seconds as their duty cycling strategy. Instead, we generalise the duty cycling process, by considering it composed of two phases (the ON and OFF phases) and also covering the non-fixed duration case.
\cite{qin2011contact} evaluate only how link duration (or contact duration, in our terminology) is affected by the contact probing interval. Instead, we investigate the effect of duty cycling (which, as already discussed, can be easily translated into a contact probing problem) \emph{both} on measured contact duration (i.e., link duration) and measured intercontact time, acknowledging that both components have a huge impact on opportunistic communications (on network capacity and message delay, respectively). Also, as already discussed, despite its simplicity, our duty cycling function with ON/OFF states allows for more flexibility than the simple scanning every $T$ seconds performed in~\cite{qin2011contact}. 
The effective link duration (equivalent to our measured contact duration) is also derived in~\cite{kouyoumdjieva2016impact}, assuming that nodes wake up every $T$ seconds and remain active for a configurable random amount of time. In this work, the authors implicitly discard the correlations between consecutive contacts (for which we have provided a thorough discussion in~\ref{sec:ict_negligible_preliminaries}), do not provide closed-form results for particularly relevant case, and do not investigate nor validate in detail the model (because the focus of the work is more on the energy-goodput trade-offs than on the effects of duty cycling on the contact process). In addition, measured intercontact times are not studied in~\cite{kouyoumdjieva2016impact}, despite their importance. 
Another set of works that share similarities with our proposal are~\cite{zhou2013exploiting-contact,zhou2012energy-saving}. Analogously to~\cite{kouyoumdjieva2016impact}, their focus is more on striking the right balance between energy consumption and forwarding performance rather than on the complete characterisation of the measured contact process. \cite{zhou2013exploiting-contact,zhou2012energy-saving}~assume the same kind of duty cycling process with ON/OFF periods that we study in this paper, while for the contact process they assume Pareto contact duration and exponential intercontact times. Under these assumptions, \cite{zhou2013exploiting-contact} derive that the exponential intercontact times are altered by duty cycling in such a way that their rate is scaled by a factor that they call contact probability. This result is analogous to our result in Lemma~\ref{lemma:stilde_neg_exp}. However, this result only holds for contact duration negligible with respect to $T$, and the model in \cite{zhou2013exploiting-contact,zhou2012energy-saving} is not able to address what happens when contact duration is instead not negligible. Our model is able to provide a complete characterisation also for this case. In addition, \cite{zhou2013exploiting-contact,zhou2012energy-saving}~do not provide an expression for the distribution of the measured contact and intercontact times under generic distributions of  contact and intercontact times. Instead, we address this case and provide a general technique, based on the slowly varying approximation, for handling distributions that are not memoryless.
Differently from the works discussed so far, in which the wake up schedules of nodes are fixed, \cite{zhou2014adaptive-working} and \cite{gao2013wakeup-scheduling} propose techniques to adaptively schedule the wake-up and sleep states of nodes. To this aim, and differently from our work,  \cite{zhou2014adaptive-working} do not characterise the contact process using a probability distribution but instead rely directly on the history of past encounters. Therefore, our model is more general, as it can represent in a mathematical form the contact process, and the effect on it of duty cycling. \cite{gao2013wakeup-scheduling}~do not provide a complete characterisation of the impact of duty cycling on the measured contact process either, but focus on probabilistically predicting the next contact. This probability is then used to design their adaptive wake-up schedule. \cite{gao2013wakeup-scheduling}~assume that intercontact times are exponential, and also neglect contact duration. Instead, we additionally consider the Pareto case for intercontact times and we include the effect of contact duration in the model.

Based on the above review, we can conclude that our contribution represents the first comprehensive analysis of how the measured contact process is altered by power saving techniques, both in terms of the effects on the measured contact duration and on the measured intercontact times.
This work is an extension of our previous work in~\cite{biondi2014duty}, where we had focused on the negligible contact case with exponential real intercontact times only, and we had studied how their distribution was affected by the duty cycling policy. In~\cite{biondi2014duty} we have used a complex model, which was not suitable to be solved with distributions different from the exponential. \deleted{Specifically, closed-form solutions could not be found when intercontact times were not exponential, and also numerical solution took a lot of time to be obtained.} The main outcome of the model in~\cite{biondi2014duty} is what we have here summarised in Lemma~\ref{lemma:stilde_neg_exp}.

\color{black}
\vspace{-10pt}
\section{Conclusions}
\label{sec:conclusion}

Power saving mechanisms reduce the forwarding opportunities and the capacity of an opportunistic networks, but this effect has not been yet quantified in a general setting in the related literature. To fill this gap, in this work we have investigated the effects of deterministic duty cycling on contact and intercontact times in opportunistic networks. Specifically, we have proposed two models for characterising the measured contact process (i.e., the contact process after duty cycling has been factored in) between pairs of nodes. These models have been extensively validated, and have been shown to provide very good approximations even when the assumptions under which they have been derived do not hold exactly. 

The first model can be used when the contact duration for the pair of nodes is negligible with respect to the length of the ON and OFF intervals of the duty cycle. With this model we can derive the first two moments of the measured intercontact times for any distribution of  original intercontact times. Exploiting this model, we have discovered that if the original intercontact times are exponential, then the measured intercontact times are also exponential but with a different rate. If original intercontact times are Pareto, measured intercontact times remain Pareto with the same exponent in the tail but, overall, they do not feature a well-known distribution. More in general, we have shown that the measured intercontact times can flip their ``behaviour'' depending on the duty cycle value and on the distribution of the original intercontact times. Specifically, hyper-exponential measured intercontact times can appear even when the original intercontact times are hypo-exponential, and vice versa.

The second model, which is more complex but also more realistic, should be used when contact duration is not negligible. With this second model, we are able to derive the distribution of the measured contact duration, considering both the case in which nodes keep their scheduled duty cycle upon a new encounter and the case in which they do not. In the first case, a measured contact cannot last longer than an ON interval. Since contact duration determines the amount of data that can be transferred, the capacity of the opportunistic network can be significantly affected by duty cycling. Vice versa, in the second case, only a small portion of the contact is missed, hence the capacity can be preserved. We have also derived the measured intercontact times, highlighting the fact that they have two components: one conceptually very similar to the measured intercontact time with negligible contact duration, one very different. We called the second component pseudo-intercontact time, as it is due to long contacts that are split into many shorter contact and intercontact times by the duty cycle.

Finally, we have generalised our results, showing that a deterministic duty cycle can be assumed to be a good approximation of a stochastic duty cycle with the same average duration for the ON/OFF intervals. Building upon this finding, the two models presented in the paper can be used to derive the measured contact and intercontact times under any distribution for the contact process and under general duty cycling strategies (deterministic/stochastic, synchronous/asynchronous). 

\vspace{-5pt}
%The predictions of the two proposed models are generally very accurate. Their precision depends on the relationship between the length of ON/OFF intervals of the duty cycle and the parameters characterising the distribution of original contact and intercontact times. Finally, we have shown that the results obtained with deterministic duty cycling are very accurate in predicting the measured contact process when a stochastic duty cycling with the same average ON/OFF durations is deployed.
%
%As future work, we plan to use the results derived in this paper to characterise the delay in an opportunistic network with duty cycling, and to optimise the choice of $\tau$ and $T$ based on a desired target performance, along the lines of~\cite{biondi14optimal}. More in general, the results presented in this paper can be used with the modelling frameworks available in the literature, in order to assess the performance of protocols for opportunistic networks when a power saving mechanism is in place.

% use section* for acknowledgement
%\vspace{-5pt}
%\section{Acknowledgments}
%
%This work was partially funded by the European Commission under the MOTO (FP7 317959), and EIT ICT Labs MOSES (Business Plan 2015) projects.

\bibliographystyle{abbrv}
\bibliography{tompecs.bib}

\end{document}